\DeclareMathAlphabet\mathbfcal{OMS}{cmsy}{b}{n}
\newcommand{\BEAS}{\begin{eqnarray*}}
\newcommand{\EEAS}{\end{eqnarray*}}
\newcommand{\BEA}{\begin{eqnarray}}
\newcommand{\EEA}{\end{eqnarray}}
\newcommand{\BEQ}{\begin{equation}}
\newcommand{\EEQ}{\end{equation}}
\newcommand{\BIT}{\begin{itemize}}
\newcommand{\EIT}{\end{itemize}}
\newcommand{\BNUM}{\begin{enumerate}}
\newcommand{\ENUM}{\end{enumerate}}
\newcommand{\BA}{\begin{array}}
\newcommand{\EA}{\end{array}}
\newcommand{\diag}{\mathop{\rm diag}}
\newcommand{\Diag}{\mathop{\rm Diag}}
\newcommand{\tr}{\mathop{ \rm tr}}
\newcommand{\idm}{I}
\newcommand{\rb}{\mathbb{R}}
\newcommand{\ds}{\displaystyle }
\newcommand{\BlackBox}{\rule{1.5ex}{1.5ex}}  
\newenvironment{proof}{\par\noindent{\bf Proof\ }}{\hfill\BlackBox\\[2mm]}
\newtheorem{lemma}{Lemma}
\newtheorem{proposition}{Proposition}
\newcommand{\mysec}[1]{Section~\ref{sec:#1}}
\newcommand{\eq}[1]{Eq.~(\ref{eq:#1})}
\newcommand{\myfig}[1]{Figure~\ref{fig:#1}}
 \def \ds { \displaystyle}
\def \E{{\mathbb E}}
\def \P{{\mathbb P}}
\def \Z{{\mathbb Z}}
\def \H{{\mathcal H}}
\def \X{{\mathcal X}}
\def \Y{{\mathcal Y}}
\def \A{{\mathcal A}}
 \def \P{{\mathbb P}}
\title{Information Theory with Kernel Methods}
\author{Francis Bach \\
Inria,  Ecole Normale Sup\'erieure \\
PSL Research University \\
\url{francis.bach@inria.fr}}
\date{\today}
\begin{document}
\maketitle

\begin{abstract}

We consider the analysis of probability distributions through their associated covariance operators  from reproducing kernel Hilbert spaces. We show that  the von Neumann entropy and relative entropy of these operators are intimately related to the usual notions of Shannon entropy and relative entropy, and share many of their properties.
They come together with efficient estimation algorithms from various oracles on the probability distributions. We also consider  product spaces and show that for tensor product kernels, we can define notions of mutual information and joint entropies, which can then characterize independence perfectly, but only partially conditional independence. We finally 
show how these new notions of relative entropy lead to new upper-bounds on  log partition functions, that can be used together with convex optimization within variational inference methods, providing a new family of probabilistic inference methods. 

\end{abstract}

\section{Introduction}

 Characterizing and studying probability distributions through moments has a long history. For distributions supported in a vector space, only an infinite number of moments can characterize the distribution. Beyond traditional polynomial moments, kernel methods based on reproducing kernel Hilbert spaces (RKHS)~\cite{scholkopf-smola-book,berlinet2011reproducing} have emerged as a natural tool for studying the interactions between moments and other properties of the underlying distributions (such as independence or conditional independence), as well as providing algorithmic tools to estimate these moments from data.

A natural way of using kernel methods is to consider a feature map $\varphi: \X \to \H$ from the underlying space~$\X$ and a specific Hilbert space $\H$, and to consider the \emph{mean element}~\cite{sriperumbudur2010hilbert}, also referred to as the \emph{mean embedding}, for a probability distribution~$p$, defined as 
\BEQ
\label{eq:mean}
\ds \mu_p = \int_\X \varphi(x) dp(x),
\EEQ
which is an element of the Hilbert space $\H$. The key idea behind kernel methods is to study properties of~$\H$ and its interaction with $\X$ only through the kernel function $k:\X \times \X \to \rb$ defined as
$$
k(x,y) = \langle \varphi(x),\varphi(y) \rangle.
$$
Under classical universality conditions on the kernel function $k$~\cite{sriperumbudur2010hilbert,micchelli2006universal}, the mapping $p \mapsto \mu_p$ is injective (and thus characterizes the distribution $p$), and can be estimated from independent and identically distributed (i.i.d.) samples efficiently, with convergence rates proportional to $1/\sqrt{n}$ where $n$ is the number of observations.

Representing a probability distribution $p$ through $\mu_p \in \H$ defines explicitly a metric between distributions as $(p,q) \mapsto \| \mu_p - \mu_q\|$ (for the $\H$-norm), which has been extensively used in machine learning and data science~\cite{muandet2017kernel},
{in particular for measuring the dependence between two random variables, leading to algorithms for independent component analysis~\cite{cardoso2003dependence,bach2002kernel} and for feature selection~\cite{song2012feature}, or as a model fitting criterion when estimating parameters of probability distributions~\cite{binkowski2018demystifying}.}

 A classical drawback of the underlying geometry is the lack of straightforward connection with classical information theory tools. For example for discrete data, this leads to Euclidean norms between probability mass functions, which is rarely seen as an appropriate geometry for the simplex.

In this paper, we consider the second-order moment, which we call the \emph{covariance operator}\footnote{Note that covariance operators are typically ``centered'', that is, $\mu_p   \mu_p^\ast$ is subtracted.}
 $$  \Sigma_p = \int_\X \varphi(x)\varphi(x)^\ast dp(x),
$$
 which is an  {operator} from $\H$ to $\H$, defined through $ \ds \langle f,  \Sigma_p g \rangle
 = \int \langle f, \varphi(x) \rangle  \langle g, \varphi(x) \rangle  dp(x)
$, 
 self-adjoint and positive semi-definite.\footnote{In this paper, we use the notation $uu^\ast$ for an element $u \in \H$ to denote the operator $uu^\ast: \H \to \H$ such that $uu^\ast (f) = \langle f, u \rangle u$. The usual notation $u \otimes u$ will be used for tensor products in \mysec{multivariate}.}
 
As we show in this paper, it shares similar nice properties with the mean element in terms of universality, estimation from finite samples, and general applicability to all sets $\X$ where positive definite kernels can be defined (e.g., structured discrete objects). Furthermore, owing to the many tools from quantum information theory, and in particular the von Neumann entropy $\tr \big[ \Sigma_p \log \Sigma_p \big]$ and its associated divergence (relative entropy), we define these notions and demonstrate their many nice properties and applications, as well as explicit quantitative links with  regular Shannon entropies (in fact, we can see the regular notion of entropy as the limit of our kernel entropies when the bandwidth of the kernel goes to zero, and for discrete data, they are equivalent).

The paper is organized as follows:
 \BIT
\item We review in \mysec{operators} properties of covariance operators and reproducing kernel Hilbert spaces.

\item In \mysec{quantum}, we review some of the key mathematical results on quantum relative entropies, that we will leverage in further sections.

\item In \mysec{entropy}, we define our notions of kernel entropy and Kullback-Leibler divergence (relative entropy), and give their main properties, which are shared with the classical notions, such as convexity, or additivity for independent variables. In particular our notions of relative entropies will always be lower bounds on the Shannon relative entropy (in this paper, we use interchangeably the terms ``relative entropy'' and ``Kullback-Leibler (KL) divergence'').
\item In \mysec{estimation}, we show how we can leverage existing work in kernel methods to estimate these quantities from finite samples, with a convergence rate proportional to $1/\sqrt{n}$ from $n$ observations, and without the need for regularization. We also propose statistically more efficient estimators if integrals of kernel functions  under $p$ can be computed.

\item In \mysec{multivariate}, we consider product spaces $\X = \X_1 \times \cdots \times \X_d$, and show that for tensor product kernels, we can define notions of mutual information and joint entropies. We can then characterize independence perfectly, but only partially conditional independence, and extend the submodularity property of the regular entropy.

\item In \mysec{duality}, we show how these new notions of relative entropy lead to new upper-bounds on  log partition functions, that can be used together with convex optimization within variational inference methods~\cite{wainwright2008graphical}, providing a new family of probabilistic inference methods. Illustrative applications to $[0,1]$ and $\{-1,1\}^d$ are explicitly developed.

\EIT

\section{Covariance operators}
\label{sec:operators}

In this section, we review the notions of covariance operators, starting with reproducing kernel Hilbert spaces. For more details, see~\cite{berlinet2011reproducing,scholkopf-smola-book,shawe2004kernel}.

\subsection{Kernels and RKHSs}

In this paper we consider  a compact set $\X$  and probability distributions on $\X$  {(which we assume to be Borel probability measures}), as well as a kernel function $k: \X \times \X \to \rb$, such that
\BIT
\item[\textbf{(A1)}] $k$ is a continuous positive definite kernel on the compact set $\X$, with $k(x,x)=1$ for all $x \in \X$.
\EIT
Positive definite kernel functions are functions for which all matrices of pairwise kernel evaluations are positive semi-definite. They are known to  define a Hilbert space $\H$ (with dot product $\langle \cdot, \cdot \rangle$) of functions $f: \X \to \rb$, called a reproducing kernel Hilbert space (RKHS), as well as a map $\varphi: \X \to \H$ such that, for all $x,y \in \X$, 
$$ \mbox{ (a) }   \  f(x) = \langle f, \varphi(x) \rangle, \ \ \ \mbox{ (b)  }  \ k(\cdot,x) = \varphi(x), \ \ \mbox{ and (c) } \  k(x,y) = \langle \varphi(x), \varphi(y) \rangle.$$ 
A useful representation of functions in $\H$ is through linear combinations of kernel functions, that is, {for any $\alpha \in \rb^n$ and arbitrary elements $x_1,\dots,x_n$ of $\X$, the function $f = \sum_{i=1}^n \alpha_i k(\cdot,x_i)$ is in $\H$ and its norm is equal to
$\sum_{i,j=1}^n \alpha_i \alpha_j k(x_i,x_j)$. It turns out that all functions in $\H$ can be generated as appropriate limits of such finite linear combinations.}

Note that from $k(x,x)=1$, we get $\| \varphi(x) \|=1$ for all $x \in \X$. {Note that most of the developments in this paper do not require further structure on the set $\X$, except when characterizing regularity of density functions is needed in \mysec{asymptotics}, where will assume the existence of a distance.}

\paragraph{Universal and characteristic kernels.} We note that our covariance operator $\Sigma_p$ happens to be a mean embedding for the feature map $x\mapsto \varphi(x)   \varphi(x)^*$, from $\X$ to the space of operators from $\H$ to $\H$. Equipped with the Hilbert-Schmidt norm, we obtain a Hilbert space which happens to be isomorphic\footnote{We can build the simple mapping $M\mapsto \langle M, \varphi(x)   \varphi(x)^\ast \rangle $ obtained from
 $f   g^\ast \mapsto \langle f   g^\ast, \varphi(x)   \varphi(x)^\ast \rangle = f(x) g(x)$.}  to the RKHS obtained from the kernel $(x,y) \mapsto k(x,y)^2$.  
 
 Therefore, injectivity of the map $p \mapsto \Sigma_p$ can be obtained from sufficient conditions for injectivity of the mean element map, a property for the kernel referred to as \emph{characteristic}~\cite{sriperumbudur2008injective}. As shown in~\cite[Prop. 5]{fukumizu2009kernel}, this is equivalent for the associated RKHS to be dense in $L_2(q)$ for all probability measures $q$ on $\X$, a sufficient condition being here (because of our compactness assumption) that the RKHS is dense in the set of continuous functions equipped with the uniform norm, a property referred to as \emph{universality}~\cite{steinwart2001influence}. Note that if \textbf{(A1)} is satisfied, then the universality of $k$ implies the universality of $k^2$.

We will see in \mysec{examples} below that many kernels are indeed universal, and thus the covariance operator will be characteristic of the corresponding probability distribution.

\subsection{Covariance operators}

Given a probability distribution $p$ on $\X$, we consider the operator $\Sigma_p: \H 
\to \H $, defined as
\BEQ
\label{eq:cov}
\Sigma_p = \int_{\X} \varphi(x) \varphi(x)^\ast  dp(x).
\EEQ
For any $f,g \in \H$, we have, by definition:
$\ds \langle g, \Sigma_p f \rangle = \E_{X\sim p}\big[  \langle f,\varphi(X)\rangle \langle g,\varphi(X)\rangle \big] 
=  \E_{X\sim p}\big[   f(X) g(X) \big] $.

\begin{proposition}[Properties of covariance operators]
Assume \textbf{(A1)}.
\BIT
\item[(a)] For any probability distribution $p$ on $\X$, the operator $\Sigma_p: \H \to \H$ defined in \eq{cov}, is self-adjoint, positive-semidefinite and has unit trace. 
\item[(b)] If the kernel $k^2$ is universal, then 
the mapping $p \mapsto \Sigma_p$ is injective from probability distributions to self-adjoint positive-semidefinite, unit trace operators.
\item[(c)] {If $p$ has full support in $\X$, then the operator $\Sigma_p$ has a trivial null space.}
\item[(d)] Given the mean element defined in \eq{mean}, we have $\Sigma_p \succcurlyeq \mu_p \otimes \mu_p$.
\EIT
\end{proposition}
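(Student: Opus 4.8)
The plan is to handle the four items essentially independently, since each reduces to the identity $\langle g, \Sigma_p f\rangle = \E_{X\sim p}[f(X)g(X)]$ recorded just before the statement. For \textbf{(a)}, I would first observe that \textbf{(A1)} gives $\|\varphi(x)\|^2 = k(x,x) = 1$ and that $x\mapsto \varphi(x)$ is continuous (from $\|\varphi(x)-\varphi(y)\|^2 = 2 - 2k(x,y)$ and continuity of $k$), so $x\mapsto \varphi(x)\varphi(x)^\ast$ is a continuous, norm-one map from the compact set $\X$ into the Hilbert space of Hilbert--Schmidt operators; hence the Bochner integral \eq{cov} exists and $\Sigma_p$ is Hilbert--Schmidt with $\|\Sigma_p\|_{\rm op}\le \int \|\varphi(x)\|^2\,dp(x) = 1$. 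Self-adjointness and positive semi-definiteness then follow from $\langle g,\Sigma_p f\rangle = \E[f(X)g(X)]$, which is symmetric in $(f,g)$ and equals $\E[f(X)^2]\ge 0$ when $g=f$. For the unit trace I would expand $\tr \Sigma_p = \sum_n \langle e_n,\Sigma_p e_n\rangle$ in an orthonormal basis $(e_n)$ of $\H$ and use Tonelli's theorem (all summands nonnegative) to swap the sum and the integral, obtaining $\int_\X \sum_n \langle e_n,\varphi(x)\rangle^2\,dp(x) = \int_\X \|\varphi(x)\|^2 dp(x) = \int_\X k(x,x)\,dp(x) = 1$.

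Item \textbf{(b)} is, as noted in the paragraph on universal and characteristic kernels, exactly the statement that $p\mapsto\Sigma_p$ is the mean embedding for the feature map $x\mapsto\varphi(x)\varphi(x)^\ast$, whose reproducing kernel is $(x,y)\mapsto k(x,y)^2$; universality of $k^2$ implies it is characteristic, so this mean embedding map is injective, and by \textbf{(a)} its range sits inside the self-adjoint, positive-semidefinite, unit-trace operators. For \textbf{(c)}, I would argue by contradiction: if $\Sigma_p f = 0$ then $0 = \langle f,\Sigma_p f\rangle = \int_\X f(x)^2\,dp(x)$; since $f = \langle f,\varphi(\cdot)\rangle$ is continuous and $f^2$ is nonnegative, vanishing of this integral against a measure of full support on $\X$ forces $f(x)=0$ for all $x\in\X$, i.e.\ $\langle f,\varphi(x)\rangle = 0$ for all $x$, and because $\overline{\mathrm{span}}\{\varphi(x):x\in\X\} = \H$ this gives $f = 0$ in $\H$.

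Finally, \textbf{(d)} is a one-line application of Jensen's inequality: for every $f\in\H$, $\langle f,\Sigma_p f\rangle = \E[f(X)^2] \ge \big(\E[f(X)]\big)^2 = \langle f,\mu_p\rangle^2 = \langle f, (\mu_p\mu_p^\ast) f\rangle$, so $\Sigma_p - \mu_p\mu_p^\ast \succcurlyeq 0$ (this is the operator written $\mu_p\otimes\mu_p$ in the statement). The only genuinely delicate points are the measure-theoretic ones in \textbf{(a)} — existence of the operator-valued Bochner integral and the legitimacy of interchanging $\tr$ and $\int$ — together with the standard RKHS fact used in \textbf{(c)} that the closed linear span of $\{\varphi(x)\}_{x\in\X}$ is all of $\H$; everything else is routine.
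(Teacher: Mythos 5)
Your proposal is correct, and for (a), (c) and (d) it is essentially the paper's own argument (the paper is simply terser: it takes the well-definedness and trace computation in (a) for granted, concludes (c) from continuity of elements of $\H$ plus full support, and gets (d) from non-negativity of the variance of $f(X)$). For (b) you invoke the fact, already recorded in the paper's remark on universal and characteristic kernels, that $\Sigma_p$ is the mean embedding for the feature map $x\mapsto\varphi(x)\varphi(x)^\ast$ with kernel $k^2$ and that universality implies the characteristic (injectivity) property, whereas the paper re-derives this in place by testing $\Sigma_p=\Sigma_q$ against $\varphi(y)$ to obtain $\int_\X k(x,y)^2\,[dp(x)-dq(x)]=0$ and then using universality of $k^2$; the substance is the same.
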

\begin{proof}
(a) The operator is well defined and has unit trace because $\tr [ \varphi(x)   \varphi(x)^\ast ] = \| \varphi(x)\|^2 =1$ for all $x \in \X$. 

(b) If $\Sigma_p = \Sigma_q$, then for all $y \in \X$, $\int_\X k(x,y)^2[ dp(x)-dq(x)]=0$.  The representation of functions in the RKHS associated with $k^2$ as linear functions of $x \mapsto k(x,y)^2$, and the universality of $k^2$ implies that 
 $\int_\X f(x) [ dp(x)-dq(x)]=0$ for all continuous functions $f$, hence, $p=q$ and the
injectivity of the map.

Property (c) is a simple consequence of $\ds \langle g, \Sigma_p g \rangle = \E_{X\sim p}\big[ g(X)^2 \big]$, {which is valid for all $g \in \H$. If this expectation is equal to zero, and since elements of $\H$ are continuous functions, we must have $g=0$ as soon as $p$ has full support.}

Property (d) is a consequence of the non-negativity of the variance of $f(X)$, for $f$ in $\H$.
\end{proof}

Since the operators are trace-class, they have a discrete spectrum~\cite{reed1980functional,brezis80analyse}, composed of a summable sequence of strictly positive eigenvalues tending to zero, and potentially zero (if $p$ does not have full support in $\X$). The spectral decay is a key quantity for the study of kernel methods~(see~\cite{de2021regularization} and references therein), as used in \mysec{estimation} to estimate entropies from data.

{Throughout this paper, we will consider \emph{spectral functions} of trace-class self-adjoint operators $A$, such as $\log A$, $\exp(A)$, or more generally $f(A)$, for a function $f: \rb \to \rb$. These are defined through the eigenvalue decomposition by keeping eigenvectors and replacing eigenvalues $\lambda$ by $f(\lambda)$ (for polynomials, we then recover the usual matrix polynomials).}

\paragraph{Uniform distribution and symmetric sets. }
\label{sec:symm}

{In order to define the notion of kernel entropy from kernel relative entropy, we will need to use a specific ``base'' distribution $\tau$, that will play the role of the uniform distribution when this notion is classically defined (e.g., $\X$ is a finite set or a subset of $\rb^d$, like in all examples in \mysec{examples}).}

For thi base distribution $\tau$ on $\X$, we use the notation (instead of $\Sigma_\tau$): $$\ds \Sigma = \int_{\X} \varphi(x) \varphi(x)^\ast d\tau(x).$$

In this paper, some concepts will be easier for \emph{symmetric sets}, that is, sets where there exists a set $\mathcal{T}$ of bijective transformations $t:\X \to \X$, such that for all $x,y \in \X$, $\exists t \in \mathcal{T}$ such that $t(x)=y$. {We will then assume that the kernel $k$ and the base distribution $\tau$ are compatible with this notion of symmetry, that is, $\tau(A) = \tau(t(A))$ for any Borel set $A$ and $t \in \mathcal{T}$, and $k(x,y) = k(t(x),t(y))$ for all transformations $t \in \mathcal{T}$.}

Then the covariance operator for the base distribution will also exhibit symmetries. More precisely, all eigensubspaces will be invariant by symmetry, that is, if $e_1,\dots,e_m \in \H$ is an orthonormal basis of an eigensubspace {corresponding to an eigenvalue $\lambda$}, then\footnote{This is a simple consequence of $(e_1 \circ t, \dots, e_m \circ t)$ being another orthonormal basis of the same eigensubspace and thus  equal to an orthogonal matrix times $(e_1, \dots, e_m)$, so that $\sum_{i=1}^m e_i e_i^\ast = \sum_{i=1}^m (e_i \circ t) (e_i \circ t)^\ast $, leading to $\langle \varphi(x),\sum_{i=1}^m e_i e_i^\ast \varphi(x)\rangle$ constant on $\X$, and thus implying the desired result.}  $\sum_{i=1}^m e_i(t(x))^2 = \sum_{i=1}^m e_i(x)^2$ for all transformations $t \in \mathcal{T}$ and $x \in \X$. {This implies that the function $x \mapsto \sum_{i=1}^m e_i(t(x))^2$ is constant, equal to its expectation under $\tau$ (which is itself equal to $m \lambda$).

As a consequence, for a spectral function $\Sigma \mapsto f(\Sigma)$ based on a real-valued function $f:\rb \to \rb$, the dot-product 
$\langle \varphi(x), f(\Sigma) \varphi(x) \rangle$ can be expanded as the sum of the contributions coming from all eigensubspaces, with the one associated to the eigenvalue $\lambda$ leading to  $f(\lambda) \sum_{i=1}^m e_i(x)^2$ being independent of $x$ (and equal to $ m \lambda f(\lambda) $). This leads notably to 
$\langle \varphi(x), f(\Sigma) \varphi(x) \rangle= \tr [ \Sigma f(\Sigma)]$ for all real-valued functions $f$ defined on~$\rb$, and to $\tr [ \Sigma f(\Sigma)] = \tr [ \Sigma_p f(\Sigma)]$ for any probability distributions $p$.}

\paragraph{Surjectivity.} By definition, every covariance operator is in the closure  $\mathcal{M}$ of the \emph{convex hull} of all $\varphi(x)   \varphi(x)^\ast$, for $x \in \X$. We will also need the closure  $\mathcal{A}$ of the \emph{span} of these vectors. By definition, the mapping $p \mapsto \Sigma_p$ is surjective from probability distributions to $\mathcal{M}$. Moreover, if the kernel $k$ is universal, then\footnote{For the non trivial inclusion, we consider $A = \int_\X   \varphi(x) \varphi(x)^\ast dp(x) $ with $\int_\X dp(x) = 1$, such that $A \succcurlyeq 0$. We have for all $f \in \H$, $\int_\X f(x)^2 dp(x) \geqslant 0$. If $k$ is universal, then all positive continuous functions $g$ are limits of functions $f^2$, so that $\int_\X g(x) dp(x) \geqslant 0$, which implies that $p$ is a probability measure. }
\BEQ
\mathcal{M} = \big\{ A \in \mathcal{A}, \ A \succcurlyeq 0, \ \tr A = 1 \big\}.
\EEQ
For orthonormal embeddings, where $k(x,y)=0$ if $x \neq y$, $\mathcal{M}$ is (in the appropriate basis) the set of non-negative diagonal operators, while $\mathcal{A}$ is the set of diagonal operators (in the same basis).

Note that in most cases $\mathcal{A}$ is strictly included in the set of trace class operators from $\H$ to $\H$, which is equivalent to the existence of bounded self-adjoint operators $M: \H \to \H$ such that $\forall x \in \X, \langle \varphi(x), M \varphi(x) \rangle = 0$. Since $\| \varphi(x)\|^2 = 1$ for all $x \in \X$, this is equivalent to the existence of a positive bounded operator $M$ so that 
$\forall x \in \X, \ \langle \varphi(x), M \varphi(x) \rangle = 1$, which, with an eigenvalue decomposition leads to a finite or countable family of functions $(f_i)_{i \in I}$ in  $\H$ such that $\sum_{i \in I } f_i(x)^2 = 1$, that is, a ``partition of unity'', which are well known to exist in many spaces (see, e.g.,~\cite[Chapter 3]{berger2012differential} or~\cite[Section 1.4]{hormander1984analysis}).

\subsection{Examples}
\label{sec:examples}

\paragraph{Finite set.} If $\X = \{ x_1,\dots,x_m\}$ is a finite set of cardinality $m$, then the Gram matrix $K \in \rb^{m \times m}$ defined as $K_{ij} = \langle \varphi(x_i),\varphi(x_j) \rangle$ completely characterizes the kernel notions up to rotation. If $K = \idm$ (orthonormal embedding), then there exist $m$ orthonormal vectors $u_1,\dots,u_m$ such that $\Sigma_p = \sum_{i=1}^m \P(X=i) u_i   u_i^\ast$, and thus, all covariance matrices commute and therefore share the same eigenbasis, and we will recover exactly the regular notions of discrete entropy.

We get a symmetric set if $K$ is invariant by permutations, that is, $K = \alpha \idm + (1-\alpha) 11^\top$, and as soon as $K$ is invertible, a universal kernel.\footnote{Note that $K$ invertible implies that the Hadamard (i.e., pointwise) product $K \circ K$ is invertible, but not vice-versa.}

\paragraph{Polynomial kernels.}
When $\X$ is a subset of $\rb^d$, then we can consider kernels of the form $k(x,y) = ( \alpha + \beta \, x^\top y)^r$, which correspond to $\varphi(x)$ composed of all monomials of order up to $r$, and the covariance operator is thus composed of traditional moments. Here, we can   have $k(x,x)=1$ for all $x \in \X$ only if $\X$ is included in a centered sphere. Alternatively, we may only impose   $k(x,x) \leqslant 1$ for all $x \in \X$, which still preserves many of our results in later sections.

\paragraph{Translation-invariant kernels on a compact subset of $\rb^d$.} These are the usual kernels $k(x,y)$ of the form $k(x,y) = q(x-y)$ where $q$ is a function with strictly positive Fourier transform, which are all universal~\cite{steinwart2001influence}. Classical examples are the Gaussian kernels $k(x,y) = \exp( - \| x - y\|_2^2 / \sigma^2)$, where $\H$ is a space of infinitely differentiable functions, or the exponential kernel $k(x,y) = \exp( - \| x - y\|_2 / \sigma)$, where $\H$ is the Sobolev space of functions with square-integrable derivatives up to order $(d+1)/2$.
However, they typically do not respect the symmetry of the underlying set (for example for spheres, like below).

\paragraph{Torus.} If $\X = [0,1]$, we can consider kernels of the form $k(x,y) = k(x-y )$ where $k: \rb \to \rb$ is $1$-periodic and has a strictly positive Fourier series (note that we use the same notation for the kernel and the underlying function). We then get a universal kernel and a symmetric set (invariant by periodic translation); {moreover, the base distribution will be the usual uniform distribution}. For example, we will often consider:
$$
k(x) = \frac{ ( 1- e^{-\sigma})^2}{1+e^{-2\sigma} - 2 e^{-\sigma} \cos 2\pi x } = \frac{1}{1 + \frac{\sin^2 \pi x}{\sinh^2 (\sigma/2)}},
$$
for which the Fourier series is  $\ds \hat{k}(\omega) =  \int_0^1 e^{-2i\pi \omega x} k(x) dx = \tanh ( \frac{\sigma}{2} )  e^{-\sigma |\omega|}$ for $\omega \in \Z$.

This extends naturally to $[0,1]^d$, by considering tensor products, that is, $k(x,y) = \prod_{i=1}^d k(x_i-y_i)$, which for example leads to $\hat{k}(\omega) = \tanh^d (\frac{\sigma}{2} ) e^{-\sigma \|\omega\|_1}$ for $\omega \in \Z^d$ in the example above. We will also need an explicit formula for $\sum_{\omega \in \Z} \hat{k}(\omega) \log \hat{k}(\omega) = \log \tanh \frac{\sigma}{2} - \frac{\sigma}{\sinh \sigma}$.

\paragraph{Spheres.} We can also consider $\X$ as the unit $\ell_2$-sphere in dimension $d$, with kernels that are invariant by rotation, leading also to symmetric sets. Kernels are then defined through their expansion in spherical harmonics~\cite{smola2001regularization}, with links with neural networks~\cite{bach2017breaking}.

We could also imagine extensions to Stiefel or Grassman manifolds~\cite{edelman1998geometry}, and of course to all sets $\X$ where positive definite kernels can be defined (such as graphs, trees, permutations, sets, etc.)~\cite{shawe2004kernel}.

\section{Review of quantum information theoretic quantities}

\label{sec:quantum}

We consider here complex-valued Hilbert spaces, but all results will apply to operators on real Hilbert spaces. In quantum mechanics (see, e.g.,~\cite{isham2001lectures}), physical states are represented as elements of a Hilbert space~$\H$, and a physical system is characterized by a convex combination of orthogonal projections on these states, which is often called a density matrix. Our  covariance operators defined as $\Sigma_p = \int_{\X} \varphi(x) \varphi(x)^* dp(x)$ can thus be seen as ``density operators''. It turns out that in quantum information theory~(see, e.g.,~\cite{wilde2013quantum}), information-theoretic quantities are defined based on these density operators. In this section, we review these notions from a mathematical point of view.

 \paragraph{Entropy.} For a compact positive self-adjoint operator $A$ on the complex Hilbert space $\H$ with finite trace, we can define the negative entropy as~\cite{araki2002entropy}:
$$\tr \big[ A \log A \big] = \sum_{\lambda \in \Lambda(A)} \lambda \log \lambda, $$
where $\Lambda(A)$ is the set of eigenvalues of $A$ (with the convention that $0 \log 0 = 0$). It may not be finite if the sequence of eigenvalues is not decreasing sufficiently fast (note that this is unlikely since they are summable). As defined,  $\tr [ A \log A ]$ is always less than $(\tr A) \log (\tr A)$ and can be equal to $-\infty$.

\paragraph{Kullback-Leibler (KL) divergence (relative entropy).}
It is defined as 
$$D(A||B) = \tr \big[ A (\log A - \log B) \big],$$
 for any two positive Hermitian operators with finite trace. It can only be finite if the null space of $B$ is included in the null space of $A$, that is, if $A = B^{1/2} M B^{1/2}$ for a certain bounded operator $M$, and we then have $\tr A - \tr B \leqslant D(A\| B) \leqslant ( \tr A ) \log \| M\|_{\rm op}$.

The quantity $D(A\| B) - \tr A + \tr B$ is always non-negative as the Bregman divergence~\cite{bregman1967relaxation} associated with {the convex function} $A \mapsto \tr \big[ A \log A \big]$, and may not always be finite. The following properties are classical and are proved from first principles in Appendix~\ref{app:monot}.

\begin{proposition}[Properties of quantum entropy and relative entropy]
\label{prop:quantum}
For Hermitian positive operators $A$, $B$, we have:
\BIT
\item[(a)] The mappings $A \mapsto \tr [ A \log A]$ and $(A,B) \mapsto D(A||B)$ are convex.
\item[(b)] We have $D(A\|B) \geqslant \tr A - \tr B$, with equality if and only if $A=B$.
\item[(c)]  If $A_i,B_i$, $i=1,\dots,n,$ are Hermitian operators and $\lambda_i >0$ such that $\sum_{i=1}^n \lambda_i = 1$, then
$$
 D \Big( \sum_{i=1}^n \lambda_i A_i \Big\| \sum_{i=1}^n \lambda_i B_i \Big)
  \leqslant \sum_{i=1}^n \lambda_i D(A_i \| B_i),
$$
with equality if and only if $\log B_i - \log A_i$ does not depend on $i$. Moreover, if there is equality, there exists an operator $M$ such that for all $i$,  $A_i = B_i M$.
\item[(d)]  Monotonicity of quantum operations: Given operators $C_i: \H \to \mathcal{K}$ and $A, B: \H \to \H$,  $i=1,\dots,n$, such that $\sum_{i=1}^n C_i^\ast C_i = \idm$, then
  $$
  D \Big( \sum_{i=1}^n C_i A C_i^\ast \Big\|
  \sum_{i=1}^n C_i B C_i^\ast \Big)  \leqslant D(A\|B).
  $$
  \EIT
\end{proposition}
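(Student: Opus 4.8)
The plan is to prove the four items in the order (b), then (a), then (c), then (d): each uses the previous ones, and the only genuinely hard ingredient — Lieb's concavity theorem — is needed only for (c). Item (b) is the operator Klein inequality, which I would prove from scratch: diagonalise $A=\sum_i a_i e_i e_i^\ast$ and $B=\sum_j b_j f_j f_j^\ast$, restricting to the range of $B$ (which contains that of $A$ whenever $D(A\|B)<\infty$, since then $A=B^{1/2}MB^{1/2}$). With $p_{ij}=|\langle e_i,f_j\rangle|^2$, one has $\sum_j p_{ij}=1$, $\sum_i p_{ij}\leqslant 1$, and $D(A\|B)=\sum_i a_i\big(\log a_i-\sum_j p_{ij}\log b_j\big)$. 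Concavity of $\log$ gives $\sum_j p_{ij}\log b_j\leqslant\log\tilde b_i$ with $\tilde b_i=\sum_j p_{ij}b_j$, and the scalar bound $x\log(x/y)\geqslant x-y$ then yields $D(A\|B)\geqslant\sum_i(a_i-\tilde b_i)\geqslant\tr A-\tr B$, with equality forcing, for each $i$ with $a_i>0$, that $b_j$ be constant on $\{j:p_{ij}>0\}$ and $a_i=\tilde b_i$, which one upgrades to $A=B$. For (a): substituting $B=e^H$ into (b) gives $\tr e^H\geqslant\tr A-\tr[A\log A]+\tr[AH]$ for every Hermitian $H$, hence $\tr[A\log A]-\tr A=\sup_H\big(\tr[AH]-\tr e^H\big)$; as a supremum of functions affine in $A$ this is convex, and $A\mapsto D(A\|B)$ is then convex for fixed $B$ since it differs from it by a linear term (the joint convexity also asserted in (a) is exactly (c)).

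For (c) I would invoke Lieb's concavity theorem: for $0\leqslant s\leqslant 1$, the map $(A,B)\mapsto\tr[A^sB^{1-s}]$ is jointly concave on positive operators, which I would prove by applying the Hadamard three-lines theorem to the analytic function $z\mapsto\tr[A^zB^{1-z}]$ on the strip $0\leqslant\mathrm{Re}\,z\leqslant 1$ (or cite it as classical). Then $h_s(A,B):=\big(\tr A-\tr[A^sB^{1-s}]\big)/(1-s)$ is jointly convex for each $s\in(0,1)$; since $\tr[A^sB^{1-s}]\to\tr A$ as $s\to 1^-$ and $\frac{d}{ds}\tr[A^sB^{1-s}]\big|_{s=1}=\tr[A\log A]-\tr[A\log B]$, we get $D(A\|B)=\lim_{s\to 1^-}h_s(A,B)$, a pointwise limit of jointly convex functions, hence jointly convex; specialising to $A=\sum_i\lambda_iA_i$, $B=\sum_i\lambda_iB_i$ gives the displayed inequality. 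The equality statement would be obtained by a finer analysis — either by tracking the equality case of Klein's inequality applied to the direct sums $\bigoplus_i\lambda_iA_i$, $\bigoplus_i\lambda_iB_i$ against their ``collapsed'' counterparts, or via the sufficiency/recoverability characterisation of equality in monotonicity — extracting that $\log B_i-\log A_i$ is independent of $i$ and that some $M$ satisfies $A_i=B_iM$.

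Finally, (d) follows from (c) by a dilation argument. Define the isometry $V:\H\to\mathcal{K}\otimes\C^n$ by $Vf=\sum_{i=1}^n(C_if)\otimes e_i$; then $V^\ast V=\sum_iC_i^\ast C_i=\idm$, relative entropy is invariant under conjugation by $V$ (it leaves the nonzero spectrum and the relevant spectral logarithms unchanged), and $\sum_iC_iXC_i^\ast$ is precisely the partial trace of $VXV^\ast$ over the $\C^n$ factor. It therefore suffices to show that relative entropy does not increase under a partial trace, which follows from joint convexity by a twirl: averaging $VAV^\ast$ and $VBV^\ast$ over the $n^2$ Heisenberg--Weyl unitaries on the $\C^n$ factor yields $\big(\tr_{\C^n}VAV^\ast\big)\otimes(\idm/n)$ and its analogue for $B$, so joint convexity together with unitary invariance of $D$ and additivity of $D$ under tensoring with a common factor gives $D\big(\tr_{\C^n}VAV^\ast\,\big\|\,\tr_{\C^n}VBV^\ast\big)\leqslant D(VAV^\ast\|VBV^\ast)=D(A\|B)$.

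The one genuinely hard step is Lieb's concavity theorem — equivalently, the joint convexity of $D$ underlying both (a) and (c) and, through the reduction above, the monotonicity (d); everything else is soft. Beyond that, the points needing care are the infinite-dimensional, trace-class subtleties (convergence of the spectral sums defining $\tr[A\log A]$ and $D(A\|B)$, the meaning of $\log$ for operators with nontrivial kernel, and passing to the limit under the trace in the $s\to 1$ argument) together with the extra bookkeeping required for the equality statements in (b) and (c).
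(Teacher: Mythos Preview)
Your overall strategy is sound and reaches the same conclusions, but it differs substantially from the paper's route on the key item (c) (and hence on the full ``joint convexity'' part of (a)). The paper does \emph{not} go through Lieb's concavity and a limit $s\to 1^-$. Instead it uses the Lesniewski--Ruskai integral representation: writing $D(A\|B)=\langle A^{1/2},-\log(\mathcal{L}_B\mathcal{R}_A^{-1})A^{1/2}\rangle$ with the left/right multiplication operators $\mathcal{L}_B$, $\mathcal{R}_A$, and expanding $-\log$ as an operator-convex function, one gets
\[
D(A\|B)=\tr A-\tr B+\int_0^\infty\big\langle B-A,\ (\mathcal{L}_B+\lambda\,\mathcal{R}_A)^{-1}(B-A)\big\rangle\,\frac{d\lambda}{(1+\lambda)^2},
\]
and joint convexity follows from the convexity of $(u,M)\mapsto\langle u,M^{-1}u\rangle$. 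This is more self-contained (no Lieb needed), and it immediately extends to all operator-convex $f$-divergences, which the paper exploits later. For the equality case in (c) the paper again avoids vague ``recoverability'' arguments: it writes $\sum_i\lambda_iD(A_i\|B_i)-D(A\|B)$ as a sum of nonnegative Bregman divergences $B\big(A_i\,\|\,\exp(\log A-\log B+\log B_i)\big)$ plus a term that vanishes by concavity of $C\mapsto\tr\exp(K+\log C)$, so equality forces $A_i=\exp(\log A-\log B+\log B_i)$ for all $i$, giving $\log B_i-\log A_i$ independent of $i$ (and $A_iB_i^{-1}=AB^{-1}$ from the integral representation). Your sketch of (b) and of (d) is close to the paper's: for (d) the paper also dilates and twirls, but with the $n$ diagonal phase unitaries $U_k=\Diag(\omega^{k},\dots,\omega^{kn})\otimes\idm$ (plus one extra use of joint convexity) rather than the full $n^2$ Heisenberg--Weyl group; the content is the same.

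One genuine gap to flag: the claim that Lieb's concavity $(A,B)\mapsto\tr[A^sB^{1-s}]$ can be proved ``by applying the Hadamard three-lines theorem to $z\mapsto\tr[A^zB^{1-z}]$'' is not correct as stated. Three-lines gives log-convexity bounds on $|f(z)|$ along vertical lines, which yields interpolation inequalities of Araki--Lieb--Thirring type, but it does not produce joint concavity in the \emph{pair} $(A,B)$. Lieb's theorem needs a different argument (e.g., Epstein's Herglotz-function proof, Ando's operator-monotone approach, or the $(u,M)\mapsto\langle u,M^{-1}u\rangle$ route the paper takes). Since you allow ``or cite it as classical'' this is not fatal, but the parenthetical proof sketch should be dropped. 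Also, your treatment of the equality case in (c) is only a plan; the paper's Bregman-divergence identity gives it in one line and is worth adopting.
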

The last property is classical in quantum information theory as the mapping $A \mapsto  \sum_{i=1}^n C_i A C_i^\ast $ is usually referred to as a ``quantum operation''. Note that quantum operations are usually defined as ``completely positive'' trace-preserving maps, and that owing to  Stinespring representation theorem (see \cite[Theorem 3.1.2]{bhatia2009positive}), they are essentially all of that form.

One particular application of monotonicity is to consider $n$  positive self-adjoint operators $D_i: \H \to \H$, $i=1,\dots,n$, such that $\sum_{i=1}^n D_i = \idm$, and consider $\mu,\nu \in \rb^n$ defined through $\mu_i = \tr (D_i A)$ and $\nu_i = \tr (D_i B)$ for $A,B$ a positive self-adjoint operator with unit trace, then $\mu$ and $\nu$ are on the simplex, and the monotonicity leads to {(see detailed proof in Appendix~\ref{app:monotonicity})}:
\BEQ
\label{eq:munu}
\sum_{i=1}^n \mu_i \log \frac{\mu_i}{\nu_i} \leqslant D(A\| B).
\EEQ

\paragraph{Further results.} In Appendix~\ref{app:jointcvx}, we provide finer results beyond expectation with respect to a probability measure with finite support, as well as equality cases.

 \paragraph{Integral representation.} For the KL divergence, we have, by direct integration~\cite{ando1979concavity}:
\BEQ
\label{eq:intKL}
D(A \| B) = \tr \big[ A (\log A - \log B) \big]   =    - \int_0^{+\infty} \Big( \tr \big[ A ( A + \lambda \idm)^{-1} \big] - \tr \big[ A ( B + \lambda \idm)^{-1} \big] \Big) d\lambda,
\EEQ
which will be used in \mysec{analysisDOF} for the estimation from a finite sample.
As shown in Appendix~\ref{app:monot}, we also have another integral representation formula, that helps in proving additional convexity results.

\section{Kernel entropy and Kullback-Leibler divergence}
\label{sec:entropy}

We define the kernel Kullback-Leibler (KL) divergence, or kernel relative entropy, between two probability distribution $p$ and $q$ as
\BEQ
\label{eq:KL}
D(\Sigma_p\| \Sigma_q) = \tr \big[ \Sigma_p ( \log \Sigma_p - \log \Sigma_q) \big].
\EEQ
{In this section, we list some of its properties, which are similar to the properties of  $(p,q) \mapsto \| \mu_p - \mu_q\|$, but with the added benefit that is has a direct link with Shannon entropy, which is useful in itself when a precise information measure is needed (e.g., for differential privacy~\cite{dwork2006calibrating}), or when used within variational inference (see \mysec{duality}).}

First, the kernel relative entropy is finite under general conditions.
\begin{proposition}[Finiteness of kernel KL divergence]
Assume \textbf{(A1)}. If $p$ is absolutely continuous with respect to $q$, with $\big\| \frac{dp}{dq}\big\|_{\infty}  \leqslant \alpha $, then both the regular and kernel KL divergences are between zero and $ \log \alpha$.
\end{proposition}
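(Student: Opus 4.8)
The plan is to handle the two divergences separately, using the boundedness of the Radon--Nikodym derivative $\frac{dp}{dq} \leqslant \alpha$ and the hypothesis that $q$ is a probability measure. For the classical Shannon relative entropy this is elementary: writing $h = \frac{dp}{dq}$, we have $D(p\|q) = \int_\X h \log h \, dq$; since $0 \leqslant h \leqslant \alpha$ and $\log$ is increasing, each integrand satisfies $h \log h \leqslant h \log \alpha$, and integrating gives $D(p\|q) \leqslant \log\alpha \cdot \int h \, dq = \log\alpha$. Non-negativity of $D(p\|q)$ is the standard Gibbs inequality. This disposes of the ``regular'' half.

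For the kernel KL divergence, I would first express $\Sigma_p$ in terms of $\Sigma_q$ using the density. Since $dp = h\, dq$ with $0 \leqslant h \leqslant \alpha$, for all $f \in \H$ we have $\langle f, \Sigma_p f\rangle = \int f(x)^2 h(x)\, dq(x) \leqslant \alpha \int f(x)^2\, dq(x) = \alpha \langle f, \Sigma_q f\rangle$, so $\Sigma_p \preccurlyeq \alpha \Sigma_q$. In particular the null space of $\Sigma_q$ is contained in the null space of $\Sigma_p$, and we may write $\Sigma_p = \Sigma_q^{1/2} M \Sigma_q^{1/2}$ for a bounded self-adjoint operator $M$ (defined on the range of $\Sigma_q$ and extended by zero), with $0 \preccurlyeq M \preccurlyeq \alpha\, \idm$, hence $\|M\|_{\mathrm{op}} \leqslant \alpha$. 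Now invoke the bound stated for quantum relative entropy right after its definition in \mysec{quantum}: for positive Hermitian trace-class operators with $A = B^{1/2} M B^{1/2}$,
\[
\tr A - \tr B \;\leqslant\; D(A\|B) \;\leqslant\; (\tr A)\log\|M\|_{\mathrm{op}}.
\]
Applying this with $A = \Sigma_p$, $B = \Sigma_q$, both of unit trace by Proposition (a), gives $0 \leqslant D(\Sigma_p\|\Sigma_q) \leqslant \log\|M\|_{\mathrm{op}} \leqslant \log\alpha$. The lower bound $D(\Sigma_p\|\Sigma_q) \geqslant 0$ also follows directly from Proposition~\ref{prop:quantum}(b) since $\tr\Sigma_p = \tr\Sigma_q = 1$.

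The only genuinely delicate point is the operator-monotonicity step $\Sigma_p \preccurlyeq \alpha\Sigma_q \Rightarrow \Sigma_p = \Sigma_q^{1/2}M\Sigma_q^{1/2}$ with $\|M\|_{\mathrm{op}}\leqslant\alpha$ in the infinite-dimensional, non-invertible setting: one must check that $M$ is well-defined and bounded on $\overline{\mathrm{range}(\Sigma_q)}$ and that the resulting factorization reproduces $\Sigma_p$ exactly (using that $\ker\Sigma_q \subseteq \ker\Sigma_p$ from the operator inequality). This is a standard fact (it is exactly the hypothesis under which $D(A\|B)$ is finite, as recalled in the text), so I would cite that discussion rather than reprove it. Everything else is a direct substitution into already-established inequalities.
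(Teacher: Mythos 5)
Your proposal is correct and follows essentially the same route as the paper: the key step in both is $\Sigma_p \preccurlyeq \alpha \Sigma_q$, after which the bound $D(\Sigma_p\|\Sigma_q) \leqslant \log\alpha$ follows from operator monotonicity of the logarithm (the paper writes this inline via $\log(\alpha\Sigma_q) = \log\alpha\,\idm + \log\Sigma_q$, while you route it through the factorization $\Sigma_p = \Sigma_q^{1/2} M \Sigma_q^{1/2}$ and the bound $D(A\|B) \leqslant (\tr A)\log\|M\|_{\rm op}$ already stated in \mysec{quantum}, which amounts to the same fact), with non-negativity from Prop.~\ref{prop:quantum}(b) in both cases. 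Your explicit treatment of the Shannon half is a harmless addition the paper leaves implicit.
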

\begin{proof}
We have then $\Sigma_p  \preccurlyeq \alpha \Sigma_q$, leading to (note that we must have $\alpha \geqslant 1$):
$$
D(\Sigma_p\| \Sigma_q)  = \tr \big[ \Sigma_p ( \log \Sigma_p - \log \Sigma_q) \big]
\leqslant \tr \big[ \Sigma_p ( \log ( \alpha\Sigma_q) - \log \Sigma_q) \big] = (\log \alpha) \tr \Sigma_p = \log \alpha.
$$
The positivity follows from Prop.~\ref{prop:quantum}, property (b).
\end{proof}
 This implies that if $p$ has a bounded density with respect to the base measure $\tau$ on $\X$, then the relative entropy $D(\Sigma_p \| \Sigma)$ is well-defined, and always non-negative (recall that $\Sigma$ is the covariance operator for the base probability measure $\tau$ on $\X$).
 
We can then define the kernel entropy as
\BEA
\label{eq:H}
  H(\Sigma_p) & = &  - D( \Sigma_p \| \Sigma ) - \min_{x \in \X} \langle \varphi(x), ( \log \Sigma) \varphi(x) \rangle
  \\
\notag  & = &  - \tr \big[ \Sigma_p \log \Sigma_p \big] +  \tr \big[ \Sigma_p \log \Sigma \big]
  - \min_{x \in \X}\, \langle \varphi(x), ( \log \Sigma) \varphi(x) \rangle.
  \EEA
  Note that we add terms on top of $ -\tr \big[ \Sigma_p \log \Sigma_p \big] $, one linear term in $\Sigma_p$, and one constant term, to extend properties of the regular Shannon discrete entropy in all cases.  However, everything simplifies in some cases. Indeed, when there are some invariances that are shared by the kernel {and the base distribution~$\tau$} (that is, a symmetric set as defined in \mysec{symm}), {$x \mapsto  \langle \varphi(x), ( \log \Sigma) \varphi(x) \rangle$ is constant on $\X$, and thus equal to its expectation under $p$, which implies
$  \min_{x \in \X} \langle \varphi(x), ( \log \Sigma) \varphi(x) \rangle
   = \tr [ \Sigma_p \log \Sigma ]$, and then the definition above simplifies to  $ H(\Sigma_p) =  - \tr \big[ \Sigma_p \log \Sigma_p \big]$.
   }

 The kernel information quantities have natural properties mimicking the traditional quantities, as well as a bound by the regular KL divergence.
 
 \begin{proposition}[Properties of kernel entropy and relative entropy]
 \label{prop:propKL}
 Assume \textbf{(A1)}. 
 \BIT
 \item[(a)] The kernel KL divergence defined in \eq{KL} is non-negative, and equal to $0$ for $p=q$. If the kernel $k^2$ is universal, it is zero if and only if $p=q$.
 \item[(b)] The kernel entropy defined in \eq{H} is always between $0$ and $ - \min_{x \in \X} \langle \varphi(x), ( \log \Sigma) \varphi(x) \rangle$ (which is equal to $-\tr \Sigma \log \Sigma$ for symmetric sets). If $p = \tau$ (base measure), it is equal to the upper-bound, while for Dirac measures with a point mass at any minimizers of $x \mapsto \langle \varphi(x), ( \log \Sigma) \varphi(x) \rangle$, it is equal to $0$ (for symmetric sets, this is all of $\X$). If the kernel $k^2$ is universal, these conditions are also necessary.

 \item[(c)] The mapping $(p,q) \mapsto D(\Sigma_p\|\Sigma_q)$ is  convex, and thus the mapping $p \mapsto H(\Sigma_p)$ is concave.
 \item[(d)] For all probability measures $p$ and $q$, $D(\Sigma_p \| \Sigma_q) \leqslant D(p\|q)$. Moreover, the differential entropy {with respect to the base measure $\tau$, that is,} $-\int_\X \log \frac{dp}{d\tau}(x) dp(x)$, is less than $H(\Sigma_p) +  \min_{x \in \X} \langle \varphi(x), ( \log \Sigma) \varphi(x) \rangle$.
 
  \item[(e)] For all probability measures  $p$ and $q$, $D(\Sigma_p\| \Sigma_q) \geqslant \frac{1}{2} \| \Sigma_p - \Sigma_q \|_\ast^2 \geqslant \frac{1}{2} \| \Sigma_p - \Sigma_q \|_{\rm HS}^2 $, where $\| \cdot \|_\ast$ denotes the nuclear norm and $\| \cdot \|_{\rm HS}$ the Hilbert-Schmidt norm.
 \EIT
 \end{proposition}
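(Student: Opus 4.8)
The plan is to obtain parts (a)--(c) directly from Proposition~\ref{prop:quantum} together with the fact that $p\mapsto\Sigma_p$ is affine, to reduce (e) to the classical Pinsker inequality via a single measurement, and to concentrate the real effort on the Shannon bound (d). For (a): since $\tr\Sigma_p=\tr\Sigma_q=1$, part (b) of Proposition~\ref{prop:quantum} gives $D(\Sigma_p\|\Sigma_q)\geqslant\tr\Sigma_p-\tr\Sigma_q=0$, with equality iff $\Sigma_p=\Sigma_q$, which when $k^2$ is universal is equivalent to $p=q$ by the injectivity of $p\mapsto\Sigma_p$. For (b): write $H(\Sigma_p)=\big(-\tr[\Sigma_p\log\Sigma_p]\big)+\big(\E_{X\sim p}[\langle\varphi(X),(\log\Sigma)\varphi(X)\rangle]-\min_{x\in\X}\langle\varphi(x),(\log\Sigma)\varphi(x)\rangle\big)$, using $\tr[\Sigma_p\log\Sigma]=\E_{X\sim p}[\langle\varphi(X),(\log\Sigma)\varphi(X)\rangle]$; the first bracket is $-\sum_\lambda\lambda\log\lambda\geqslant0$ because the eigenvalues of $\Sigma_p$ lie in $[0,1]$, and the second bracket is an expectation minus a minimum, hence $\geqslant0$, so $H(\Sigma_p)\geqslant0$; equivalently $H(\Sigma_p)=-D(\Sigma_p\|\Sigma)-\min_x\langle\varphi(x),(\log\Sigma)\varphi(x)\rangle\leqslant-\min_x\langle\varphi(x),(\log\Sigma)\varphi(x)\rangle$ by (a), with equality exactly when $\Sigma_p=\Sigma$, i.e.\ $p=\tau$. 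Equality in the lower bound forces the first bracket to vanish, so $\Sigma_p$ is a rank-one projection; when $k^2$ is universal the feature map is injective and has no antipodal pair (otherwise every function in the RKHS of $k^2$ would take equal values at two points, contradicting density in $C(\X)$), forcing $p=\delta_{x_0}$, and then the second bracket vanishes iff $x_0$ minimizes $x\mapsto\langle\varphi(x),(\log\Sigma)\varphi(x)\rangle$ (all of $\X$ on symmetric sets, where this map is constant). For (c): $(p,q)\mapsto D(\Sigma_p\|\Sigma_q)$ is the composition of the jointly convex map of Proposition~\ref{prop:quantum}(a) with the affine map $(p,q)\mapsto(\Sigma_p,\Sigma_q)$, hence convex; specializing $q=\tau$ and negating shows $p\mapsto H(\Sigma_p)$ is concave.

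The heart of the proposition is (d). Assume $p\ll q$ (otherwise $D(p\|q)=+\infty$ and there is nothing to prove), and set $r=dp/dq$, so that $\Sigma_p=\int_\X r(x)\,\varphi(x)\varphi(x)^\ast\,dq(x)$ and $\Sigma_q=\int_\X\varphi(x)\varphi(x)^\ast\,dq(x)$. The key step is the \emph{continuous} analogue of the joint convexity of Proposition~\ref{prop:quantum}(c) --- averaging against the probability measure $q$ rather than a finitely supported one --- which is exactly the refinement provided in Appendix~\ref{app:jointcvx}; it yields
$$
D(\Sigma_p\|\Sigma_q)\ \leqslant\ \int_\X D\big(r(x)\,\varphi(x)\varphi(x)^\ast\,\big\|\,\varphi(x)\varphi(x)^\ast\big)\,dq(x).
$$
Since $\|\varphi(x)\|=1$, the operator $P_x=\varphi(x)\varphi(x)^\ast$ is a rank-one orthogonal projection, and a one-line spectral computation gives $D(r(x)P_x\|P_x)=r(x)\log r(x)$, whence $D(\Sigma_p\|\Sigma_q)\leqslant\int_\X r\log r\,dq=D(p\|q)$. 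Taking $q=\tau$ and using the definition of $H$ then gives $-\int_\X\log\frac{dp}{d\tau}\,dp=-D(p\|\tau)\leqslant-D(\Sigma_p\|\Sigma)=H(\Sigma_p)+\min_x\langle\varphi(x),(\log\Sigma)\varphi(x)\rangle$. This is the part I expect to be the main obstacle: one must make the continuous joint convexity rigorous --- controlling the integrability of $x\mapsto D(r(x)P_x\|P_x)$ and justifying pulling the relative entropy through the integral (via a Bregman/Jensen inequality, or a weak-$\ast$ limit of finite partitions of $\X$) --- after which the rank-one evaluation is trivial.

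Finally, for (e) the plan is a Pinsker-type reduction; we may assume $D(\Sigma_p\|\Sigma_q)<+\infty$. Let $\Delta=\Sigma_p-\Sigma_q$, a trace-class self-adjoint operator with $\tr\Delta=0$, and let $\Pi$ be the spectral projection of $\Delta$ onto its non-negative eigenspace. Applying \eq{munu} to the two-outcome measurement $D_1=\Pi$, $D_2=\idm-\Pi$, with $\mu_i=\tr(D_i\Sigma_p)$ and $\nu_i=\tr(D_i\Sigma_q)$ (two-point distributions on the simplex), gives $D(\Sigma_p\|\Sigma_q)\geqslant\mu_1\log\frac{\mu_1}{\nu_1}+\mu_2\log\frac{\mu_2}{\nu_2}$. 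The elementary two-point Pinsker inequality bounds the right-hand side below by $\tfrac12\big(|\mu_1-\nu_1|+|\mu_2-\nu_2|\big)^2$; moreover $\mu_1-\nu_1=\tr(\Pi\Delta)$ is the sum of the positive eigenvalues of $\Delta$, which equals $\tfrac12\|\Delta\|_\ast$ since $\tr\Delta=0$, so $|\mu_1-\nu_1|+|\mu_2-\nu_2|=\|\Delta\|_\ast=\|\Sigma_p-\Sigma_q\|_\ast$. Hence $D(\Sigma_p\|\Sigma_q)\geqslant\tfrac12\|\Sigma_p-\Sigma_q\|_\ast^2$, and the last inequality $\|\cdot\|_\ast\geqslant\|\cdot\|_{\rm HS}$ is just $\ell^1\geqslant\ell^2$ applied to the singular-value sequence. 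The only non-routine ingredient here is the scalar two-point Pinsker bound, which is a one-variable calculus check.
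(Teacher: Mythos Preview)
Your proof is correct and, for parts (a)--(d), follows essentially the same route as the paper. The only cosmetic difference in (d) is that you average against $q$ with $r=dp/dq$, whereas the paper averages against $p$ with $dq/dp$; your choice is arguably cleaner since $p\ll q$ is exactly the hypothesis under which $D(p\|q)<\infty$, while the paper's parametrization tacitly needs $q\ll p$ as well. Your worry about the continuous Jensen/joint-convexity step is addressed in Appendix~\ref{app:jointcvx}, which states the inequality for random operators.

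For (e) there is a genuine, if small, difference: the paper simply invokes the quantum Pinsker inequality from the literature, whereas you derive it from tools already present in the paper---the measurement bound \eq{munu} applied to the two-outcome POVM $\{\Pi,\idm-\Pi\}$ with $\Pi$ the spectral projection onto the non-negative part of $\Sigma_p-\Sigma_q$, followed by the elementary two-point Pinsker inequality. This is a standard proof of quantum Pinsker, and your version has the advantage of being self-contained within the paper's framework; the paper's citation is shorter but opaque.
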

 \begin{proof}
 Property (a) is the consequence of the non-negativity of the relative entropy when the two arguments have the same trace (here equal to $1$), and  injectivity of the map $p \mapsto \Sigma_p$ when $k^2$ is universal.
  
 For property (b), the upper-bound results follow similarly. For the non-negativity, we notice that $H(\Sigma_p)$ defined in \eq{H} is the sum of two non-negative terms $-\tr [\Sigma_p \log \Sigma_p]$ (negative because all eigenvalues of $\Sigma_p$ are less than one) and $\tr \big[ \Sigma_p \log \Sigma \big]
  - \min_{x \in \X} \langle \varphi(x), ( \log \Sigma) \varphi(x) \rangle$. It is equal to zero if and only the two terms are equal to zero. For the first term, this is equivalent to $\Sigma_p$ being rank one, while for the second one, this is equivalent to $p$ being supported in the set of minimizers. The sufficient condition for being zero thus follows; for the necessary condition, if $k^2$ is universal, $k(x,x')^2 < 1$  if $x \neq x'$, and thus $\Sigma_p$ being rank one implies that $p$ is a Dirac measure.

 Property (c) is a consequence of the convexity of the map $(A,B) \mapsto \tr \big[ A (\log A - \log B) \big] $ {and of the linearity of the mapping $p \mapsto \Sigma_p$}.  Property (d) is obtained from
\BEAS
 D(\Sigma_p \| \Sigma_q)
& = & D \Big( \int_\X \varphi(x)  \varphi(x)^\ast dp(x) \Big\|  
\int_\X \frac{dq}{dp}(x) \varphi(x)   \varphi(x)^\ast dp(x) \Big) \\
& \leqslant & \int_\X
D \Big(
\varphi(x)  \varphi(x)^\ast \Big\|
 \frac{dq}{dp}(x) \varphi(x)   \varphi(x)^\ast
\Big)
dp(x) 
\\
& = &  \int_\X
\| \varphi(x)\|^2 D \Big(1  \Big\|
 \frac{dq}{dp}(x)  
\Big)
dp(x)  = \int_\X \log \Big( \frac{ dp}{dq}(x) \Big) dp(x) = D(p\|q),
 \EEAS
 using joint convexity of the quantum relative entropy, and the fact that $\| \varphi(x)\|=1$ for all $x \in \X$. Note that we can only have equality if for all $x \neq y$, $k(x,y) = 0$, that is, we have an orthonormal embedding, which is only possible for discrete $\X$. Note also that the upper-bound remains valid as soon as $k(x,x) \leqslant 1$ for all $x \in \X$.
 
 Property (e) is simply a consequence of Pinsker inequality for the quantum relative entropy (see~\cite{yu2013strong}).
  \end{proof}

{Note that while the relative entropy results do not depend on the choice of the base measure, the notion of kernel entropy depends on this choice   (however, for the symmetric sets presented in \mysec{examples}, the natural choice of the base measure is the uniform measure).}

As detailed in Appendix~\ref{app:div}, the properties outlined in Prop.~\ref{prop:propKL}, in particular the lower-bound on the relative entropy, is valid for other ``$f$-divergences''~\cite{csiszar1967information,topsoe2000some}, such as the squared Hellinger distance, or the Pearson $\chi^2$-divergence.

   \subsection{Lower bound on kernel relative entropy}
   \label{sec:lower}
Beyond the upper-bound by the Shannon relative entropy (property (d) in Prop.~\ref{prop:propKL}), we can also get a lower bound on the kernel KL divergence, by defining for all $y \in \X$,
 $D(y) =  \Sigma^{-1/2} \big( \varphi(y)   \varphi(y)^\ast  \big) \Sigma^{-1/2}$, which is a positive self-adjoint operator, so that
  $\int_\X D(y) d\tau (y) = \idm$. We can then apply the monotonicity of the relative entropy with operators $D(y)^{1/2}$, $y \in \X$, that is, \eq{munu} extended to general expectations.
  We have 
  $$
  \tr [ D(y) \Sigma_p ] = \int_\X  \langle \varphi(x), \Sigma^{-1/2} \varphi(y) \rangle^2 dp(x)
   = \int_\X h(x,y) dp(x),
  $$
   defining the function $h: \X \times \X \to \rb_+$ as $h(x,y) =  \langle \varphi(x), \Sigma^{-1/2} \varphi(y) \rangle^2$, which is non negative and such that
  $$\ds \forall y \in \X, \  \int_\X h(x,y) d\tau(x) = 
  \langle \varphi(y), \Sigma^{-1/2} \Big( \int_\X \varphi(x)   \varphi(x) ^\ast d\tau (x)\Big) \Sigma^{-1/2} \varphi(y) \rangle = \langle \varphi(y), \varphi(y) \rangle = 1.$$
  The function $h$ can thus be seen as a smoothing kernel\footnote{Note here that the smoothing property of $h$ corresponds to the pointwise non-negativity, but that $h$ is also a positive definite kernel.}. 
  Thus, {by monotonicity of the relative entropy (Appendix~\ref{app:monotonicity})}, we get:
  $$
  D(\tilde{p} \| \tilde{q} ) \leqslant D(\Sigma_p \| \Sigma_q),
  $$
  where $\ds \tilde{p}(y) = \tr [ D(y) \Sigma_p ]  =  \int_\X h(x,y) dp(x)$ is a smoothed version of $p$ (with the same definition for $\tilde{q}$).
  Note that we have directly $  D(\tilde{p} \| \tilde{q} ) \leqslant D(p\| q)$ {from standard Markov chain arguments in information theory}\footnote{{The distributions $\tilde{p}$ and $\tilde{q}$ are obtained by the same transition kernel, respectively from the distributions $p$ and $q$.}}~\cite[Section 2.9]{cover1999elements}. Overall, we have the following sequence of inequalities:
  \BEQ
  \label{eq:sandwitch}
   D(\tilde{p} \| \tilde{q} ) \leqslant D(\Sigma_p \| \Sigma_q) \leqslant D(p\|q).
  \EEQ
These can lead to quantitative bounds between $D(\Sigma_p \| \Sigma_q)$ and $ D(p\|q)$, in particular when the smoothing function $h$ is putting most of its mass on pairs $(x,y)$ where $x$ is close to $y$. This is made explicit below for  distributions on a metric space and distributions with Lipschitz-continuous densities.
  
  \subsection{Small-width asymptotics for metric spaces}
  \label{sec:asymptotics}
  
 We now provide a bound between $ D(\tilde{p} \| \tilde{q} ) $ and $  D( {p} \|  {q} ) $ in \eq{sandwitch}, and thus between $D(\Sigma_p \| \Sigma_q)$ and  $  D( {p} \|  {q} ) $, {when the set $\X$ is equipped with a distance $d$, used to characterize the regularity of density functions.}\footnote{{This section is not crucial to understand the rest of the paper.}}
  
  This corresponds to bounding the difference in the classical data processing inequality for the Shannon entropy, using regularity properties of the densities of $p$ and $q$. The following proposition is shown in Appendix~\ref{app:asymptotics}.
  
  \begin{proposition}[Joint bound on relative entropy]
  \label{prop:asymptotics}
We assume that $p, q$ have strictly positive Lipschitz-continuous densities (denoted $p$ and $q$) with respect to the base measure $\tau$.
  Using definitions from \mysec{lower},  we have:
\BEQ
   0 \leqslant D(p\|q) - D(\tilde{p} \| \tilde{q} ) 
   \leqslant
   E(p,q) \times  
        \sup_{x \in \X} \int_\X h(x,y) d(x,y)^2 dy, \label{eq:entbound}
       \EEQ
       with $E(p,q) =  2 C_{p/q}^2 
       \sup_{x \in \X} \frac{q(x)}{p(x)} +  4 \sup_{x \in \X} \frac{q(x)}{p(x)} ( C_p + C_q)^2  
       \sup_{x \in \X} \frac{p(x)}{q(x)} (1 + C_p^2 {\rm diam}(\X)^2 )$, where $C_p, C_q$ and $C_{p/q}$ are Lipschitz constants satisfying
         for all $x,y \in \X$, $\big| \frac{q(x)}{q(y)} - 1 \big| \leqslant d(x,y) C_q$, 
   $\big| \frac{p(x)}{p(y)} - 1 \big| \leqslant d(x,y) C_p$, and
    $\big| \frac{p(y)}{q(y)} -  \frac{p(x)}{q(x)}  \big| \leqslant d(x,y) C_{p/q}$ for some distance $d$ on $\X$. 
  \end{proposition}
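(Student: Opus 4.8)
The plan is to first dispose of the left inequality $D(\tilde p\|\tilde q)\le D(p\|q)$, which is exactly the data-processing inequality for the transition kernel $u\mapsto\int_\X h(x,\cdot)\,u(x)\,d\tau(x)$ sending $p,q$ to $\tilde p,\tilde q$, and then to bound the gap quantitatively. For the latter I would form two joint distributions $P$ and $Q$ on $\X\times\X$ having respective first marginals $p$ and $q$ (with respect to $\tau$) and, in both cases, the \emph{same} conditional law of the second coordinate given the first, namely $x\mapsto h(x,\cdot)\,d\tau$; this is a genuine probability kernel since $h\ge 0$, $h(x,y)=h(y,x)$ and $\int_\X h(\cdot,y)\,d\tau=1$ (hence $\int_\X h(x,\cdot)\,d\tau=1$ by symmetry). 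Then $P$ and $Q$ have second marginals $\tilde p$ and $\tilde q$. Writing the chain rule for the relative entropy in the two possible orders --- conditioning first on the first coordinate, where $P$ and $Q$ coincide so the conditional term vanishes, and then on the second --- gives the identity
\[
D(p\|q)-D(\tilde p\|\tilde q)\ =\ \int_\X \tilde p(y)\, D\big(P_{X|Y=y}\,\|\,Q_{X|Y=y}\big)\, d\tau(y),
\]
so it only remains to bound each conditional relative entropy.

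Fix $y$. By Bayes' rule $P_{X|Y=y}$ has $\tau$-density $x\mapsto h(x,y)p(x)/\tilde p(y)$ and $Q_{X|Y=y}$ has $\tau$-density $x\mapsto h(x,y)q(x)/\tilde q(y)$, so their likelihood ratio is $r(x)/\bar r(y)$ with $r:=p/q$ and $\bar r(y):=\tilde p(y)/\tilde q(y)$; the point is that $\bar r(y)$ is precisely the mean of $r$ under $Q_{X|Y=y}$. Using $D(\mu\|\nu)\le\chi^2(\mu\|\nu)=\int\big(\tfrac{d\mu}{d\nu}-1\big)^2 d\nu$, the $\chi^2$-divergence between these conditionals equals $\bar r(y)^{-2}$ times the variance of $r$ under $Q_{X|Y=y}$; bounding that variance by the second moment about the particular value $r(y)$ and invoking the Lipschitz bound $|r(x)-r(y)|\le d(x,y)\,C_{p/q}$ yields
\[
D\big(P_{X|Y=y}\,\|\,Q_{X|Y=y}\big)\ \le\ \frac{C_{p/q}^{2}}{\bar r(y)^{2}\,\tilde q(y)}\int_\X d(x,y)^{2}\,h(x,y)\,q(x)\, d\tau(x).
\]

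Finally I would assemble. Multiplying by $\tilde p(y)$ collapses the prefactor, since $\tilde p(y)/(\bar r(y)^{2}\tilde q(y))=\tilde q(y)/\tilde p(y)$, and because $\tilde q(y)=\int_\X h(x,y)\,\tfrac{q(x)}{p(x)}\,p(x)\, d\tau(x)\le\big(\sup_{x}q(x)/p(x)\big)\tilde p(y)$ this ratio is at most $\sup_{x}q(x)/p(x)$. Integrating over $y$, applying Fubini, bounding the inner integral $\int_\X d(x,y)^{2} h(x,y)\, d\tau(y)$ uniformly in $x$, and using $\int_\X q\, d\tau=1$, I obtain
\[
D(p\|q)-D(\tilde p\|\tilde q)\ \le\ C_{p/q}^{2}\,\Big(\sup_{x\in\X}\tfrac{q(x)}{p(x)}\Big)\,\sup_{x\in\X}\int_\X h(x,y)\,d(x,y)^{2}\, d\tau(y),
\]
that is, the asserted bound, with a constant no larger than $E(p,q)$. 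The extra summand in $E(p,q)$, namely $4\,(\sup q/p)(C_p+C_q)^2(\sup p/q)(1+C_p^2\,{\rm diam}(\X)^2)$, is what a coarser route gives: if one centers the variance of $r$ at $\bar r(y)$ rather than at $r(y)$ (or expands the relative-entropy integrand directly), one must additionally control $\bar r(y)-r(y)$ and the discrepancy $\tilde q/q-1$, and this is where the Lipschitz constants $C_p,C_q$ of the individual densities and the factor $1+C_p^2\,{\rm diam}(\X)^2$ enter.

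Everything above is legitimate because $p,q$ are continuous and strictly positive on the compact set $\X$: hence $r$, $\bar r$ and $\tilde q/\tilde p$ are bounded above and below by positive constants, both relative entropies are finite, the conditional laws $P_{X|Y=y},Q_{X|Y=y}$ are well defined, and $D\le\chi^2$ applies. The step I expect to be the main obstacle is the bookkeeping of the three ratios $r$, $\bar r$ and $\tilde q/\tilde p$ so that a single clean uniform constant survives; this is delicate precisely because $h$ is \emph{not} mean-preserving, so a naive Taylor expansion of the relative-entropy functional would leave a term that is only first order in the smoothing width --- the chain-rule-plus-$\chi^2$ argument sidesteps this because $\bar r(y)$ appears only as a bounded multiplicative prefactor and never inside a logarithm.
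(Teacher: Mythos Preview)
Your argument is correct and in fact yields a sharper constant than the paper's: you obtain
\[
D(p\|q)-D(\tilde p\|\tilde q)\ \le\ C_{p/q}^{2}\,\Big(\sup_{x}\tfrac{q(x)}{p(x)}\Big)\sup_{x}\int_\X h(x,y)\,d(x,y)^{2}\,d\tau(y),
\]
whereas the paper's $E(p,q)$ carries an extra factor $2$ on this term and an additional nonnegative summand.

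Structurally the two proofs start the same way, by introducing the joint laws $P,Q$ on $\X\times\X$ with common conditional $h(x,\cdot)\,d\tau$. The paper then writes the gap as a Jensen remainder for $f(u)=u\log u$ and uses the integral form of Taylor's theorem, obtaining an expression of the type $\int\!\!\int h(x,y)\,q(x)\,(r(x)-\bar r(y))^{2}\!\int_{0}^{1}\!(t\,r(x)+(1-t)\bar r(y))^{-1}dt\,d\tau(x)d\tau(y)$; it bounds the $t$-integral by $\sup q/p$ and then splits $(r(x)-\bar r(y))^{2}\le 2(r(x)-r(y))^{2}+2(r(y)-\bar r(y))^{2}$, which is what produces both the factor~$2$ and the second summand in $E(p,q)$ (controlling $|r(y)-\bar r(y)|$ brings in $C_{p},C_{q}$ and the diameter). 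Your route---chain rule, then $D\le\chi^{2}$, then the elementary variance inequality $\mathrm{Var}_{Q_{X|Y=y}}(r)\le\mathbb{E}_{Q_{X|Y=y}}[(r-r(y))^{2}]$---replaces that split by a single step and never needs the individual constants $C_{p},C_{q}$. The trade-off is only that the paper's Jensen--Taylor identity is exact before bounding, so it could in principle be refined further, while your $D\le\chi^{2}$ step already discards a logarithm; but for the stated proposition your argument is both simpler and stronger.
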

  Thus, for Lipschitz-continuous densities, the approximation of $D(p\|q)$ by $D(\Sigma_p \| \Sigma_q)$ is controlled by a kernel dependent quantity, that we now study in special cases.
  
  \paragraph{Discrete orthonormal embeddings.} For these kernels where $k(x,y) = 1_{x = y}$, then we have $\Sigma = \idm$ and thus $h(x,y)= 1_{x = y}$; therefore the quantity $\sup_{x \in \X} \int_\X h(x,y) d(x,y)^2 dy$ is always zero, which is natural here since in this situation, kernel and Shannon entropies are equal.

  \paragraph{Torus.} We have by definition $h(x,y) =   \langle \varphi(x), \Sigma^{-1/2} \varphi(y) \rangle^2$.
   For a kernel on the torus $[0,1]$ of the form $k(x,y) = k(x-y)$ where $k$ has Fourier series $\hat{k}$, then we have:
  $$h(x,y) = \Big(\sum_{\omega \in \Z} \hat{k}(\omega)^{1/2} e^{ 2i\pi \omega (x - y) } \Big)^2,$$
  and for the particular choice of $\hat{k}(\omega) \propto e^{-\sigma |\omega|}$, we have:
  $\ds
  h(x,y) =  \frac{\tanh \frac{\sigma}{2} }{\tanh^2 \frac{\sigma}{4} }
  \big( 1 + \frac{\sin^2 \pi (x-y)}{\sinh^2 (\sigma/4)} \big)^{-2}$,
  and for $d(x,y) = |\sin \pi(x-y) |$, then, using exact integration, we get:
  $$
  \int_{y-1/2}^{y+1/2} h(x,y)d(x,y)^2 dx=
   \frac{\tanh \frac{\sigma}{2} }{\tanh^2 \frac{\sigma}{4} } 
   \frac{\sinh^{3} (\sigma/4) }{2(1+\sinh^2 (\sigma/4) )^{3/2}}  \leqslant \frac{\sigma^2}{16}.
  $$
  For the $d$-dimensional torus, and, $d(x,y)^2=\sum_{j=1}^d  \sin^2 \pi(x_j-y_j)$,  we get the bound $\ds \frac{d\sigma^2}{16}$, and hence a nice scaling in dimension (non-exponential).
Overall, in this situation, we thus get an approximation of entropy from \eq{entbound} up to $O(\sigma^2)$, where $\sigma$ is the width of the kernel. {Thus, if our goal is to estimate the true relative entropy, and not only have a lower bound, we need to have $\sigma$ going to zero, which corresponds to RKHSs which are bigger and bigger. See discussion in \mysec{examplesEST} when estimating quantities from a finite sample.}

Note that this extends to all translation invariant kernels on $[0,1]^d$ (and not only for $\hat{k}(\omega) \propto e^{-\sigma \| \omega\|_1}$), and more generally to translation-invariant kernels on compact spaces.

\paragraph{Alternative proof through leverage scores.}
While we provided a proof  of the $O(\sigma^2)$ approximation of the entropy through quantum data processing inequalities, the relationship between the regular entropy and the kernel entropy can also be studied through the integral representation in \eq{intKL}. We indeed have, using \eq{intKL}:
\BEAS
D(\Sigma_p\| \Sigma_q)  & = & 
\int_0^{+\infty} \tr \Big[  {\Sigma}_p \big( (  {\Sigma}_q + \lambda \idm)^{-1} -  (  {\Sigma}_p + \lambda \idm)^{-1}\big) \Big] d\lambda \\
& = &  \int_0^{+\infty} \int_{\X} \big\langle {\varphi}(x)  ,  \big( (  {\Sigma}_q + \lambda \idm)^{-1} -  (  {\Sigma}_p + \lambda \idm)^{-1}\big)   {\varphi}(x)\big\rangle dp(x) d\lambda \\
& = &  \int_0^{+\infty} \int_{\X}\Big(
 \big\langle {\varphi}(x)  ,    (  {\Sigma}_q + \lambda \idm)^{-1}{\varphi}(x)\big\rangle
  -  \big\langle {\varphi}(x)  ,   (  {\Sigma}_p + \lambda \idm)^{-1}    {\varphi}(x)\big\rangle \Big) dp(x) d\lambda.
\EEAS

 We now consider that both $p$ and $q$ have densities with respect to the base measure $\tau$ (which we also denote $p$ and $q$). The main idea, using and extending results from \cite{pauwels2018relating}, is that for all $x$ and $\lambda$, and $p$ smooth enough, the quantity $ \langle \varphi(x),   (  {\Sigma}_p + \lambda\idm)^{-1}  \varphi(x) \rangle$, often referred to as a  ``leverage score'' can be approximated as $\langle \varphi(x),     (   p(x)   \Sigma + \lambda \idm)^{-1}  \varphi(x) \rangle$.

We thus consider the potential approximation:
\BEAS
&& \int_0^{+\infty}\int_{\X} \Big[ 
 \langle \varphi(x),     (  q(x)  \Sigma + \lambda \idm)^{-1}  \varphi(x) \rangle
 -  \langle \varphi(x),      (  (p(x)  \Sigma + \lambda \idm)^{-1}   \varphi(x) \rangle
 \Big] p(x)  d\tau(x) d\lambda
 \\
 & = & 
 \int_0^{+\infty}   \int_{\X}\Big( 
\tr \Big[ \Sigma    (   q(x) \Sigma + \lambda \idm)^{-1}  \Big] 
- \tr \Big[ \Sigma    (  p(x)   \Sigma + \lambda \idm)^{-1}  \Big] 
 \Big)  p(x) d\tau(x) d\lambda
 \\
& = & 
   \int_{\X}\tr \big[ 
 \Sigma  ( \log p(x)  \Sigma  - \log q(x)   \Sigma    ) 
   \big]p(x)    d\tau(x) =   \int_{\X} p(x) \log \frac{p(x)}{q(x)} d\tau(x),
\EEAS
which is exactly the traditional KL divergence $D(p\|q)$. In order to show a bound similar to \eq{entbound},  a non-asymptotic extension of results from \cite{pauwels2018relating} could be carried out.

\section{Estimation from finite sample}
\label{sec:estimation}

Given that the covariance operator is an infinite-dimensional operator, naively we would need eigenvalue decompositions of such operators to compute the kernel information quantities, which is computationally hard. Thus finite-dimensional estimation algorithms are needed. In this section, we provide   algorithms  for estimating the kernel entropy $H(\Sigma_p)$ from independent and identically distributed (i.i.d.) samples from $p$ {(as required for model fitting~\cite{binkowski2018demystifying} or for independent component analysis~\cite{muandet2017kernel}}), as well as from certain kernel integrals.  Estimators from other oracles on $p$ or $q$, as well as estimators for $D(\Sigma_p \| \Sigma_q)$ {could} derived and analyzed similarly.\footnote{{With potentially extra terms due to the term $\tr \big[\hat{\Sigma}_p \log \hat{\Sigma}_q\big]$.}}

\subsection{Estimators}
\label{sec:est}
 Given $\ds \Sigma_p = \int_{\X}  \varphi(x)   \varphi(x)^\ast dp(x)$ and 
$x_1,\dots,x_n$ sampled i.i.d.~from $p$, we consider the natural estimator
\BEQ
\label{eq:Shat}
\hat{\Sigma}_p = \frac{1}{n} \sum_{i=1}^n \varphi(x_i)   \varphi(x_i)^\ast.
\EEQ
 The natural estimate   for $\tr \big[ \Sigma_p \log \Sigma_p \big]$ is 
 $\tr \big[ \hat{\Sigma}_p \log \hat{\Sigma}_p \big]$, that can be computed from the kernel matrix $K \in \rb^{n \times n}$ defined as $K_{ij} = k(x_i,x_j)$.

 \begin{proposition}[entropy for empirical covariance estimators]
 With $\hat{\Sigma}_p$ defined in \eq{Shat}, and the kernel matrix defined above, we have
 \BEQ
 \label{eq:entK}
  \tr \big[ \hat{\Sigma}_p \log \hat{\Sigma}_p \big]
 = \tr \Big[ \frac{1}{n} K  \log \big(\frac{1}{n} K\big) \Big]  
. \EEQ
 \end{proposition}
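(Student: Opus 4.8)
The plan is to express both sides as $\tr[g(A)]$ for the same function $g(t) = t\log t$ (with the convention $g(0)=0$), and to observe that the two operators involved share the same nonzero spectrum, so these traces coincide.

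First I would introduce the synthesis operator $\Phi : \rb^n \to \H$ defined by $\Phi c = \sum_{i=1}^n c_i \varphi(x_i)$, whose adjoint $\Phi^\ast : \H \to \rb^n$ is $(\Phi^\ast f)_i = \langle \varphi(x_i), f\rangle = f(x_i)$. A one-line computation then gives $\Phi\Phi^\ast = \sum_{i=1}^n \varphi(x_i)\varphi(x_i)^\ast = n\hat{\Sigma}_p$ as an operator on $\H$, while $(\Phi^\ast\Phi)_{ij} = \langle \varphi(x_i),\varphi(x_j)\rangle = k(x_i,x_j) = K_{ij}$, that is $\Phi^\ast\Phi = K$ on $\rb^n$. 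Hence $\hat{\Sigma}_p = \tfrac1n\Phi\Phi^\ast$ and $\tfrac1n K = \tfrac1n\Phi^\ast\Phi$.

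Next I would invoke the elementary fact that $\Phi\Phi^\ast$ and $\Phi^\ast\Phi$ have the same nonzero eigenvalues with the same multiplicities: if $\Phi\Phi^\ast v = \mu v$ with $\mu\neq 0$, then $u := \Phi^\ast v \neq 0$ satisfies $\Phi^\ast\Phi u = \mu u$, and $v\mapsto \Phi^\ast v$ is a linear isomorphism from the $\mu$-eigenspace of $\Phi\Phi^\ast$ onto that of $\Phi^\ast\Phi$ (with inverse $u \mapsto \mu^{-1}\Phi u$). Since $\Phi$ has rank at most $n$, $\hat{\Sigma}_p$ is a positive, self-adjoint, finite-rank operator, so after scaling by $\tfrac1n$ we conclude that $\hat{\Sigma}_p$ and the positive semi-definite matrix $\tfrac1n K$ have exactly the same nonzero eigenvalues, counted with multiplicity; all other eigenvalues of either operator equal $0$.

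Finally I would conclude using the spectral definition of $\tr[A\log A]$ from \mysec{quantum}: it equals the sum of $\lambda\log\lambda$ over the eigenvalues $\lambda$ of $A$ (counted with multiplicity), with $0\log 0 = 0$. Applying this to $\hat{\Sigma}_p$ and to $\tfrac1n K$, the zero eigenvalues contribute nothing and the nonzero parts agree, so
$$
\tr\big[\hat{\Sigma}_p\log\hat{\Sigma}_p\big] \;=\; \sum_{\mu>0}\mu\log\mu \;=\; \tr\Big[\tfrac1n K\log\big(\tfrac1n K\big)\Big],
$$
which is the claimed identity. There is no genuine obstacle here: the only points requiring a little care are the bookkeeping of multiplicities in the $\Phi\Phi^\ast$--$\Phi^\ast\Phi$ correspondence and the convention $0\log0=0$, which is exactly what makes the (different) null spaces of the two operators irrelevant.
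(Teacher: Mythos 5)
Your proof is correct and follows essentially the same route as the paper: both arguments reduce the claim to the fact that $\hat{\Sigma}_p$ and $\frac{1}{n}K$ share the same nonzero eigenvalues (the paper does this by manipulating eigenvectors $f=\sum_j \alpha_j \varphi(x_j)$ directly, while you package it through the factorization $\Phi\Phi^\ast$ versus $\Phi^\ast\Phi$), and then apply the spectral definition of $\tr[A\log A]$ with the convention $0\log 0=0$. Your version is, if anything, slightly cleaner on the bookkeeping of multiplicities, but the underlying idea is the same.
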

 \begin{proof}
The non-zero eigenvectors of $\hat{\Sigma}_p$ belong to the image space of $\hat{\Sigma}_p$ and are thus linear combinations $f = \sum_{j=1}^n \alpha_j \varphi(x_j)$ for $\alpha \in \rb^n$. Then
 $\ds
 \hat{\Sigma}_p f = \frac{1}{n} \sum_{i=1}^n\sum_{j=1}^n \alpha_j \big[ \varphi(x_i)   \varphi(x_i) ^\ast\big] \varphi(x_j) = \frac{1}{n} \sum_{i=1}^n (K \alpha)_i \varphi(x_i).$ 
 Thus, if $K\alpha = n \lambda \alpha$,   $ \hat{\Sigma}_p f  = \lambda f$, and if  $ \hat{\Sigma}_p f  = \lambda f$ with $\lambda \neq 0$ and $f \neq 0$ (which implies $K\alpha \neq 0$), then $\ds \sum_{i=1}^n \big[ (K \alpha)_i -  n\lambda \alpha_i \big] \varphi(x_i) = 0 $, which implies $K^2 = n\lambda K \alpha$ and then $K\alpha = n \lambda \alpha$ since $K\alpha  \neq 0$. Thus, the non-zero eigenvalues of $\hat{\Sigma}_p$ are exactly the ones of $\frac{1}{n}K$, and we thus get
$
  \tr \big[ \hat{\Sigma}_p \log \hat{\Sigma}_p \big]
 = \tr \Big[ \frac{1}{n} K  \log \big(\frac{1}{n} K\big) \Big]  
.$
\end{proof}
More generally, for any approximation of $\Sigma_p$ as $\sum_{i=1}^n \eta_i \varphi(x_i)   \varphi(x_i)^\ast$, with positive weights $\eta_i$ that sum to one (for example obtained from a finite grid and not sampling), we would get $$ \tr \big[ \hat{\Sigma}_p \log \hat{\Sigma}_p \big]
 = \tr \big[ \Diag(\eta)^{1/2} K   \Diag(\eta)^{1/2} \log \big( \Diag(\eta)^{1/2}K  \Diag(\eta)^{1/2} \big) \big].$$

\paragraph{Running-time complexity.} In order to compute the eigenvalue decomposition of $K$ above (needed to compute the matrix logarithm), we need a running time of $O(n^3)$. This can be reduced by using column sampling techniques~\cite{boutsidis2009improved}, also known as ``Nystr\"om method'' in this context~\cite{williams2000using}: by projecting the kernel matrix to the span of $m$ of its columns, it leads to a running time in $O(m^2 n)$, with some approximation that could be controlled explicitly~\cite{rudi2015less}.

\subsection{Analysis}
\label{sec:analysisDOF}

{In order to quantify the difference between $\tr \big[ \hat{\Sigma}_p \log \hat{\Sigma}_p \big]$ and $\tr \big[  {\Sigma}_p \log  {\Sigma}_p \big]$, we could use regular perturbation arguments for eigenvalues of self-adjoint operators~\cite{kato}. However, the function $t \mapsto t \log t$ has diverging derivatives around zero, which prevents their direct use.}
Instead, given the integral representation in \eq{intKL}, we see that it will be sufficient to estimate quantities of the form
$\tr\big[ (\Sigma_p +\lambda \idm)^{-1} \Sigma_p \big]$, which are usually referred so as ``degrees of freedom'' in kernel methods~\cite{de2021regularization}. The natural estimator is $\tr\big[ (\hat\Sigma_p +\lambda \idm)^{-1} \hat\Sigma_p \big]$ and its performance has already been thoroughly studied (see~\cite{rudi2017generalization}), and an important quantity there is the maximal leverage score for the base distribution $\sup_{x \in \X}\,  \langle \varphi(x),  (\Sigma+\lambda \idm)^{-1} \varphi(x) \rangle$, which could be replaced by the the average one (which is then equal to the degrees of freedom) if ``leverage score sampling'' is used (see \cite{de2021regularization} for more details). For simplicity, we do not consider leverage score sampling, noting that for symmetric sets, it does not change anything.

The next proposition shows that these estimation results for fixed $\lambda$ can be integrated over $\lambda$ to get an estimation bound for the entropy (see proof in Appendix~\ref{app:proof_est}),  \emph{without} the need for extra regularization, leading to an estimation in $O(1/\sqrt{n})$.
\begin{proposition}
\label{prop:proof_est}
Assume \textbf{(A1)} and that $ {p}$ has a density with respect to the base measure which is greater than $\alpha<1$. Assume that $\ds c=  \int_0^{+\infty}\! \! \sup_{x \in \X}\,  \langle \varphi(x),  (\Sigma+\lambda \idm)^{-1} \varphi(x) \rangle^2 d\lambda  $ is finite. Given i.i.d.~samples $x_1,\dots,x_n$ from $p$, and the estimator defined by   \eq{entK}, we have:
$$
\E  \Big[ 
 \big|    \tr \big[ \hat{\Sigma}_p \log \hat{\Sigma}_p \big]  -    \tr \big[  {\Sigma}_p \log  {\Sigma}_p \big]\big| \Big]
  \leqslant  \frac{1 +   c( 8 \log n)^2}{n \alpha }
 +   \frac{17}{\sqrt{n}} \big(
 2 \sqrt{c}   + \log n
 \big)
 .$$
\end{proposition}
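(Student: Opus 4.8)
The plan is to use the integral representation of the logarithm behind \eq{intKL} to turn estimation of the entropy into estimation of ``degrees of freedom'' integrated over the regularization parameter. From $\log A = \int_0^{+\infty}\big((1+\lambda)^{-1}\idm - (A+\lambda\idm)^{-1}\big)\,d\lambda$ and $\tr\hat\Sigma_p = \tr\Sigma_p = 1$ (so the $(1+\lambda)^{-1}$ terms cancel),
\[
\tr\big[\hat\Sigma_p\log\hat\Sigma_p\big] - \tr\big[\Sigma_p\log\Sigma_p\big] = \int_0^{+\infty}\!\big(g_p(\lambda) - g_{\hat p}(\lambda)\big)\,d\lambda, \qquad g_p(\lambda) := \tr\big[\Sigma_p(\Sigma_p+\lambda\idm)^{-1}\big],
\]
and similarly for $g_{\hat p}$. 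Since $g_p,g_{\hat p}$ each behave like $1/\lambda$ at infinity and are not separately integrable, the cancellation must be retained; I would first isolate a linear term by inserting $\log\Sigma$:
\[
\tr\big[\hat\Sigma_p\log\hat\Sigma_p\big] - \tr\big[\Sigma_p\log\Sigma_p\big] = \underbrace{\big(D(\hat\Sigma_p\|\Sigma) - D(\Sigma_p\|\Sigma)\big)}_{=:G} + \tr\big[(\hat\Sigma_p - \Sigma_p)\log\Sigma\big].
\]
The last term is a centred average of i.i.d.\ scalars $\langle\varphi(x_i),(\log\Sigma)\varphi(x_i)\rangle$ lying in a bounded interval: $\log\Sigma\preccurlyeq 0$ because $\Sigma\preccurlyeq\idm$, while $-\langle\varphi(x),(\log\Sigma)\varphi(x)\rangle = \int_0^{+\infty}\langle\varphi(x),\big((\Sigma+\lambda\idm)^{-1}-(1+\lambda)^{-1}\idm\big)\varphi(x)\rangle\,d\lambda \leqslant \int_0^1 N_\infty(\lambda)\,d\lambda + C'$, with $N_\infty(\lambda):=\sup_x\langle\varphi(x),(\Sigma+\lambda\idm)^{-1}\varphi(x)\rangle$, using $\int_0^1 N_\infty \leqslant \sqrt c$ by Cauchy--Schwarz and an $O(\lambda^{-2})$ tail for $\lambda\geqslant 1$ (absolute constant $C'$). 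A variance bound for a bounded scalar then gives $\E\big|\tr[(\hat\Sigma_p-\Sigma_p)\log\Sigma]\big| \leqslant (\sqrt c + C')/\sqrt n$, which supplies the $2\sqrt c/\sqrt n$ in the statement.

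It remains to bound $\E|G|$, split as $\E\big|D(\hat\Sigma_p\|\Sigma) - \E D(\hat\Sigma_p\|\Sigma)\big| + \big(\E D(\hat\Sigma_p\|\Sigma) - D(\Sigma_p\|\Sigma)\big)$, the second summand being non-negative by convexity of $A\mapsto D(A\|\Sigma)$ and Jensen ($\E\hat\Sigma_p = \Sigma_p$). For the fluctuation term I would use a bounded-difference (Efron--Stein/McDiarmid) argument on $D(\hat\Sigma_p\|\Sigma) = \tr[\hat\Sigma_p\log\hat\Sigma_p] - \tr[\hat\Sigma_p\log\Sigma]$: replacing one sample changes $\hat\Sigma_p$ by $O(1/n)$ in trace norm, hence changes $\tr[\hat\Sigma_p\log\hat\Sigma_p]\in[-\log n,0]$ by $O((\log n)/n)$ (via the modulus of continuity of $t\mapsto t\log t$) and $\tr[\hat\Sigma_p\log\Sigma]$ (terms in $[-(\sqrt c+C'),0]$) by $O(\sqrt c/n)$, so the fluctuation is $O\big((\log n + \sqrt c)/\sqrt n\big)$; together with the linear term this accounts for the $\tfrac{17}{\sqrt n}(2\sqrt c + \log n)$ part.

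For the Jensen gap, unbiasedness of $\tr[\hat\Sigma_p(\Sigma+\lambda\idm)^{-1}]$ and the integral representation give $\E D(\hat\Sigma_p\|\Sigma) - D(\Sigma_p\|\Sigma) = \int_0^{+\infty}\big(g_p(\lambda) - \E g_{\hat p}(\lambda)\big)\,d\lambda = \int_0^{+\infty}\E\big[(I)_\lambda\big]\,d\lambda$, where $(I)_\lambda := \tr\big[\hat\Sigma_p(\hat\Sigma_p+\lambda\idm)^{-1}(\hat\Sigma_p-\Sigma_p)(\Sigma_p+\lambda\idm)^{-1}\big]$ is the ``second order'' part (the mean-zero ``first order'' part $\tr[(\hat\Sigma_p-\Sigma_p)(\Sigma_p+\lambda\idm)^{-1}]$ having dropped out). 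For each fixed $\lambda$ I would invoke the sharp degrees-of-freedom concentration of~\cite{rudi2017generalization}: with $N_\infty^p(\lambda):=\sup_x\langle\varphi(x),(\Sigma_p+\lambda\idm)^{-1}\varphi(x)\rangle$ and $t(\lambda):=\big\|(\Sigma_p+\lambda\idm)^{-1/2}(\hat\Sigma_p-\Sigma_p)(\Sigma_p+\lambda\idm)^{-1/2}\big\|_{\rm op}$, on $\{t(\lambda)\leqslant 1/2\}$ a resolvent-identity computation gives $0\leqslant (I)_\lambda \lesssim t(\lambda)\,\tr[\hat\Sigma_p(\Sigma_p+\lambda\idm)^{-1}] \leqslant t(\lambda)\,N_\infty^p(\lambda)$, while $\E t(\lambda) \lesssim \beta\,N_\infty^p(\lambda)/n + \sqrt{\beta\,N_\infty^p(\lambda)/n}$ with $\beta\asymp\log n$ and $\P(t(\lambda)>1/2)$ polynomially small in $n$ once $n\gtrsim\beta\,N_\infty^p(\lambda)$. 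The hypothesis $p\geqslant\alpha$ enters precisely through $\Sigma_p\succcurlyeq\alpha\Sigma$, giving (since $\alpha<1$) $N_\infty^p(\lambda)\leqslant\tfrac1\alpha N_\infty(\lambda)$ and $g_p(\lambda)\leqslant N_\infty^p(\lambda)$. Integrating over $\lambda$ and splitting at $\lambda_n\asymp\log(n)/(\alpha n)$: for $\lambda\geqslant\lambda_n$ the expected integrand is $\lesssim \beta^2 N_\infty^p(\lambda)^2/n + \sqrt{\beta}\,N_\infty^p(\lambda)^{3/2}/\sqrt n$ (up to the polynomially-small bad-event contribution, which is handled via the crude deterministic bound $(I)_\lambda\lesssim N_\infty^p(\lambda)/\lambda$), and these integrate using $\int_0^{+\infty}N_\infty(\lambda)^2 d\lambda = c$ and the finiteness of $\int_0^{+\infty}N_\infty(\lambda)^{3/2}d\lambda$ (bounded by $\sqrt2\,c$ on $[0,1]$ and by an absolute constant on $[1,\infty)$), producing the $c(8\log n)^2/(n\alpha)$ term (one power of $\tfrac1\alpha$ saved by using $N_\infty^p(\lambda)\leqslant 1/\lambda$ on part of the range) plus a further $O(1/\sqrt n)$ contribution; for $\lambda<\lambda_n$, where the operator perturbation is not small, I bound $0\leqslant g_p(\lambda) - \E g_{\hat p}(\lambda)\leqslant g_p(\lambda)\leqslant\tfrac1\alpha N_\infty(\lambda)$ and use $\int_0^{\lambda_n}N_\infty\leqslant\sqrt{\lambda_n c}$ together with the deterministic bound $\int_0^{1/n} g_p\leqslant 1$ (producing the ``$1$'' in the numerator), the remainder being absorbed with room to spare. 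Collecting all contributions gives the stated inequality.

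The main obstacle is this last step: the fixed-$\lambda$ estimates are not uniformly integrable in $\lambda$, so one must simultaneously keep the cancellation alive --- by peeling off $\tr[(\hat\Sigma_p-\Sigma_p)\log\Sigma]$ and working with the non-negative Jensen gap $\int \E[(I)_\lambda]\,d\lambda$ rather than with $g_p(\lambda)-g_{\hat p}(\lambda)$ term by term --- control the small-$\lambda$ regime, where $\hat\Sigma_p$ and $\Sigma_p$ are genuinely far apart in operator norm, using the finiteness of $c$ (via Cauchy--Schwarz, $\int_0^{\lambda_n}N_\infty\leqslant\sqrt{\lambda_n c}$) and crude deterministic entropy bounds, and carefully track which powers of $1/\alpha$ and which logarithmic factors --- the latter coming from converting the high-probability fixed-$\lambda$ bounds into expectation bounds uniformly over the relevant dyadic range of $\lambda$ --- land where, so as to match the constants $c(8\log n)^2$ and $2\sqrt c + \log n$ in the statement.
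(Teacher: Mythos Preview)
Your approach is genuinely different from the paper's, and while the ingredients are sound, it is considerably more elaborate than what the paper actually does.

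The paper does \emph{not} split off the linear term $\tr[(\hat\Sigma_p-\Sigma_p)\log\Sigma]$, does not use a Fannes/McDiarmid bounded-difference argument for the fluctuation, and does not separate a Jensen gap from a fluctuation. Instead it works directly with the identity
\[
\tr[\hat\Sigma_p\log\hat\Sigma_p]-\tr[\Sigma_p\log\Sigma_p]
=\int_0^{+\infty}\big({\rm df}(\alpha\lambda)-\widehat{\rm df}(\alpha\lambda)\big)\,\alpha\,d\lambda,
\]
and bounds the \emph{expected absolute value} of the integrand for each $\lambda$ using a single lemma (Prop.~\ref{prop:df}) giving $\E|{\rm df}-\widehat{\rm df}|\leqslant d/\sqrt n+4d^2/n+16dn\exp(-9n/(32d+8))$ with $d={\rm df}^{\max}(\alpha\lambda)\leqslant C(\lambda)/\alpha$. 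The integration in $\lambda$ is then handled by two crude truncations: for $\lambda>\lambda_1$ one has the elementary bound $\big|\int_{\lambda_1}^\infty\cdots\,\big|=\big|\tr\hat\Sigma_p\log(\hat\Sigma_p+\alpha\lambda_1\idm)-\tr\Sigma_p\log(\Sigma_p+\alpha\lambda_1\idm)\big|\leqslant 1/(\alpha\lambda_1)$, and for $\lambda<\lambda_0$ one uses only ${\rm df}\leqslant C(\lambda)$ and $\widehat{\rm df}\leqslant n$. The cut $\lambda_0$ is chosen so that $C(\lambda_0)/\alpha=n/\log n$, which kills the exponential tail uniformly on $[\lambda_0,\lambda_1]$; the choice $\lambda_1=\lambda_0+n$ makes $\int_{\lambda_0}^{\lambda_1}C(\lambda)\,d\lambda\leqslant 2\sqrt c+\log n$, producing the $\tfrac{17}{\sqrt n}(2\sqrt c+\log n)$ term from the $d/\sqrt n$ piece and the $c(8\log n)^2/(n\alpha)$ term from the small-$\lambda$ truncation (via $\lambda_0\leqslant 2c(\log n)^2/(n^2\alpha^2)$). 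The $4d^2/n$ piece integrates directly to $4c/(n\alpha)$.

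Your route would also reach a bound of the same order: the Fannes inequality indeed gives a bounded difference of order $(\log n+\sqrt c)/n$ for $D(\hat\Sigma_p\|\Sigma)$, and the Jensen gap is exactly $\int_0^\infty \E[{\rm df}-\widehat{\rm df}]\,d\lambda$, in which the mean-zero first-order term drops out and the remaining $O(d^2/n)$ piece integrates against $c$. What your approach buys is a cleaner conceptual separation (bias versus fluctuation) and the fluctuation bound comes ``for free'' from Fannes rather than from integrating the $d/\sqrt n$ term. What it costs is that you have to keep track of more pieces, and recovering the precise constants $c(8\log n)^2/(n\alpha)$ and $17(2\sqrt c+\log n)/\sqrt n$ from your decomposition---in particular getting only one power of $1/\alpha$ in front of $c(\log n)^2$---would require essentially redoing the paper's truncation argument inside your Jensen-gap step anyway. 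The paper's single-integral approach is shorter and gives the stated constants directly.
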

Note that as a consequence of Jensen's inequality, we always have
$\E \big(  \tr \big[ \hat{\Sigma}_p \log \hat{\Sigma}_p \big]  \big) \geqslant   \tr \big[  {\Sigma}_p \log  {\Sigma}_p \big]$, as can be seen in simulations (see \myfig{simulations_entropy}).

 \subsection{Examples}
 \label{sec:examplesEST}
We consider the $d$-dimensional torus $[0,1]^d$ with a translation-invariant kernel $k(x,y) = k(x-y)$ with $\hat{k}(\omega) = \hat{k}(0) e^{-\sigma \| \omega\|_1}$. We have by symmetry
$\ds
\langle \varphi(x),  (\Sigma+\lambda \idm)^{-1} \varphi(x) \rangle
= \tr \big[ \Sigma ( \Sigma +\lambda \idm)^{-1} \big]
= \sum_{\omega \in \Z^d} \frac{ \hat{k}(\omega)}{\hat{k}(\omega) + \lambda}.
$
As shown in Appendix~\ref{app:df}, we have the bound $\ds 
\tr \big[ \Sigma ( \Sigma +\lambda \idm)^{-1} \big] \leqslant e^{-\sigma d / 2}  \frac{ d!  }{ \sinh^d(\sigma/2)} \Big[
1 +   \Big( \log \frac{ \tanh^d   \frac{\sigma}{2}}{\lambda}   \Big)^{d} 
\Big]$. Thus, the constant $c$ in Prop.~\ref{prop:proof_est} above  can be upper bounded by a constant times $\sigma^{-d}$ (see Appendix~\ref{app:df} for details).
We thus get an overall estimation rate {for the kernel entropy} proportional to $\sigma^{-d/2}{\sqrt{n}}$.   

When balancing with the $O(\sigma^2)$ estimation of the entropy for Lipschitz-continuous strictly positive densities in \mysec{asymptotics}, we can take $  \sigma \propto {n^{-1/(d+4)}}$ {(lower bandwidth with more observations)}, leading to a rate of ${n^{-2/(d+4)}}$, which is to be compared to the optimal rate equal to $ {n^{-4/(d+4)}}$ for entropy estimation for densities with the same regularity~\cite{han2020optimal}. Although our primary goal was not to estimate entropies, we  get a rate which is the square root of the optimal rate.

\paragraph{Illustrative experiments.} We consider $\X = [0,1]$ and $p(x) =  4\big| x - \frac{1}{2} \big|$, with negative differential entropy $\ds  \int_0^1 p(x) \log p(x) dx = \log 2 - \frac{1}{2} \approx 0.1931$. 
We consider estimation from various numbers of i.i.d.~observations. See left plot in \myfig{simulations_entropy}, where we compare two values of $\sigma$. Note that the estimation is empirically here in expectation from above, as expected because of Jensen's inequality (as opposed to the right plot).

\begin{figure}
\begin{center}
\includegraphics[width=5.5cm]{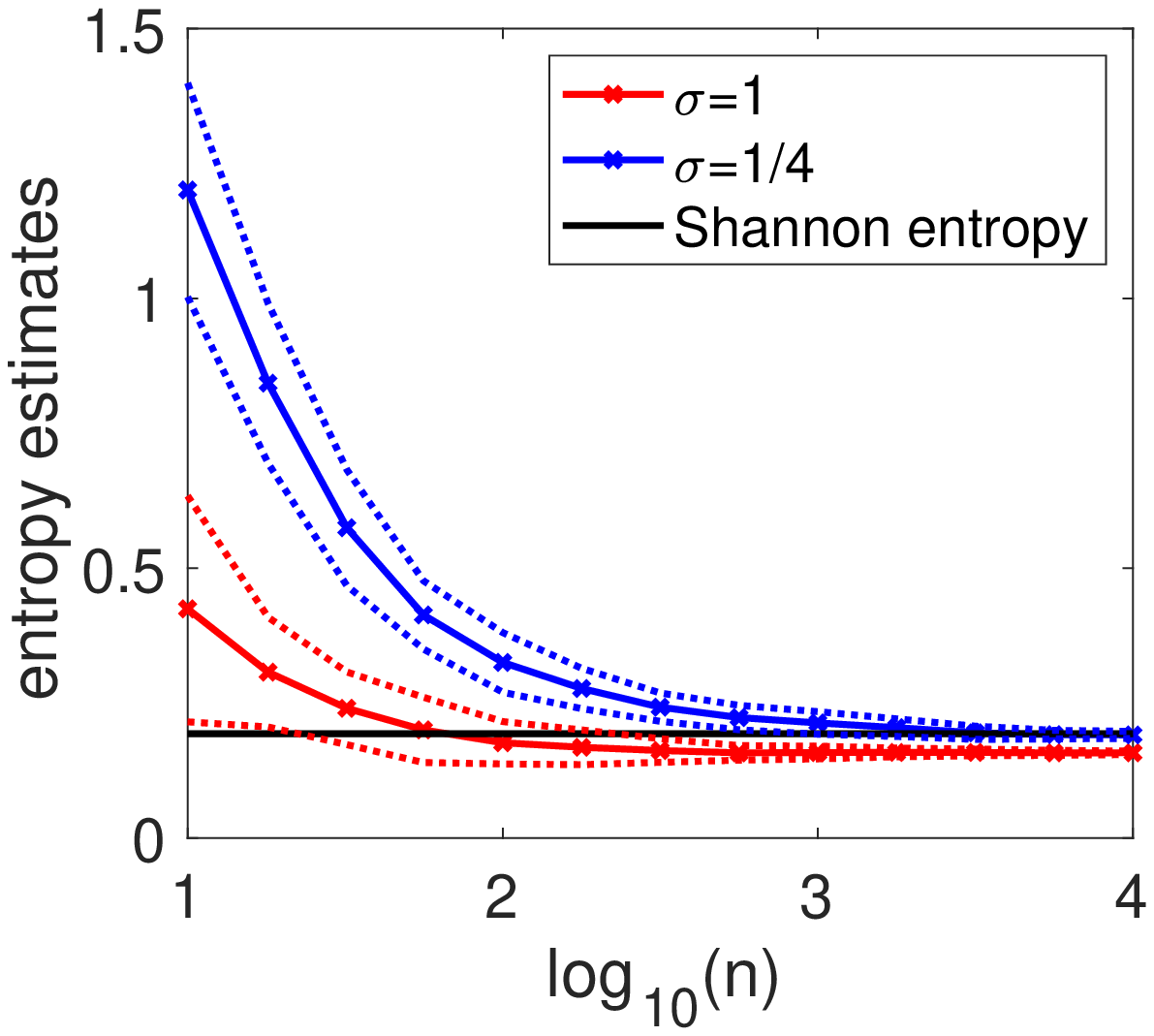} \hspace*{2cm}
\includegraphics[width=5.5cm]{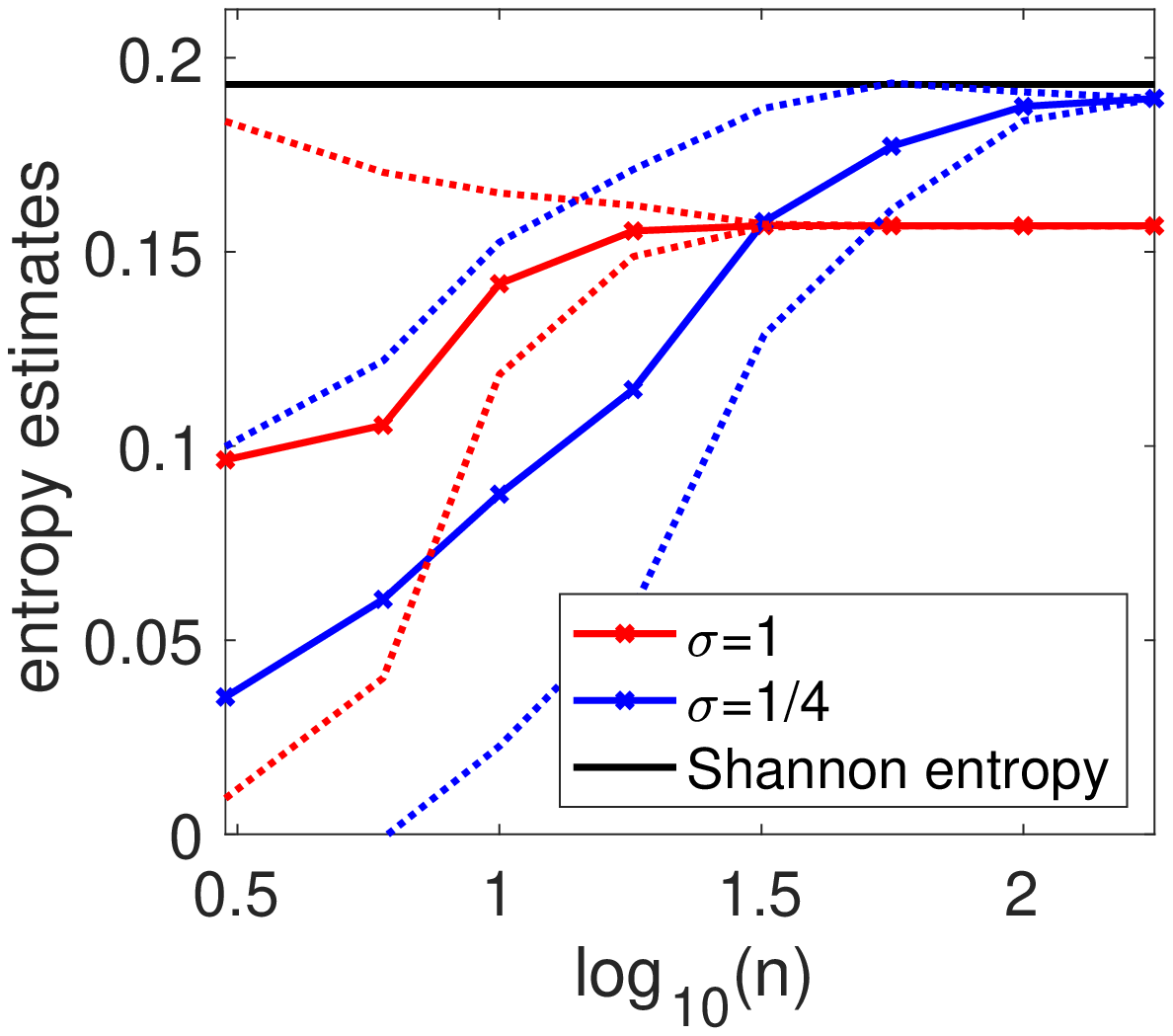}
\end{center}

\vspace*{-.45cm}

\caption{Negative entropy estimation from 20 replications (mean and standard deviation) for two values of the kernel bandwidth $\sigma$, as $n$ increases. Left: from i.i.d.~samples (method from \mysec{est}). Right: from kernel integrals evaluated as uniform samples (method from \mysec{otheroraclaes}).
As expected we estimate the kernel entropy (limit when $n$ tends to infinity) from above on the left plot, and from below on the right plot, with a  faster convergence rate on the right plot.  \label{fig:simulations_entropy}}
\end{figure}

\subsection{Estimation from other oracles}
\label{sec:otheroraclaes}
If we have access to other oracles on $p$ and $q$, we can consider other estimators.\footnote{{This section is not crucial to understand the rest of the paper.}}
For example, if we can estimate $\int_\X k(x,y)k(x,z) dp(x)$ and $\int_\X  k(x,y)k(x,z) dq(x)$ for any $y,z \in \X$, we can simply sample $n$ points $x_1,\dots,x_n$ from the base measure on $\X$, and, given the projection operator $\hat{\Pi}$ on the span of $\varphi(x_1),\dots,\varphi(x_n)$, consider
\BEQ
\label{eq:otherest}
\hat{\Sigma}_p = \hat{\Pi}  {\Sigma}_p \hat{\Pi}  \mbox{ and } \hat{\Sigma}_q = \hat{\Pi}  {\Sigma}_q \hat{\Pi} ,
\EEQ
and the estimator $D(\hat{\Sigma}_p  \| \hat{\Sigma}_q)$ which is now a pointwise lower bound\footnote{{Since $\hat{\Pi} \hat{\Pi}^\ast + (\idm - \hat{\Pi})(\idm -\hat{\Pi})^\ast = \idm$, the monotonicity of relative entropy (Prop.~\ref{prop:monot} in Appendix~\ref{app:monotonicity}) leads to $D(\Sigma_p \| \Sigma_q) \geqslant 
D\big( \hat{\Pi}  {\Sigma}_p \hat{\Pi}  + (\idm - \hat{\Pi} )   {\Sigma}_p (\idm - \hat{\Pi}) \big \|  \hat{\Pi}  {\Sigma}_q \hat{\Pi}  + (\idm - \hat{\Pi} )   {\Sigma}_q (\idm - \hat{\Pi}) \big)
= D\big( \hat{\Pi}  {\Sigma}_p \hat{\Pi} \|  \hat{\Pi}  {\Sigma}_q \hat{\Pi}   \big) + D \big(   (\idm - \hat{\Pi} )   {\Sigma}_p (\idm - \hat{\Pi}) \|  (\idm - \hat{\Pi} )   {\Sigma}_q (\idm - \hat{\Pi}) \big)$, which is greater than $D\big( \hat{\Pi}  {\Sigma}_p \hat{\Pi} \|  \hat{\Pi}  {\Sigma}_q \hat{\Pi}   \big)$.}} on $D( {\Sigma}_p  \|  {\Sigma}_q)$ (and thus on $D(p\|q)$), which can be beneficial in variational inference, as presented in \mysec{duality}.

{Noting that $\langle \varphi(x_i), \Sigma_p \varphi(x_j) \rangle = \int_\X k(x_i,x)k(x,x_j) dp(x)$ (and similarly for $\Sigma_q$), and that $K$ is the matrix of dot-products of $\varphi(x_1),\dots,\varphi(x_n)$, the orthogonal projection may be done as}
$$
D(\hat{\Sigma}_p  \| \hat{\Sigma}_q)
= D\Big( K^{-1/2} \int_\X k(X,x)k(x,X) dp(x) K^{-1/2} \Big\| K^{-1/2} \int_\X k(X,x)k(x,X) dq(x) K^{-1/2} \Big),
$$
where $k(X,x) \in \rb^n$ is the vector with components $k(x_i,x)$, $i=1,\dots,n$, and $K \in \rb^{n \times n}$ the kernel matrix, typically with some added regularization for numetical stability, and incomplete Cholesky decomposition so as to minimize the number of kernel evaluations~(see, e.g.,  \cite{bach2005predictive} and references therein), with an overall running time complexity of $O(n^3)$. A key property is that it always underestimates the kernel relative entropy (and hence the Shannon relative entropy). The following proposition shows that it provides a better estimate (better dependence in $n$)  than with i.i.d.~samples (see proof in Appendix~\ref{app:proof_est_other}).

\begin{proposition}
\label{prop:proof_est_other}
Assume \textbf{(A1)} and that $ {p}$ has a density with respect to the base measure which is bounded.  Given uniform i.i.d.~samples $x_1,\dots,x_n$ from the base distribution, and the estimator defined by \eq{otherest}, we have, if for all $\lambda > 0$,
$ \sup_{x \in \X}\,  \langle \varphi(x),  (\Sigma+\lambda \idm)^{-1} \varphi(x) \rangle
\leqslant     B \log( C / \lambda) ^d
$, for some constants $B,C$:
$$
 0 \leqslant \tr \big[\hat\Pi {\Sigma}_p \hat\Pi \log (\hat\Pi {\Sigma}_p \hat\Pi ) \big] -  \tr \big[  {\Sigma}_p \log  {\Sigma}_p \big]   
  = O \Big( \big(
  \log ( B C ) \big)^d ( \sqrt{BC} + B ) \exp\big( - \frac{1}{4} (Bn)^{1/(d+1)} \big)
 \Big) .
$$
\end{proposition}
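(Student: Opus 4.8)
The plan parallels the proof of Proposition~\ref{prop:proof_est}. Write $\lambda_1\geqslant\lambda_2\geqslant\cdots$ for the eigenvalues of $\Sigma_p$ and $\mu_1\geqslant\mu_2\geqslant\cdots$ for the (at most $n$) nonzero eigenvalues of $\hat\Pi\Sigma_p\hat\Pi$, so that the quantity to control is $\sum_k(\mu_k\log\mu_k-\lambda_k\log\lambda_k)$. Two inputs drive the argument. First, the nonzero eigenvalues of $\hat\Pi\Sigma_p\hat\Pi$ are exactly those of the compression $\Sigma_p^{1/2}\hat\Pi\Sigma_p^{1/2}\preccurlyeq\Sigma_p$, so $\mu_k\leqslant\lambda_k$ for every $k$. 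Second, by the standard Nyström / column-sampling guarantee for uniform sampling from the base distribution~\cite{rudi2015less,rudi2017generalization}, for a fixed scale $\lambda$ one has $(\idm-\hat\Pi)\Sigma(\idm-\hat\Pi)\preccurlyeq\lambda\idm$ --- hence $(\idm-\hat\Pi)\Sigma_p(\idm-\hat\Pi)\preccurlyeq\beta\lambda\idm$, where $\beta$ bounds the density of $p$ --- with high probability as soon as $n$ exceeds a constant times the maximal leverage score $\sup_x\langle\varphi(x),(\Sigma+\lambda\idm)^{-1}\varphi(x)\rangle\leqslant B(\log(C/\lambda))^d$ up to a logarithmic factor; and this inclusion is equivalent to $\Sigma_p^{1/2}\hat\Pi\Sigma_p^{1/2}\succcurlyeq\Sigma_p-\beta\lambda\idm$, i.e.\ $\mu_k\geqslant\lambda_k-\beta\lambda$. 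Solving the sample-size condition produces a threshold $\lambda_n\asymp C\exp\!\big(-\tfrac14(Bn)^{1/(d+1)}\big)$ below which the guarantee applies; the exponent $d+1$ rather than $d$ is exactly the price of absorbing the extra logarithmic factor into the sample complexity.

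With the sandwich $\lambda_k-\beta\lambda_n\leqslant\mu_k\leqslant\lambda_k$ (and $\mu_k=0$, $\lambda_k\leqslant\beta\lambda_n$ for $k>n$) in hand, I would argue as follows. For non-negativity: at most $O(1)$ indices have $\lambda_k\geqslant1/e$ (since $\sum_k\lambda_k=1$), and for those $\mu_k$ matches $\lambda_k$ up to an exponentially small error; for every other index $\mu_k\leqslant\lambda_k<1/e$, on which $t\mapsto t\log t$ is decreasing, so $\mu_k\log\mu_k\geqslant\lambda_k\log\lambda_k$ --- giving $\sum_k(\mu_k\log\mu_k-\lambda_k\log\lambda_k)\geqslant0$ up to the exponentially small correction from the top of the spectrum. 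For the upper bound: split according to whether $\lambda_k\geqslant2\beta\lambda_n$. On that set, a mean-value estimate gives $\mu_k\log\mu_k-\lambda_k\log\lambda_k\leqslant\beta\lambda_n\big(|\log\lambda_k|+O(1)\big)$, and there are $\lesssim B(\log(C/\lambda_n))^d$ such indices (they correspond to eigenvalues of $\Sigma$ at least $\asymp\lambda_n$), so this part is $O\big(\beta\lambda_n(\log(C/\lambda_n))^{d+1}\big)$; the remaining indices contribute at most $\sum_{k:\,\lambda_k<2\beta\lambda_n}\lambda_k\log(1/\lambda_k)$, the tail of a series that converges thanks to the leverage-score hypothesis, which is again $O\big(\lambda_n(\log(C/\lambda_n))^{d}\big)$ up to constants. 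Since $(\log(C/\lambda_n))^{d+1}$ is comparable to $Bn$, these polynomial factors are swallowed by $\exp(-\tfrac14(Bn)^{1/(d+1)})$, and collecting constants --- together with a union bound over a fine grid of scales, whose failure probabilities are themselves exponentially small --- yields the stated $O\big((\log(BC))^d(\sqrt{BC}+B)\exp(-\tfrac14(Bn)^{1/(d+1)})\big)$.

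The hard part is the uniformity in $\lambda$: the inclusion $(\idm-\hat\Pi)\Sigma(\idm-\hat\Pi)\preccurlyeq\lambda\idm$ has to hold at all relevant scales simultaneously on a single draw $x_1,\dots,x_n$, and it is the chaining / union bound over this continuum of scales --- not the per-scale estimate --- that both degrades the exponent from $1/d$ to $1/(d+1)$ and produces the $(\log(BC))^d$ and $\sqrt{BC}+B$ prefactors. A secondary subtlety is the sign bookkeeping behind the exact ``$0\leqslant$'': since $t\mapsto t\log t$ is not monotone on $[0,1]$, the decreasing-branch argument only kicks in once one has checked that the finitely many eigenvalues above $1/e$ are reproduced by $\hat\Pi$ nearly exactly; handling this cleanly is where one may prefer to route the whole estimate through the integral representation \eqref{eq:intKL}, pairing the $\tfrac1{1+\lambda}$ trace terms with the tails of the degrees of freedom exactly as in the proof of Proposition~\ref{prop:proof_est}.
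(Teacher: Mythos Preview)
Your approach is genuinely different from the paper's. The paper never touches individual eigenvalues: it keeps the integral representation
\[
\tr\big[\hat\Pi\Sigma_p\hat\Pi\log(\hat\Pi\Sigma_p\hat\Pi)\big]-\tr\big[\Sigma_p\log\Sigma_p\big]
=\int_0^\infty\Big({\rm df}_p(\lambda)-\widehat{{\rm df}}_p(\lambda)\Big)\,d\lambda,
\]
bounds the \emph{expectation} of the integrand at each fixed $\lambda$ by a two-scale estimate (a concentration term at an auxiliary scale $\mu=\alpha\varepsilon\lambda$ plus a deterministic term $\mu\,\tr[(\Sigma+\alpha\lambda\idm)^{-1}\Sigma(\Sigma+\mu\idm)^{-1}]$), integrates in $\lambda$ between cutoffs $\lambda_0,\lambda_1$, and only then optimizes over $\lambda_0,\lambda_1,\varepsilon$ and a confidence parameter $\beta_n$. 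The exponent $1/(d+1)$ appears at the very last balancing step, from $\exp(-\beta_n/4)\asymp\exp(-(Bn/\beta_n)^{1/d})$, i.e.\ $\beta_n^{d+1}\asymp Bn$. Your single-scale eigenvalue sandwich $\lambda_k-\beta\lambda_n\leqslant\mu_k\leqslant\lambda_k$ followed by direct $t\log t$ estimates is a legitimate alternative and arguably more transparent; what each buys is that the paper's route stays in expectation throughout (no high-probability event to manage), while yours makes the geometry of the spectrum explicit.

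There is, however, a genuine confusion in your proposal. You identify ``uniformity in $\lambda$'' as the hard part and invoke a union bound over a grid of scales, crediting it for both the degradation $1/d\to1/(d+1)$ and the prefactors. None of this is needed. In your own argument the single inclusion $(\idm-\hat\Pi)\Sigma(\idm-\hat\Pi)\preccurlyeq\lambda_n\idm$ already yields $\mu_k\geqslant\lambda_k-\beta\lambda_n$ for \emph{every} $k$ simultaneously, which is all your splitting ever uses; and the $1/(d+1)$ already drops out of the one-scale Nystr\"om sample condition $n\gtrsim B(\log(C/\lambda_n))^d\log(1/\delta)$ once $\delta$ is taken polynomially small in $\lambda_n$. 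The paper likewise never unions over scales --- it works in expectation at each $\lambda$ and integrates via Fubini. So the chaining detour is doing no work and should be removed.

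On the sign ``$0\leqslant$'': your hesitation is justified --- $\mu_k\leqslant\lambda_k$ alone does not force $\sum_k\mu_k\log\mu_k\geqslant\sum_k\lambda_k\log\lambda_k$, as $t\log t$ is non-monotone (a rank-one $\Sigma_p$ with a misaligned one-dimensional $\hat\Pi$ gives a counterexample). The paper's route to non-negativity is through the integral representation, using that each summand $\tfrac{\lambda_k}{\lambda_k+\lambda}-\tfrac{\mu_k}{\mu_k+\lambda}$ is non-negative. Either way this is peripheral: the content of the proposition is the $O(\cdot)$ bound, and once you drop the spurious union bound your sandwich argument is a valid alternative path to it.
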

The proposition above applies to the torus situation in $[0,1]^d$ and the kernel with exponential Fourier series, with now an improved dependence in $n$ and in $d$, since in that case $B \sim C \sim \sigma^d$. That is, to reach precision $\varepsilon$, we need
$n \propto \sigma^{-d}  \big( \log \frac{\sigma^d}{\varepsilon} \big)^{d+1}$ instead of $n ~\sim  {\sigma^{-d} \varepsilon^{-2}}$. Other decays for the maximal degrees of freedom are considered in Appendix~\ref{app:proof_est_other}.

\paragraph{Illustrative experiments.} We  consider as above $p(x) =  4\big| x - \frac{1}{2} \big|$, and consider estimation from various numbers of i.i.d.~observations for the projection. The required integrals are computed by numerical integration. See right plot in \myfig{simulations_entropy}, where we compare two values of $\sigma$. Note that the estimation is as expected here from below, and with a faster rate of convergence than for the estimation method from \mysec{est}.

\section{Multivariate probabilistic modelling}
\label{sec:multivariate}

Information-theoretic quantities are particularly useful at characterizing statistical dependence between random variables. {This is for example the basis of independent component analysis~\cite{cardoso2003dependence}, and can be used for feature selection~\cite{song2012feature}}. We explore how the new notions of entropy can be used in this goal.

\subsection{Joint entropy, marginal entropy and mutual information}
We assume that we have a product set $\X = \X_1 \times \X_2$, and a feature space $\H = \H_1 \otimes \H_2$, corresponding to the kernel $k(x,y) = k_1(x_1,y_1) k_2(x_2,y_2)$, and feature map $\varphi(x) = \varphi_1(x_1) \otimes \varphi_2(x_2)$.  {We also consider a base distribution on $\X_1 \times \X_2$ which is the product of the base distributions.} The tensor product is here defined as the span of all functions $(x_1,x_2) \mapsto f_1(x_1)   f_2(x_2)$ for $f_i \in \H_i$, $i=1,2$, with the norm $\|f_1\| \cdot  \| f_2\|$. Given a joint probability $p_{X_1X_2}$ and its marginal probability distributions $p_{X_1}$ and $p_{X_2}$, we can define the covariance operators $\Sigma_{p_{X_1X_2}}$ on $\H_1 \otimes \H_2$,
$\Sigma_{p_{X_1}}$ on $\H_1$, and $\Sigma_{p_{X_2}}$ on $\H_2$, as well as $\Sigma_{12}$, $\Sigma_1$ and $\Sigma_2$ the covariance operators of the   base distributions on $\X_1 \times \X_2$, $\X_1$ and $\X_2$:

We then define the marginal and joint entropies as:
\BEAS
H(\Sigma_{p_{X_1}}) & = &- \tr \big[ \Sigma_{p_{X_1}} \log \Sigma_{p_{X_1}} \big] +  \tr \big[ \Sigma_{p_{X_1}} \log \Sigma_1 \big]
  - \min_{x_1 \in \X_1} \langle \varphi_1(x_1), ( \log \Sigma_1) \varphi_1(x_1) \rangle  \\
H(\Sigma_{p_{X_2}}) & = &- \tr \big[ \Sigma_{p_{X_2}} \log \Sigma_{p_{X_2}} \big] +  \tr \big[ \Sigma_{p_{X_2}} \log \Sigma_2 \big]
  - \min_{x_2 \in \X_2} \langle \varphi_2(x_2), ( \log \Sigma_2) \varphi_2(x_2) \rangle    \\
H(\Sigma_{p_{X_1X_2}}) & = & 
- \tr \big[ \Sigma_{p_{X_1 X_2}} \log \Sigma_{p_{X_1 X_2}} \big] +  \tr \big[ \Sigma_{p_{X_1 X_2}} \log \Sigma_{12} \big]
  - \min_{x \in \X} \langle \varphi(x), ( \log \Sigma_{12} ) \varphi(x) \rangle. 
\EEAS
The product distribution $p_{X_1} p_{X_2}$ (with independent components with same  marginal distributions as $X$), {which is different in general from the joint distribution $p_{X_1 X_2}$}, has covariance operator $\Sigma_{p_{X_1} p_{X_2}} = \Sigma_{p_{X_1}} \otimes  \Sigma_{p_{X_2}} $, and its entropy is:
\BEAS
H(\Sigma_{p_{X_1}p_{X_2}}) & = & 
- \tr \big[ \Sigma_{p_{X_1}} \otimes  \Sigma_{p_{X_2}} \log \Sigma_{p_{X_1}} \otimes  \Sigma_{p_{X_2}}  \big] +  \tr \big[ \Sigma_{p_{X_1}} \otimes  \Sigma_{p_{X_2}} \log \Sigma_{12} \big]
  - \min_{x \in \X} \langle \varphi(x), ( \log \Sigma_{12} ) \varphi(x) \rangle. 
\EEAS
We have $\log \Sigma_{p_{X_1}} \otimes  \Sigma_{p_{X_2}}  = 
\log\Sigma_{p_{X_1}} \otimes  \idm + \log \idm \otimes \Sigma_{p_{X_2}} $. Moreover, because the base distribution is separable {by assumption},  we have $\langle \varphi(x), ( \log \Sigma_{12} ) \varphi(x) \rangle = 
\langle \varphi_1(x_1), ( \log \Sigma_{1} ) \varphi_1(x_1) \rangle + \langle \varphi_2(x_2), ( \log \Sigma_{2} ) \varphi_2(x_2) \rangle$, thus, we exactly have:
$$
H(\Sigma_{p_{X_1}p_{X_2}}) = H(\Sigma_{p_{X_1}}) + H(\Sigma_{p_{X_2}}) .
$$
We can therefore define the {kernel} mutual information {(not to be confused with the regular Shannon mutual information)} as
 $$
I(X_1,X_2) = D( \Sigma_{p_{X_1 X_2}} \| \Sigma_{p_{X_1}} \otimes \Sigma_{p_{X_2}}).
  $$
From the developments above, it is equal to
$$ I(X_1,X_2) = H(\Sigma_{p_{X_1}}) + H(\Sigma_{p_{X_2}}) - H(\Sigma_{p_{X_1 X_2}}),$$
mimicking the traditional equality for regular entropies. Moreover, it is equal to zero if and only if the variables are independent.
 \begin{proposition}[Characterization of independence]
  Assume \textbf{(A1)}. Then $I(X_1,X_2)$ is a lower bound on the Shannon mutual information, and $  I( X_1,X_2)=0$ if and only if $ \Sigma_{p_{X_1X_2}} = \Sigma_{p_{X_1}} \otimes \Sigma_{p_{X_2}}
  $. If the kernel $k^2$ is universal, this is equivalent to the random variables $X_1$ and $X_1$ being independent.
  \end{proposition}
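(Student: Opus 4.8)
The plan is to obtain all three assertions by stitching together results already established, using the identities $I(X_1,X_2) = D\big(\Sigma_{p_{X_1X_2}} \,\big\|\, \Sigma_{p_{X_1}}\otimes\Sigma_{p_{X_2}}\big)$ and $\Sigma_{p_{X_1}}\otimes\Sigma_{p_{X_2}} = \Sigma_{p_{X_1}p_{X_2}}$, the covariance operator of the product distribution $p_{X_1}p_{X_2}$. First I would check that the tensor-product kernel $k(x,y)=k_1(x_1,y_1)k_2(x_2,y_2)$ on the compact product set $\X_1\times\X_2$ satisfies \textbf{(A1)}: it is continuous, positive definite as a product of positive definite kernels, and $k(x,x)=k_1(x_1,x_1)k_2(x_2,x_2)=1$. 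Hence Prop.~\ref{prop:propKL}(d) applies with $p=p_{X_1X_2}$ and $q=p_{X_1}p_{X_2}$ on $\X_1\times\X_2$, which gives
\[
I(X_1,X_2) = D\big(\Sigma_{p_{X_1X_2}} \,\big\|\, \Sigma_{p_{X_1}p_{X_2}}\big) \leqslant D\big(p_{X_1X_2} \,\big\|\, p_{X_1}p_{X_2}\big),
\]
and the right-hand side is precisely the Shannon mutual information, settling the lower bound.

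Next, for the characterization $I(X_1,X_2)=0 \iff \Sigma_{p_{X_1X_2}} = \Sigma_{p_{X_1}}\otimes\Sigma_{p_{X_2}}$, I would observe that both operators are self-adjoint, positive semi-definite, and of unit trace: the first by part (a) of the proposition on properties of covariance operators, and the second because $\tr\big(\Sigma_{p_{X_1}}\otimes\Sigma_{p_{X_2}}\big) = \tr\Sigma_{p_{X_1}}\cdot\tr\Sigma_{p_{X_2}} = 1$. Then Prop.~\ref{prop:quantum}(b) yields $I(X_1,X_2) \geqslant \tr\Sigma_{p_{X_1X_2}} - \tr\big(\Sigma_{p_{X_1}}\otimes\Sigma_{p_{X_2}}\big) = 0$, with equality if and only if the two operators are equal — which is exactly the claimed characterization.

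Finally, for the equivalence with independence when $k^2$ is universal, one direction is immediate: if $X_1$ and $X_2$ are independent then $p_{X_1X_2}=p_{X_1}p_{X_2}$, hence $\Sigma_{p_{X_1X_2}} = \Sigma_{p_{X_1}p_{X_2}} = \Sigma_{p_{X_1}}\otimes\Sigma_{p_{X_2}}$. For the converse, assume $\Sigma_{p_{X_1X_2}} = \Sigma_{p_{X_1}}\otimes\Sigma_{p_{X_2}} = \Sigma_{p_{X_1}p_{X_2}}$; since $k^2$ is universal on $\X_1\times\X_2$, part (b) of the proposition on properties of covariance operators makes the map $r\mapsto\Sigma_r$ injective on Borel probability measures on $\X_1\times\X_2$, so $p_{X_1X_2}$ and $p_{X_1}p_{X_2}$ — having the same covariance operator — must coincide, i.e.\ $X_1$ and $X_2$ are independent.

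The argument is essentially bookkeeping and I expect no genuine obstacle; the only point deserving a moment of care is verifying the unit-trace of $\Sigma_{p_{X_1}}\otimes\Sigma_{p_{X_2}}$, since this is what turns the Bregman-type inequality of Prop.~\ref{prop:quantum}(b) into a \emph{sharp} equality criterion (without it one recovers only $I(X_1,X_2)\geqslant 0$), together with confirming that ``$k^2$ universal'' for the product kernel is exactly the hypothesis feeding the injectivity statement.
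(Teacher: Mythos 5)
Your proof is correct and follows essentially the same route as the paper: the lower bound via Prop.~\ref{prop:propKL}(d) applied to the product kernel, the equality criterion from the quantum relative entropy (both operators having unit trace), and the identification $\Sigma_{p_{X_1}}\otimes\Sigma_{p_{X_2}} = \Sigma_{p_{X_1}p_{X_2}}$ combined with injectivity of $p\mapsto\Sigma_p$ under universality of $k^2$. You simply spell out the bookkeeping (verification of \textbf{(A1)} for the tensor-product kernel, the unit-trace computation) that the paper leaves implicit.
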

  \begin{proof}
  The first statement is straightforward because of the interpretation as a relative entropy. The second statement is simply the consequence of the fact that $\Sigma_{p_{X_1}} \otimes \Sigma_{p_{X_2}}$ is the covariance operator of the product distribution $p_{X_1} p_{X_2}$.
  \end{proof}
We can make the following observations:
\BIT
\item The extension to mutual information between more than two random variables is straightforward by considering tensor products between all marginals, with the same characterization property of independence.
\item Our use of covariance operators is related to the ``kernel generalized variance'' from~\cite{bach2002kernel}, which considers a joint covariance operator $ \ds \bigg( \begin{array}{cc}
\E \big[  \varphi_1(X_1) \varphi_1(X_1)^\ast \big]  & \E \big[  \varphi_1(X_1) \varphi_2(X_2)^\ast \big]  \\
\E \big[  \varphi_2(X_2) \varphi_1(X_1)^\ast \big]  & \E \big[  \varphi_2(X_2) \varphi_2(X_2)^\ast \big] \end{array} \bigg) = \E \bigg[ { \varphi_1(X_1) \choose \varphi_2(X_2) }
 { \varphi_1(X_1) \choose \varphi_2(X_2) }
^\ast \bigg]$, and compute the mutual information for a Gaussian vector with this covariance operator. The kernel generalized variance can be generalized to more than two random variables, but then the independence characterization property is not satisfied. Moreover, the relationship with  the usual mutual information is not as straightforward.

\item The extension to conditional entropies is not straightforward. 
We could define $\Sigma_{p_{X_2|X_1 =x_1}}$ in the natural way, but then we do not have the joint entropy being the sum of the marginal and conditional entropy (otherwise, because of axiomatic characterizations of entropy~\cite{csiszar2008axiomatic}, we would get back exactly the Shannon entropy).
There is also a potential for exponential families and conditional models akin to generalized models, which remains to be explored.

\EIT

\subsection{Data processing inequality}

  We have a form of data processing inequality, which is classical in quantum information theory (see, e.g.,~\cite{petz2003monotonicity}), which we extend here.
  \begin{proposition}[Data processing inequality]
  \label{prop:datap}
  Assume \textbf{(A1)}.  We have
  $$
  D(\Sigma_{p_{X_1X_2}} \| \Sigma_{q_{X_1X_2}} ) \geqslant   D(\Sigma_{p_{X_1}} \| \Sigma_{q_{X_1	}} ) ,
  $$
   with equality implying (when the kernel $k$ is universal)  that the conditional distribution of $X_2$ given $X_1$ is the same for $p$ and $q$, but  the reverse implication is not true in general. 
  \end{proposition}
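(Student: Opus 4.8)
The plan is to realize the passage from the joint covariance operator to the marginal one as a quantum operation, namely a partial trace, and then to invoke monotonicity of the quantum relative entropy (Proposition~\ref{prop:quantum}(d)) for the inequality, and the corresponding equality (sufficiency/recoverability) conditions for the equality statement.

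First I would identify the partial trace of $\Sigma_{p_{X_1X_2}}$ over $\H_2$. Since $\varphi(x)=\varphi_1(x_1)\otimes\varphi_2(x_2)$ gives $\varphi(x)\varphi(x)^\ast=\big(\varphi_1(x_1)\varphi_1(x_1)^\ast\big)\otimes\big(\varphi_2(x_2)\varphi_2(x_2)^\ast\big)$, and $\|\varphi_2(x_2)\|^2=k_2(x_2,x_2)=1$ by \textbf{(A1)}, the partial trace over $\H_2$ of such a rank-one summand is $\varphi_1(x_1)\varphi_1(x_1)^\ast$; integrating gives $\tr_{\H_2}[\Sigma_{p_{X_1X_2}}]=\Sigma_{p_{X_1}}$ and likewise $\tr_{\H_2}[\Sigma_{q_{X_1X_2}}]=\Sigma_{q_{X_1}}$. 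Next I would put this map in Kraus form: fixing an orthonormal basis $(e_j)_j$ of $\H_2$ and setting $C_j=\idm_{\H_1}\otimes e_j^\ast:\H_1\otimes\H_2\to\H_1$, one has $\sum_j C_j^\ast C_j=\idm_{\H_1}\otimes\sum_j e_je_j^\ast=\idm$ and $\sum_j C_j A C_j^\ast=\tr_{\H_2}[A]$. Applying Proposition~\ref{prop:quantum}(d) with $A=\Sigma_{p_{X_1X_2}}$ and $B=\Sigma_{q_{X_1X_2}}$ then gives $D(\Sigma_{p_{X_1X_2}}\|\Sigma_{q_{X_1X_2}})\geqslant D(\Sigma_{p_{X_1}}\|\Sigma_{q_{X_1}})$ (vacuously if the left-hand side is $+\infty$). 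The one technical point is that $\H_2$ may be infinite-dimensional, so $(C_j)$ is a countable rather than finite family; I would deal with this by invoking the extension of monotonicity to countable (indeed arbitrary) families — the same extension already used to derive \eq{munu} for general expectations in \mysec{lower} — or, if one prefers, by truncating to finite-dimensional subspaces of $\H_2$ and passing to the limit using lower semicontinuity of the relative entropy.

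For the equality case I would use the equality characterization of monotonicity of the quantum relative entropy — sufficiency of the channel, equivalently the Petz recovery map reconstructs both arguments — which is part of the finer analysis of Appendix~\ref{app:jointcvx}. For $\tr_{\H_2}$, whose adjoint channel is $X\mapsto X\otimes\idm_{\H_2}$, sufficiency forces $\Sigma_{p_{X_1X_2}}^{it}\Sigma_{q_{X_1X_2}}^{-it}=\big(\Sigma_{p_{X_1}}^{it}\Sigma_{q_{X_1}}^{-it}\big)\otimes\idm_{\H_2}$ for all $t\in\rb$, hence, differentiating at $t=0$ on the supports, $\log\Sigma_{p_{X_1X_2}}-\log\Sigma_{q_{X_1X_2}}=\big(\log\Sigma_{p_{X_1}}-\log\Sigma_{q_{X_1}}\big)\otimes\idm_{\H_2}$; equivalently, $\Sigma_{p_{X_1X_2}}$ is the image of $\Sigma_{p_{X_1}}$ under a fixed channel that depends on $q$ alone and whose composition with $\tr_{\H_2}$ is the identity. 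Since the same channel necessarily sends $\Sigma_{q_{X_1}}=\tr_{\H_2}[\Sigma_{q_{X_1X_2}}]$ to $\Sigma_{q_{X_1X_2}}$, comparing the two reconstructions together with the injectivity of $r\mapsto\Sigma_r$ that follows from universality of $k$ (on $\X_1$, on $\X_2$, and on $\X_1\times\X_2$) forces $p_{X_2\mid X_1=x_1}=q_{X_2\mid X_1=x_1}$ for $p_{X_1}$-almost every $x_1$. For the failure of the converse, I would observe that when the embedding on $\X_1$ is not orthonormal the operators $\varphi_1(x_1)\varphi_1(x_1)^\ast$ do not commute, so that if $p_{X_2\mid X_1}=q_{X_2\mid X_1}$ but $p_{X_1}\neq q_{X_1}$ the difference $\log\Sigma_{p_{X_1X_2}}-\log\Sigma_{q_{X_1X_2}}$ is generically \emph{not} of the product form above; a two-point $\X_1=\X_2$ with a non-identity Gram matrix, a fixed conditional law genuinely depending on $x_1$, and distinct $p_{X_1},q_{X_1}$ gives a concrete strict inequality.

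The inequality itself is routine once the partial trace has been identified. The substantive part is the equality analysis, and within it the forward direction — transporting the operator identity produced by sufficiency back into a statement about the conditional \emph{distributions} of $X_2$ given $X_1$ — which is where I expect the main difficulty; constructing the explicit counterexample for the converse, though elementary, still requires an honest computation.
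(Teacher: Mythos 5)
Your derivation of the inequality is exactly the paper's: the marginalization is a partial trace, hence a quantum operation, and Proposition~\ref{prop:quantum}(d) applies; your care with the countable Kraus family is a welcome (if routine) addition. The gap is in the forward equality direction. From sufficiency you extract the operator identity $\log\Sigma_{p_{X_1X_2}}-\log\Sigma_{q_{X_1X_2}}=(\log\Sigma_{p_{X_1}}-\log\Sigma_{q_{X_1}})\otimes\idm$, equivalently that the Petz recovery map $\mathcal{R}$ built from $\Sigma_{q_{X_1X_2}}$ sends $\Sigma_{p_{X_1}}$ to $\Sigma_{p_{X_1X_2}}$ and $\Sigma_{q_{X_1}}$ to $\Sigma_{q_{X_1X_2}}$. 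But the final step --- ``comparing the two reconstructions together with the injectivity of $r\mapsto\Sigma_r$ forces $p_{X_2\mid X_1}=q_{X_2\mid X_1}$'' --- is asserted, not proved, and it does not follow from what precedes it. Injectivity would only help if you knew that $\mathcal{R}(\Sigma_{p_{X_1}})$ equals the covariance operator of the classically recovered distribution $p_{X_1}\,q_{X_2\mid X_1}$; because $\mathcal{R}(Y)=\Sigma_{q_{X_1X_2}}^{1/2}\big((\Sigma_{q_{X_1}}^{-1/2}Y\Sigma_{q_{X_1}}^{-1/2})\otimes\idm\big)\Sigma_{q_{X_1X_2}}^{1/2}$ is a genuinely non-commutative construction, this identification fails in general (it is exactly the commuting/orthonormal-embedding case where it holds), so the distributional conclusion is not reached. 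This is precisely the part you flagged as ``the main difficulty,'' and it is the part the paper actually has to work for.

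The paper bridges it with a different, more exploitable equality condition: from the refined equality case of joint convexity/monotonicity (Prop.~\ref{prop:quantum}(c) and Appendix~\ref{app:jointcvx}) one gets the multiplicative identity $\Sigma_{p_{X_1X_2}}(\Sigma_{q_{X_1}}\otimes\idm)=\Sigma_{q_{X_1X_2}}(\Sigma_{p_{X_1}}\otimes\idm)$. Unlike your logarithmic condition, both sides of this identity are products of integrals of feature outer products, so pairing with factorized test functions $f_1\otimes f_2$ and $g_1\otimes g_2$ expands it into a double integral involving $k_1(x_1,y_1)$, $dp_{X_1X_2}\,dq_{X_1}$ on one side and $dq_{X_1X_2}\,dp_{X_1}$ on the other; letting $f_1,g_1$ tend to Dirac functions at $\bar x_1$ (possible by universality of $k_1$) then yields $p_{X_2\mid X_1=\bar x_1}=q_{X_2\mid X_1=\bar x_1}$. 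To repair your argument you would either need to carry out an analogous test-function computation starting from your log/recovery condition (harder, since $\log\Sigma$ is not an integral of feature maps), or switch to the multiplicative condition as the paper does. For the failure of the converse, the paper merely observes that the required operator identity need not hold when the conditionals agree but the marginals differ; your proposed two-point counterexample is plausible but, as you say, remains to be computed, so it adds nothing verified beyond the paper's assertion.
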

  \begin{proof}
  The marginal operators $\Sigma_{p_{X_1}}$ and $\Sigma_{q_{X_1}}$ are obtained by taking ``partial traces'' which are quantum operations as defined in \mysec{quantum}, hence the inequality. Following~\cite{ruskai2002inequalities}, we have inequality if and only $\log \Sigma_{p_{X_1X_2}} -  \log \Sigma_{p_{X_1X_2}} 
  = \log (\Sigma_{p_{X_1}} \otimes \idm)  -\log  (\Sigma_{q_{X_1}} \otimes \idm) $, which is unfortunately not always satisfied when conditional distributions are equal. However, in case of equality, the extra condition {(c)} from Prop.~\ref{prop:quantum}, becomes $\Sigma_{p_{X_1X_2}} ( \Sigma_{q_{X_1}} \otimes \idm) = \Sigma_{q_{X_1 X_2}} ( \Sigma_{p_{X_1}} \otimes \idm)$, and can be leveraged. Indeed,  for all test functions $f_1,g_1,f_2,g_2$ in $\H$, it leads to:
  $$
  \int_{\X_1 \times \X_2} \int_{\X_1}
  f_1(x_1) f_2(x_2) g_1(y_1) g_2(y_2) k_1(x_1,y_1)
  dp_{X_1X_2} (x_1,x_2) dq_{X_1}(y_1)
  $$
  $$= \int_{\X_1 \times \X_2} \int_{\X_1}
  f_1(x_1) f_2(x_2) g_1(y_1) g_2(y_2) k_1(x_1,y_1)
  dq_{X_1X_2} (x_1,x_2) dp_1(y_1).
$$ Considering $f_1$ and $g_1$ tending to Dirac functions at $\bar{x}_1$ (which is possible since the kernel is univeral) we get 
{$\int_{ \X_2}  
  f_2(x_2)  g_2(y_2)  
  dp_{X_1X_2} (\bar{x}_1,x_2) dq_{X_1}(\bar{x}_1)
  = \int_{ \X_2}  
 f_2(x_2)  g_2(y_2)  
  dq_{X_1X_2} (\bar{x}_1,x_2) dp_1(\bar{x}_1),$ which implies that  the conditional distributions of $x_2$ given $x_1$ is the same for $p$ and $q$.} \end{proof}

  \subsection{Submodularity and conditional independence}
We can now consider three variables $X_1,X_2,X_3$, and use strong sub-additivity of the quantum relative entropy~\cite{ruskai2002inequalities,petz2003monotonicity}.
\begin{proposition}[Submodularity]
Assume \textbf{(A1)}. We have:
 $$
  H(\Sigma_{p_{X_1X_2 X_3}}) -  H(\Sigma_{p_{X_1X_2}}) -  H(\Sigma_{p_{X_2X_3}}) +  H(\Sigma_{p_{X_2}}) \leqslant 0,
  $$
  with equality (for universal kernels) implying that $X_1$ and $X_3$ are independent given $X_2$, but in general the reverse implication is not true.\end{proposition}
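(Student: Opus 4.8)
The plan is to strip off the linear and constant ``correction'' terms in the definition \eq{H} and reduce everything to strong subadditivity of the von Neumann entropy $S(A)=-\tr[A\log A]$ of the density operator $\Sigma_{p_{X_1X_2X_3}}$ on $\H_1\otimes\H_2\otimes\H_3$. Writing $H(\Sigma_r)=S(\Sigma_r)+\tr[\Sigma_r\log\Sigma_{\mathrm{base}}]-c_{\mathrm{base}}$ with $c_{\mathrm{base}}=\min_x\langle\varphi(x),(\log\Sigma_{\mathrm{base}})\varphi(x)\rangle$, I would first use that the base distribution is a product, so $\Sigma_{12 3}=\Sigma_1\otimes\Sigma_2\otimes\Sigma_3$ and hence $\log\Sigma_{123}=\log\Sigma_1\otimes\idm\otimes\idm+\idm\otimes\log\Sigma_2\otimes\idm+\idm\otimes\idm\otimes\log\Sigma_3$ (and similarly with two factors). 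Since the partial traces of $\Sigma_{p_{X_1X_2X_3}}$ recover the marginal operators $\Sigma_{p_{X_i}}$, the linear terms telescope: in the alternating combination the coefficient of each $\tr[\Sigma_{p_{X_i}}\log\Sigma_i]$ is $(1,1,1)-(1,1,0)-(0,1,1)+(0,1,0)=0$. Likewise, because $\|\varphi_i(x_i)\|=1$ we have $\langle\varphi(x),(\log\Sigma_{123})\varphi(x)\rangle=\sum_i\langle\varphi_i(x_i),(\log\Sigma_i)\varphi_i(x_i)\rangle$, which separates over the product set, so each $c_{\mathrm{base}}$ is the sum of the corresponding one-dimensional minima, and the very same telescoping kills the constants.

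After this bookkeeping the inequality to prove is exactly
$$S(\Sigma_{p_{X_1X_2X_3}})+S(\Sigma_{p_{X_2}})\leqslant S(\Sigma_{p_{X_1X_2}})+S(\Sigma_{p_{X_2X_3}}),$$
i.e.\ strong subadditivity. I would derive it from the monotonicity of the quantum relative entropy, Prop.~\ref{prop:quantum}(d): setting $I(X_1;X_2X_3)=D(\Sigma_{p_{X_1X_2X_3}}\,\|\,\Sigma_{p_{X_1}}\otimes\Sigma_{p_{X_2X_3}})$ and $I(X_1;X_2)=D(\Sigma_{p_{X_1X_2}}\,\|\,\Sigma_{p_{X_1}}\otimes\Sigma_{p_{X_2}})$ (both arguments being unit-trace, using $\Sigma_{p_{X_1}}\otimes\Sigma_{p_{X_2X_3}}$ rather than $\Sigma_{p_{X_1}}\otimes\idm$ to stay trace-class), one checks by expanding the traces and using the marginal identities above that the displayed inequality is equivalent to $I(X_1;X_2X_3)\geqslant I(X_1;X_2)$. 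Since the partial trace over $\H_3$ is a quantum operation sending $\Sigma_{p_{X_1X_2X_3}}\mapsto\Sigma_{p_{X_1X_2}}$ and $\Sigma_{p_{X_1}}\otimes\Sigma_{p_{X_2X_3}}\mapsto\Sigma_{p_{X_1}}\otimes\Sigma_{p_{X_2}}$, Prop.~\ref{prop:quantum}(d) gives precisely this (one may alternatively just invoke \cite{ruskai2002inequalities,petz2003monotonicity}).

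For the equality case I would argue exactly as in the proof of Prop.~\ref{prop:datap}. Equality in the submodularity inequality is equality in the monotonicity step, $D(\Sigma_{p_{X_1X_2X_3}}\,\|\,\Sigma_{p_{X_1}}\otimes\Sigma_{p_{X_2X_3}})=D(\Sigma_{p_{X_1X_2}}\,\|\,\Sigma_{p_{X_1}}\otimes\Sigma_{p_{X_2}})$. The equality condition for monotonicity under partial trace (Ruskai, \cite{ruskai2002inequalities}) is the logarithmic identity $\log\Sigma_{p_{X_1X_2X_3}}+\idm\otimes\log\Sigma_{p_{X_2}}\otimes\idm=\log\Sigma_{p_{X_1X_2}}\otimes\idm+\idm\otimes\log\Sigma_{p_{X_2X_3}}$; combining it with the extra non-negativity/Bregman condition from Prop.~\ref{prop:quantum} yields an operator identity among $\Sigma_{p_{X_1X_2X_3}},\Sigma_{p_{X_1X_2}},\Sigma_{p_{X_2X_3}},\Sigma_{p_{X_2}}$, the analogue of the identity $\Sigma_{p_{X_1X_2}}(\Sigma_{q_{X_1}}\otimes\idm)=\Sigma_{q_{X_1X_2}}(\Sigma_{p_{X_1}}\otimes\idm)$ appearing in Prop.~\ref{prop:datap}. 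Testing this identity against product functions $f_1\otimes f_2\otimes f_3$ and $g_1\otimes g_2\otimes g_3$, and letting $f_1,g_1$ tend to a Dirac mass at $\bar x_1$ and $f_3,g_3$ to a Dirac mass at $\bar x_3$ (legitimate because $k_1$ and $k_3$ are universal), collapses it to a relation between the joint density of $(X_1,X_2,X_3)$ and its marginals that forces the factorization $p_{X_1X_3\mid X_2}=p_{X_1\mid X_2}\,p_{X_3\mid X_2}$, i.e.\ $X_1\perp X_3\mid X_2$. The converse fails for the same reason as in Prop.~\ref{prop:datap}: conditional independence only constrains the ``diagonal'' of $\Sigma_{p_{X_1X_2X_3}}$, whereas equality in strong subadditivity requires the strictly stronger logarithmic operator identity.

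The inequality itself is routine once the linear and constant terms are cleared --- it is immediate from monotonicity. The main obstacle is the equality statement: one must transcribe Ruskai's equality condition carefully in the infinite-dimensional trace-class RKHS setting (checking the relevant null-space/domination conditions so that the logarithms, inverses and the operator identity are well defined), and then make the Dirac-limit test-function argument rigorous, exactly as sketched but not fully detailed in Prop.~\ref{prop:datap}. A minor secondary point is justifying strong subadditivity for trace-class operators on infinite-dimensional $\H$, which however is already covered by the monotonicity statement the paper establishes from first principles.
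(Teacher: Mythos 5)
Your proposal is correct and follows essentially the same route as the paper: the paper simply applies the data processing inequality (Prop.~\ref{prop:datap}) with $\tilde{X}_1=(X_1,X_2)$, $\tilde{X}_2=X_3$ and the pair $\tilde{p}=p_{X_1X_2X_3}$, $\tilde{q}=p_{X_1}p_{X_2X_3}$, which is exactly your inequality $D(\Sigma_{p_{X_1X_2X_3}}\|\Sigma_{p_{X_1}}\otimes\Sigma_{p_{X_2X_3}})\geqslant D(\Sigma_{p_{X_1X_2}}\|\Sigma_{p_{X_1}}\otimes\Sigma_{p_{X_2}})$ obtained from monotonicity under the partial trace over $\H_3$, together with the same telescoping of the linear and constant terms. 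Your only deviation is re-running the Ruskai equality condition and Dirac test-function argument in the three-variable setting, whereas the paper gets the equality case for free by citing the equality statement of Prop.~\ref{prop:datap} (equality of the conditionals of $X_3$ given $(X_1,X_2)$ under $\tilde{p}$ and $\tilde{q}$, i.e.\ $X_1\perp X_3\mid X_2$).
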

 \begin{proof}
 As  done by~\cite{petz2003monotonicity}, we can simply apply the data processing result (Prop.~\ref{prop:datap}) to $\tilde{X}_1 = (X_1,X_2),$ and $\tilde{X}_2 = X_3$, with the distributions
 $\tilde{p}_{\tilde{X}_1 \tilde{X}_2} = p_{X_1X_2X_3}$ and  $\tilde{q}_{\tilde{X}_1 \tilde{X}_2} = p_{X_1}p_{X_2X_3}$.
 \end{proof}
As a classical consequence, the entropy function is submodular like the regular entropy~(see, e.g.,~\cite[Section 6.5]{bach2013learning}). 

Note that we do not obtain a necessary and sufficient  characterization  of conditional independence, which can be obtained with other tools based on covariance operators~\cite{fukumizu2007kernel}.

\section{Convex duality, log-partition functions and variational inference}
\label{sec:duality}

Given that our kernel notions of relative entropies are lower bounds on the regular notions, by the traditional convex duality results between maximum entropy and log-partition functions (see, e.g.,~\cite{wainwright2008graphical}), we should obtain upper-bounds on log-partition functions. In this section, we show how such bounds can be obtained and computed. We then show that such bounds can also be optimized with respect to the positive definite kernel.

{This leads to a new family of variational inference methods, which can then be used in various probabilistic inference tasks~\cite{wainwright2008graphical}, and in particular in Bayesian inference~\cite{robert2007bayesian}.}

In this section, we will sometimes only assume that for our kernels satisfy (on top of being positive definite) $k(x,x) \leqslant 1$ for all $x \in \X$, noting that the upper-bound result on the regular Shannon relative entropy is still valid in this case.

\subsection{Convex duality between operators}
\label{sec:dualop}
We have for self-adjoint operators (regardless of the link with input spaces and probabilities):
\BEAS
\sup_{A \succcurlyeq 0, \ \tr A = 1}  \tr ( AM ) - D(A\|B)
& = & \sup_{A \succcurlyeq 0, \ \tr A = 1}  \tr ( A (M + \log B) ) - \tr  [A \log A] \\
& = & \log \tr \exp ( M + \log B)  ,
\EEAS
with the optimal operator $A$ equal to $\ds A = \frac{1}{ \tr \exp ( M + \log B)} \exp ( M + \log B)$. This implies the representation
$$
D(A\|B) = \sup_{ M } \ \tr A M - \log \tr \exp ( M + \log B)  ,
$$
with the constraints that $A \succcurlyeq 0$ and $\tr A = 1$ are automatically satisfied (if $ \tr A \neq 1$, then by replacing $M$ by $M + t \idm$, we can make the quantity go to infinity). 

We also have (when the constraint for unit trace is removed):
\BEAS
\sup_{A \succcurlyeq 0 }  \tr ( AM ) - D(A\|B)
& = & \sup_{A \succcurlyeq 0 }  \tr ( A (M + \log B) ) - \tr  [A \log A] \\
& = &  \tr \exp ( M + \log B - \idm),
\EEAS
with an optimal $A = \frac{1}{e} \exp( M + \log B)$.
This implies the representation
$$
D(A\|B) = \sup_{ M } \ \tr A + \tr A M -   \tr \exp ( M + \log B)  ,
$$
with the constraint that $A \succcurlyeq 0$ is automatically satisfied (but not $\tr A = 1$), with the optimal $M$ equal to $\log A - 
\log B$.

\subsection{Bounds on the log-partition function}
We can now apply the duality relationships above to covariance operators. Given a bounded function $f: \X \to \rb$ and a distribution $q$ on $\X$, our goal is to obtain upper-bounds on the log-partition function $ \log \int_\X e^{f(x)} dq(x)$ which is equal to, by convex duality:
$$
  \log \int_\X e^{f(x)} dq(x) = \sup_{ p \ {\rm probability}} \int_\X f(x) dp(x) - D( p \| q).
$$

 In our situation, our operators are subject to belong to the subspace $\A$, equal to the span of all $\varphi(x)\varphi(x)^\ast$, for $x \in \X$, which leads to slight adaptations. We do not necessarily assume that the kernels are universal, so that the hull of all $\varphi(x)\varphi(x)^\ast$ may not be equal to the intersection of $\A$ and positive operators with unit trace. 
\paragraph{Isotropic kernels.}
If $k(x,x)=1$, for all $x \in \X$, then $\tr \Sigma_p = 1$ is equivalent to $\int_\X dp(x) = 1$.
We define for a bounded function $f: \X \to \rb$, 
$$
a(f) =   \sup_{ p \ {\rm probability\ measure}} \int_\X f(x) dp(x) - D(\Sigma_p \| \Sigma_q).
$$
Using convex duality tools from \mysec{dualop}, we have:
\BEAS
a(f) & = & \sup_{ p \ {\rm non-negative\  measure}}  \inf_{M} \int_\X f(x) dp(x) - \tr \Sigma_p  M + \log \tr \exp ( M + \log \Sigma_q)  
\\
& = &  \inf_{M} \log \tr \exp ( M + \log \Sigma_q)  \mbox{ such that } \forall x \in \X, \ f(x) \leqslant  \langle \varphi(x), M \varphi(x) \rangle.
\EEAS
Note that we can get rid of the constraint $\int_\X dp(x) = 1$ because it is equivalent to $\tr \Sigma_p = 1$.

A relaxation is to relax $p$ not to be a non-negative measure (and use the positivity of $\Sigma_p$ as the only constraint), leading to 
\BEQ
\label{eq:bf}
 b(f)   =   \sup_{ p \ {\rm  measure}}  \ \int_\X f(x) dp(x) - D(\Sigma_q \| \Sigma_q ) .
\EEQ
By construction, we have $a(f) \leqslant b(f)$, and by convex duality,
\BEA
\label{eq:bfdual}
b(f) &= &   \sup_{ p \ {\rm  measure}}  \inf_{M} \int_\X f(x) dp(x) - \tr \Sigma_p  M + \log \tr \exp ( M + \log \Sigma_q) \\
\notag &  = &    \inf_{M}\  \log \tr \exp ( M + \log \Sigma_q)  \mbox{ such that } \forall x \in \X, \ f(x) =  \langle \varphi(x), M \varphi(x) \rangle.
\EEA

Note that that if the problem above is non-feasible because $\varphi$ is too small (so that $f$ cannot be represented as a quadratic form in $\varphi$), then we get a value $+\infty$.

If we can write (usually non-uniquely), $f(x) = \langle \varphi(x), F \varphi(x) \rangle$ for some self-adjoint operator $F$, then we have another representation as:
\BEAS
b(f)
& = &  \inf_{N \in \mathcal{A}^\perp} \ \log \tr \exp ( F + N + \log \Sigma_q) .
\EEAS
Since $D(\Sigma_p \| \Sigma_q) \leqslant D(p\|q)$, we have:
$$
b(f) \geqslant a(f)   \geqslant   \sup_{ p \ {\rm probability}} \int_\X f(x) dp(x) - D( p \| q)
= \log \int_\X e^{f(x)} dq(x).
$$
As expected, we thus obtain upper bounds on the usual log-partition function.

Note that we could also get lower bounds using tools from \mysec{lower}, in particular for functions $f$ that can be written as $f(x) = \int_\X h(x,y) g(y) d\tau(y)$.

\paragraph{Non isotropic kernels.}

If we only assume that $k(x,x) \leqslant 1$ for all $x \in \X$, then $b(f)$ defined in \eq{bf} and \eq{bfdual} is not any more an upper-bound but we can define instead
\BEAS
b^{(c)}(f) & = & \sup_{ p \ {\rm  measure}, \ \int_\X dp(x)=1, \ \int_\X  k(x,x) dp(x) \leqslant 1}  \inf_{M} \int_\X f(x) dp(x) - \tr \Sigma_p  M - \tr \Sigma_p + \tr \exp ( M + \log \Sigma_q)  
\\& = & \sup_{ p \ {\rm  measure} }  \inf_{M, \, c, \, b \geqslant 0 } \int_\X f(x) dp(x) - \tr \Sigma_p  M - \tr \Sigma_p +  \tr \exp ( M + \log \Sigma_q)  \\
& & \hspace*{6cm}
- c \Big( \int_\X dp(x) - 1 \Big) 
- b \Big( \int_\X k(x,x) dp(x) - 1 \Big) 
\\
 & = &  \inf_{M, \, c, \, b \geqslant 0 }   \ c + b +   \tr \exp ( M + \log \Sigma_q)  \mbox{ such that } \forall x \in \X, \ f(x) = c +   \langle \varphi(x), ( M + \idm + b \idm) \varphi(x) \rangle
 \\
 & = &  \inf_{M, \, c  }   \ c + \inf_{b \geqslant 0} \big( b +  e^{-b}  \tr \exp ( M + \log \Sigma_q)  \big) \mbox{ such that } \forall x \in \X, \ f(x) = c +   \langle \varphi(x), ( M + \idm  ) \varphi(x) \rangle \\
& = &  \inf_{M, \, c  }   \ c + 1 + \widetilde{\log} \tr \exp ( M + \log \Sigma_q)  \big) \mbox{ such that } \forall x \in \X, \ f(x) = c +   \langle \varphi(x), ( M + \idm  ) \varphi(x) \rangle
\\
& = &  \inf_{M, \, c  }   \ c + 1 + \widetilde{\log} \big( \frac{1}{e} \tr \exp ( M + \log \Sigma_q)  \big) \big)\mbox{ such that } \forall x \in \X, \ f(x) = c +   \langle \varphi(x), M \varphi(x) \rangle,
\EEAS
with $\widetilde{\log}(a) = \log a$ if $a>1$, and $a-1$ otherwise.

If $f$ can be represented through the operator  $F$, and the constant function through the operator $U$, then we get
$$
b^{(c)}(f) = 
\inf_{M, \, c  }   \ c + 1 + \widetilde{\log} \big( \frac{1}{e} \tr \exp ( M + \log \Sigma_q)  \big) \big)\mbox{ such that } 
F - c U - M \in \mathcal{A}^\perp.
$$
We can check that if $U=\idm$, then the constraints in the definition of $b^{(c)}(f)$ are equivalent to $\tr \Sigma_p$, and we exactly recover the expression of $b(f)$ in \eq{bfdual}.

\paragraph{Properties of upper-bounds.}

The definition of upper-bounds on the log-partition function naturally leads to several questions and applications:
\BIT

\item Can these upper-bounds be tight? In situations like in \mysec{asymptotics} where the entropy lower-bound can be made tight (for example when the kernel bandwidth tends to zero), we should also get tight upper-bounds on the log-partition functions, that is, upper-bounds that tend to the classical log-partition function when the bandwidth goes to zero. We leave for future work a detailed study of these approximations.

\item Can these upper-bounds be estimated efficiently from simple oracles on the function $f$? We provide efficient algorithms in \mysec{computable}.

\item Can these upper-bounds be minimized? We show in \mysec{variational} that these upper-bounds happen to be convex in the kernel, so minimizing with respect to the kernel is possible.

\EIT
 
 \subsection{Relationship with optimization}
 
When adding a temperature parameter $\varepsilon > 0$, we can extend the traditional link between between optimization and log-partition functions. This corresponds essentially to considering a potential $\frac{1}{\varepsilon}f $, or defining
$$
b_\varepsilon(f) =  \varepsilon b\big( \frac{1}{\varepsilon}f \big) = \sup_{ p \ {\rm  measure}}  \ \int_\X f(x) dp(x) - \varepsilon D(\Sigma_q \| \Sigma_q ) .
$$
By convex duality, we get:
$$
b_\varepsilon(f) =     \inf_{M}\  \varepsilon \log \tr \exp \big( \frac{1}{\varepsilon} M + \log \Sigma_q\big)  \mbox{ such that } \forall x \in \X, \ f(x) =  \langle \varphi(x), M \varphi(x) \rangle.
$$
When $\varepsilon$ tends to zero, then $b_\varepsilon(f)$ converges to
$$
     \inf_{M}\  \lambda_{\max}(M) \mbox{ such that } \ \forall x \in \X, \ f(x) =  \langle \varphi(x), M \varphi(x) \rangle.
$$
Given that $\| \varphi(x)\|=1$ for all $x \in \X$, by writing $M = c \idm - A$ for $A$ positive,  this is equal to
$$
     \inf_{c \in \rb, \ A \succcurlyeq 0 }\ c\ \  \mbox{ such that }\  \forall x \in \X, \ f(x) =  c -\langle \varphi(x), A \varphi(x) \rangle,
$$
which is exactly the optimization formulation of~\cite{rudi2020finding} based on ``kernel sums-of-squares''. Following the traditional relationship between log-partition functions and maxima, we can thus consider our bound on log-partition functions as a smoothed version on the maximum. Therefore,  some of the approximation techniques developed for the optimization formulation~\cite{rudi2020finding} can be extended to our set-up as well. Moreover, as common in convex optimization, the entropy could be used for smoothing and accelerated optimization algorithms~\cite{nesterov2005smooth}.

\subsection{Computable bounds}
\label{sec:computable}

In order to compute or approximate $b(f)$, which depends on the function $f: \X \to \rb$ and the feature map $\varphi:\X\to\H$, we will consider a finite-dimensional approximation $\tilde{\varphi}: \X \to \tilde{\H}$ where $\tilde{\H}$ will be a finite-dimensional subspace of $\H$. All of our approximations will always be upper-bounds on the true log-partition function.

We will always need an efficient representation of $\tilde \A$ the span of all $\tilde \varphi(x) \tilde\varphi(x)^\ast$, for $x \in \X$, and sometimes of $\tilde{\mathcal{M}}$ the hull of all  $\tilde \varphi(x) \tilde\varphi(x)^\ast$ for $x \in \X$. See examples in \mysec{extor} and \mysec{exhyp} for the torus and the hypercube.

In order to estimate the relative entropy $D(\Sigma_p \| \Sigma_q)$, we will use $D(\tilde\Sigma_p \| \tilde\Sigma_q)$ instead. If the kernel $(x,y) \mapsto k(x,y) - \tilde{k}(x,y)$ is positive definite, then this is a lower bound, and thus the bounds on the regular quantities are preserved.

In order to compute $D(\tilde\Sigma_p \| \tilde\Sigma_q)$, we assume we can compute explicitly  $\ds \int_\X \varphi(x)\varphi(x)^\ast dp(x)$ and  $\ds \int_\X \varphi(x)\varphi(x)^\ast dq(x)$. If $\tilde{\varphi}$ is obtained from projection on a span of some $\varphi(x_1),\dots,\varphi(x_n)$ like in \mysec{otheroraclaes}, then this can be done if we can compute expectations of $k(x,y) k(x,z)$ for any $y,z \in \X$. We can also build set-specific approximate feature maps (as done in \mysec{extor} and \mysec{exhyp}).

\paragraph{With an explicit representation of $f$.}
If we have an explicit (non-unique) representation of $f$ as $f(x) = \langle \tilde\varphi(x), \tilde{F}  \tilde \varphi(x) \rangle$.  We then have if $\langle \tilde \varphi(x), \tilde\varphi(x) \rangle =1$ for all $x \in \X$ (a similar approximation can be obtained for $b^{(c)}(f)$ for non-isotropic approximations):
\BEAS
\tilde{b}(f) & = & \sup_{ \tilde C \in \mathcal{A}, \ \tilde C \succcurlyeq 0, \ \tr \tilde C = 1}   \tr [ C \tilde{F} ] 
- D(  C \| \tilde \Sigma_q)  \\
& = & \inf_{\tilde M}   \ \log \tr \exp ( \tilde M + \log \tilde\Sigma_q)  
+ \sup_{\tilde C \in \A} \tr [ \tilde C ( F - \tilde M )  ]\\
& = & \inf_{ \tilde M}   \ \log \tr \exp ( \tilde M + \log \tilde\Sigma_q)  
\mbox{ such that } \tilde M - \tilde{F} \in \mathcal{A}^\perp,
\EEAS
which is a finite-dimensional convex optimization problem.
Since the function $\tilde M \mapsto  \log \tr \exp ( \tilde M + \log \tilde\Sigma_q)  $ is smooth, the optimization problem above  can be approximately solved by projected gradient descent (with unit step-size), with iteration:
$$
\tilde M \leftarrow \Pi_{\tilde{F} + \mathcal{A}^\perp} \Big( \tilde M - \frac{  \exp ( \tilde M + \log \tilde\Sigma_q)  }{\tr  \exp ( \tilde M + \log \tilde\Sigma_q)  } \Big).
$$
We then obtain the (approximated) optimal $\tilde C$ as $\tilde C =  \frac{  \exp ( \tilde M + \log \tilde\Sigma_q)  }{\tr  \exp ( \tilde M + \log \tilde\Sigma_q)  } $. It can be accelerated using extrapolation~\cite{beck2009fast}. Note that for any feasible $\tilde M$ we obtain a lower bound on $\tilde{b}(f)$.

\paragraph{Without an explicit representation of $f$.}
If an explicit representation of $f$ as $f(x) = \langle \tilde\varphi(x), \tilde F  \tilde\varphi(x) \rangle$ is not available, we can build an approximate representation of $f$ such that  $\big\|  f  -  \langle\tilde \varphi(\cdot), \tilde F \tilde\varphi(\cdot) \rangle  \big\|_\infty$ is as small as possible, and then use the previous algorithm, with this extra approximation factor. {On the torus, we could also use the same technique as \cite{blakecolt22} based on approximating the $L_\infty$-norm with Fourier transforms.}

\paragraph{Illustrative experiments.} We consider here   $\X = [0,1]$, and the kernel  with Fourier transforms proportional to $e^{-\sigma |\omega|}$. We consider the function $f(x) = \cos (2\pi x)$, and approximate $\log( \int_0^1 e^{f(x)}dx ) \approx 0.2359$. We could either use the finite feature map from \mysec{otheroraclaes} based on random projections (which is a non-isotropic approximation). For simplicity, we consider $\tilde{\varphi}(x)_\omega = \hat{k}(\omega) e^{2i\pi \omega x}$, only for $\omega \in \{-r,\dots,r\}$. We have $\| \tilde{\varphi}(x) \|^2 = \sum_{\omega = -r}^r \tanh ( \frac{\sigma}{2} )  e^{-\sigma |\omega|}
= \tanh ( \frac{\sigma}{2} ) \big( -1 + 2 \frac{1 - e^{-\sigma(r+1)}}{1-e^{-\sigma}} \big)
= 1 - \frac{ e^{-\sigma(r+1/2)}}{\cosh  ( \frac{\sigma}{2} ) } $, and the set $\mathcal{A}$ is exactly the set of Toeplitz matrices. See \myfig{simulations_logpartition} for an illustration with several values of $\sigma$.

\begin{figure}
\begin{center}
\includegraphics[width=5.5cm]{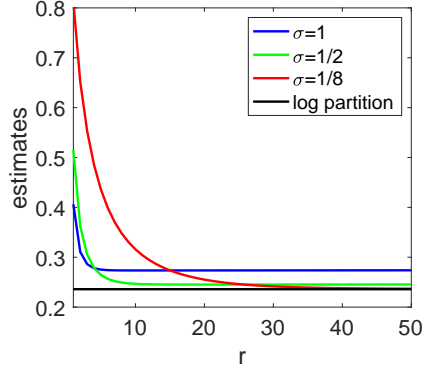}  
\end{center}

\vspace*{-.45cm}

\caption{Log-partition estimation with different values of $\sigma$, as a function of $r$, the number of considered frequencies.
\label{fig:simulations_logpartition}}
\end{figure}

\subsection{Kernel learning}
\label{sec:variational}

The key property we will leverage is the \emph{concavity} of $D(\Sigma_p \| \Sigma_q)$ with respect to the kernel. This allows to maximize efficiently with respect to the kernel to get as close to the Shannon relative entropy $D(p\|q)$ as possible.

\paragraph{Concavity with respect to the kernel.}
Given two distributions $p$ and $q$, then $D(\Sigma_p \| \Sigma_q)$ depends on the kernel function $k: \X \times \X \to \rb$. This happens to be a \emph{concave} function of $k$ (which is only assumed to be positive definite with no bounds on $k(x,x)$). Indeed, if $k_1$ and $k_2$ are two kernels with feature maps $\varphi_1:\X \to \H_1$ and $\varphi_2:\X \to \H_2$, with the associated covariance operators $\Sigma_p^{(1)}$, $\Sigma_p^{(2)}$, $\Sigma_q^{(1)}$, $\Sigma_q^{(2)}$, and $\eta_1, \eta_2 \geqslant 0$ such that $\eta_1 + \eta_2 = 1$,
then with $\H = \H_1 \otimes \H_2$ and $\varphi(x) = ( \sqrt{\eta_1} \varphi_1(x), \sqrt{\eta_2} \varphi_2(x))$, using $E_1 = \bigg( \begin{array}{cc}
\idm & 0 \\
0 & 0 \end{array} \bigg)$ and $E_2 = \bigg( \begin{array}{cc}
0 & 0 \\
0 & \idm \end{array} \bigg)$,   using the monotonicity of relative entropy for quantum operations, we have:
$$
\eta_1 D(\Sigma_p^{(1)} \| \Sigma_q^{(1)} )
+\eta_2 D(\Sigma_p^{(2)} \| \Sigma_q^{(2)} )
  =    D\big( E_1 \Sigma_p E_1 + E_2 \Sigma_p E_2 \big\|  E_1 \Sigma_q E_1 + E_2 \Sigma_q E_2\big) \\
 \leqslant  D ( \Sigma_p \| \Sigma_q),$$
 which shows the concavity.
 
Moreover, the relative entropy is monotonic, that is, if $k_1-k_2$ is positive definite, $D(\Sigma_p^{(2)} \| \Sigma_q^{(2)} )
\leqslant D(\Sigma_p^{(1)} \| \Sigma_q^{(1)} )$. Therefore, when optimizing over the kernel, we only need to look at maximal elements with respect to the positive definite order.

\paragraph{Linear parametrization of kernels.}
Given a finite-dimensional feature map $ {\psi}: \X \to \rb^d$, we look at kernels of the form
$$k(x,y) = \psi(x)^\top \Lambda \psi(x),$$ where $\Lambda \in \rb^{d \times d}$ is a symmetric positive semidefinite matrix. In order to have the upper-bound on entropy, we consider the set $\mathcal{L}$ of matrices $\Lambda$ such that for all $x \in \X$, $\psi(x)^\top \Lambda \psi(x) \leqslant 1$. This is a finite-dimensional convex set, for which we need to know \emph{inner} approximations (to preserve directions of bounds). We then have that
$$
D(\Sigma_p \| \Sigma_q) = D\Big( \int_\X \Lambda^{1/2} \psi(x) \psi(x)^\top \Lambda^{1/2}  dp(x) 
\Big\| \int_\X \Lambda^{1/2} \psi(x) \psi(x)^\top \Lambda^{1/2}  dq(x)  \Big)
$$
is convex in $\Lambda$ (see a direct proof in Appendix~\ref{app:concav}). As mentioned earlier, by monotonicity, we only need to consider maximal elements of $\mathcal{L}$.

\paragraph{Relative entropy maximization.}
Given $p$ and $q$, from which we can compute
$
C_p = \E_{X \sim p} \big[ \psi(X) \psi(X)^\top\big]
$
and $
C_q =  \E_{X \sim q} \big[\psi(X) \psi(X)^\top \big], 
$ we maximize 
$D( \Lambda^{1/2}  C_p  \Lambda^{1/2}  \| \Lambda^{1/2}  C_q  \Lambda^{1/2}  )$ with respect to $ \Lambda \in \mathcal{L}$.

We can decompose it as the difference of two convex functions as:
\BEAS
D( \Lambda^{1/2}  C_p  \Lambda^{1/2}  \| \Lambda^{1/2}  C_q  \Lambda^{1/2}  )
& = & \tr\big[  C_p^{1/2}  \Lambda C_p^{1/2} \log( C_p^{1/2}  \Lambda C_p^{1/2}
 ) \big] - \tr\big[  \Lambda^{1/2}  C_p  \Lambda^{1/2}   \log(  \Lambda^{1/2}  C_q  \Lambda^{1/2}  
 ) \big].
\EEAS
This naturally leads to an optimization algorithm where we lower-bound the first convex function by its tangent and then maximize this affine function minus the other convex functions. This is a parameter-free globally and monotonically convergent algorithm.

\paragraph{Log-partition minimization.} Given $q$ and $f: \X \to \rb$, we can try to minimize the lower bound on the log-partition function instead. We assume for simplicity that we know a representation of $f$ as $f(x) = \psi(x)^\top G \psi(x)$, that the span $\mathcal{B}$ of all $\psi(x)\psi(x)^\top$ for $x \in \X$ is manageable, and that (for simplicity), $\psi(x)^\top \Lambda \psi(x) = 1$ for all $\Lambda \in \mathcal{L}$.

We thus need to minimize with respect to $\Lambda \in \mathcal{L}$ the following function (which is convex in $\Lambda$ as the supremum of convex functions):
\BEAS
& & \sup_{C \in \mathcal{B}, \ C \succcurlyeq 0, \ \tr C = 1}
\tr ( C G) - D( \Lambda^{1/2}  C  \Lambda^{1/2}  \| \Lambda^{1/2}  C_q  \Lambda^{1/2}  ) \\
& = & \inf_{M} \sup_{C \in \mathcal{B}} \tr C ( G - \Lambda^{1/2}  M \Lambda^{1/2}  )
+ \log \tr \exp( M + \log(\Lambda^{1/2}  C_q  \Lambda^{1/2} ))
\\
& = & \inf_{M} 
\  \log \tr \exp( M + \log(\Lambda^{1/2}  C_q  \Lambda^{1/2} )) \mbox{ such that }
 G - \Lambda^{1/2}  M \Lambda^{1/2} \in \mathcal{B}^\perp.
\EEAS
The optimal $C$ and $M$ can be obtained by accelerated projected gradient descent. We can then minimize with respect to $\Lambda$ by updating $\Lambda$ like for the entropy maximization, once $M$ (and thus $C$) have been estimated.

We now provide two illustrating examples, on the torus and the hypercube.

\subsection{Example 1: Torus $[0,1]$}
 \label{sec:extor}
 
  We consider $\X = [0,1]$, and assume that we have $m = 2r+1$ frequencies $\omega \in \{-r,\dots,+r\}$, and $\psi_\omega(x) =  e^{2i\pi \omega x}$. Then we have
   $(C_p)_{\omega \omega'} =  \hat{p}(\omega - \omega')$ and $(C_q)_{\omega \omega'} =  \hat{q}(\omega - \omega')$, and the constraint to be in $\mathcal{B}$ is that $C_p $ and $C_q$ are Toeplitz matrices  (note that this requires to know $\hat{p}(\omega)$  and $\hat{q}(\omega)$ for $\omega \in \{-2r, \dots, 2r\}$).
 
If $f$ has band-limited Fourier series $\hat{f}$ (zero outside $\{-2r,\dots,2r\}$), then we have $f(x) = \langle \psi(x), G \psi(x) \rangle$, with any matrix $G$ such that $\sum_{\omega - \omega' = \delta}  G_{\omega \omega'} = \hat{f}(\delta)$ for all $\delta \in \{-2r,\dots,2r\}$. We then have $\tr [C_p G ] = \sum_{\omega, \omega'=-r}^r  \hat{p}(\omega - \omega')^\ast G_{\omega \omega'} = \sum_{\delta=-2r}^{2r} \hat{p}(\delta)^\ast \hat{f}(\delta) $.

Moreover, we consider $\Lambda = \Diag(\eta)$ with $\eta$ in the simplex, so that,
for $\hat{q}(\omega) = 1_{\omega \neq 0}$ (for the uniform distribution), we have
$$D( \Lambda^{1/2}  C_p  \Lambda^{1/2}  \| \Lambda^{1/2}  C_q  \Lambda^{1/2}  )
= \tr\big[  C_p^{1/2}  \Lambda C_p^{1/2} \log( C_p^{1/2}  \Lambda C_p^{1/2}
 ) \big]   - \sum_{\omega=-r}^r  \eta(\omega) \log \eta(\omega).$$

\paragraph{Relative entropy maximization.}
We can  maximize with respect to $\eta$, using the (here convergent by convexity) the majorization-minimization approach, e.g., by using the affine Taylor expansion of the first term, and minimizing
$\ds
 \tr \big[  C_p^{1/2} \Diag(\eta_+)   C_p^{1/2}  \log(  C_p^{1/2}  \Diag(\eta_+)   C_p^{1/2}  )\big] - \sum_{\omega=-r}^r \eta_+(\omega)\log \eta_+(\omega)
$
with respect to $\eta^+$ is in the simplex, thus leading to $\log \eta^+$ equal to a constant plus   
$\ds\diag\big[
 C_p^{1/2}  \log(   C_p^{1/2}  \Diag(\eta)  C_p^{1/2}  )  C_p^{1/2}  \big].$
This makes an algorithm in closed form to maximize $D(\Sigma_p \| \Sigma_q) 
$ with respect to $\eta$. 

This is illustrated in \myfig{kernel_learning_entropy}, where we compare various lower bounds on the negative entropies, either with a fixed kernel, or with kernel learning, showing the benefits of estimating $\hat{k}(\omega)$.

\begin{figure}
\begin{center}
\includegraphics[width=5.5cm]{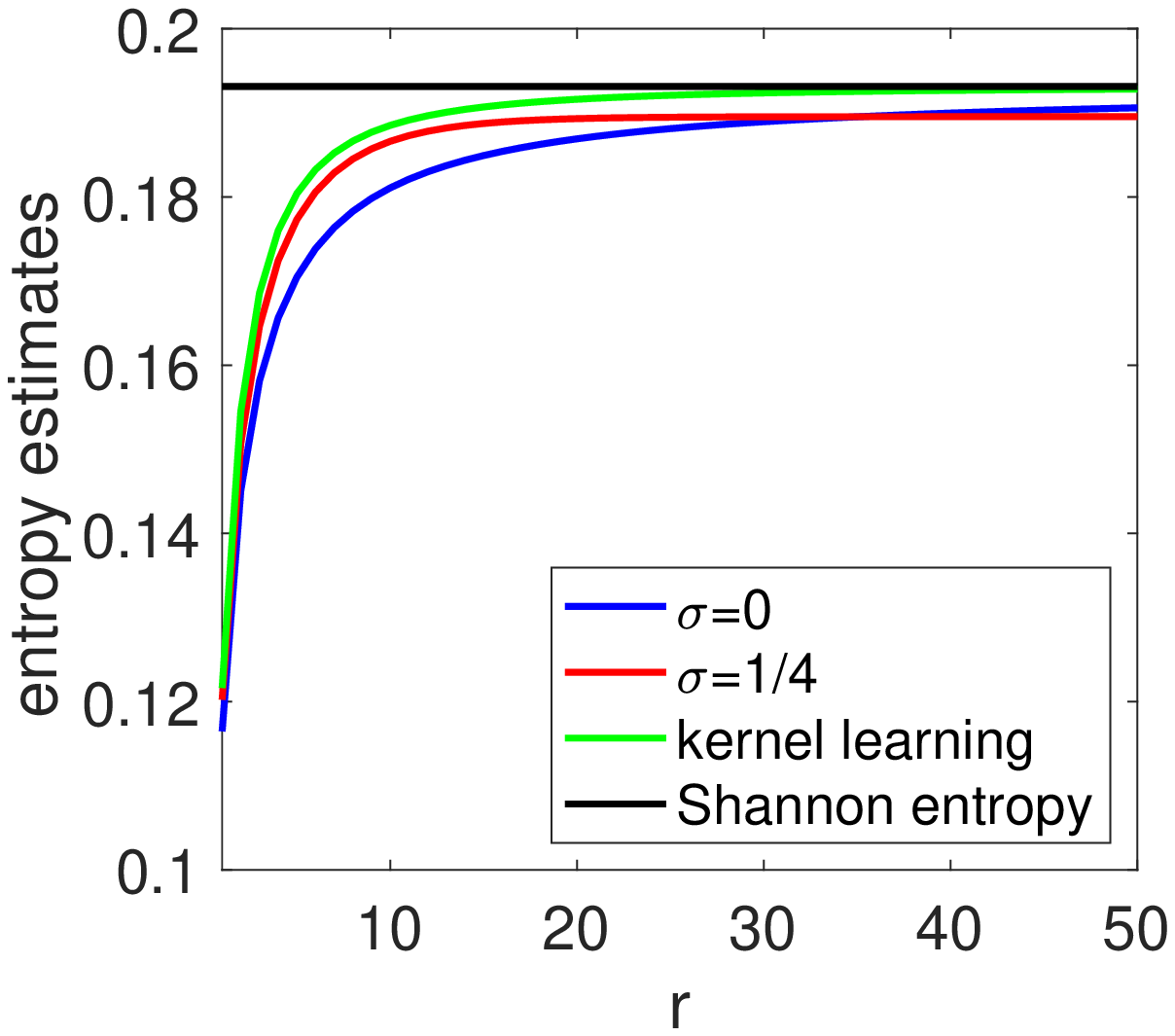} \hspace*{2cm}
\includegraphics[width=5.5cm]{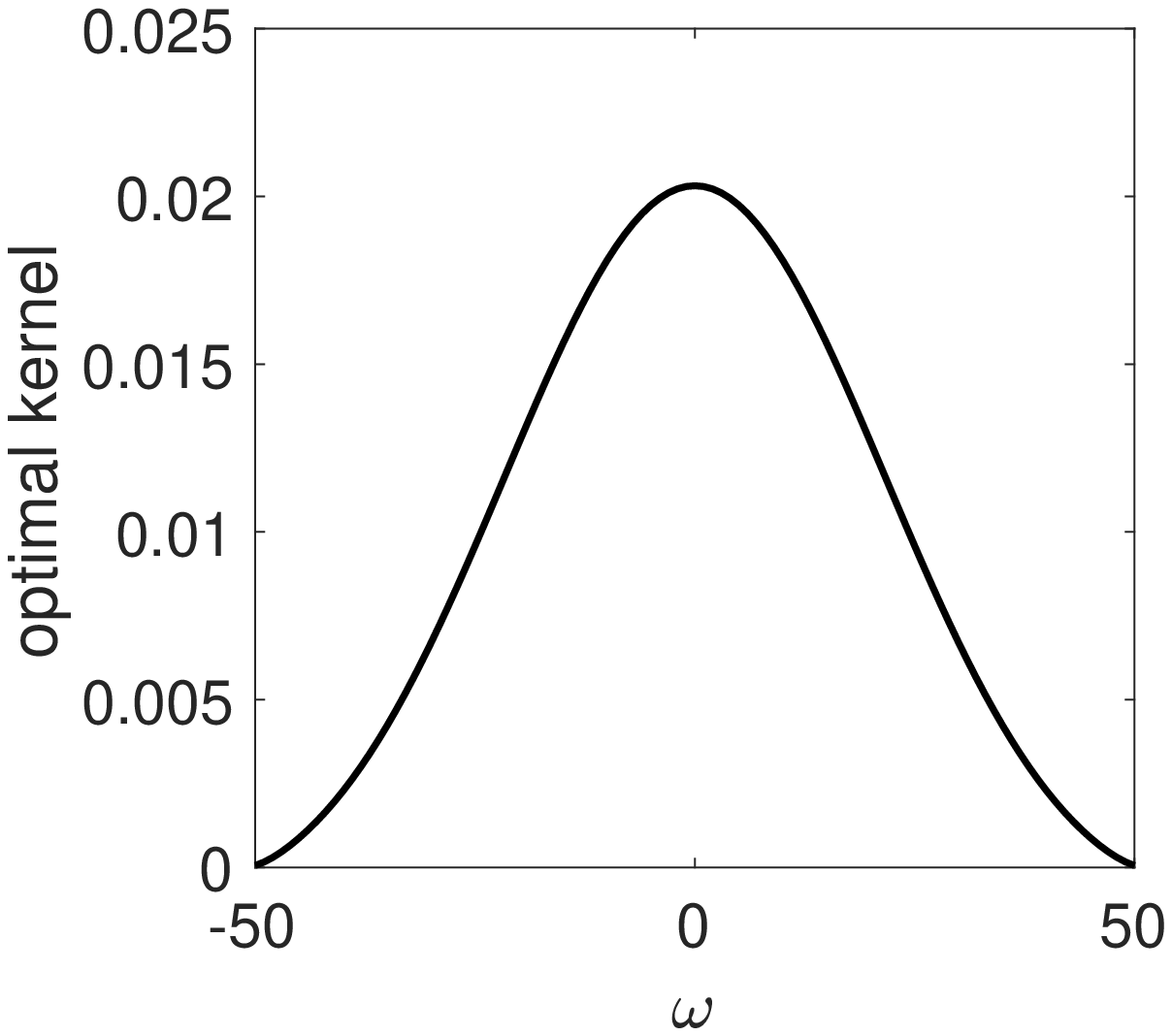}
\end{center}

\vspace*{-.45cm}

\caption{Kernel learning for entropy estimation on the torus $[0,1]$. Left: Entropy estimates for increasing order $r$. Right: optimal kernel for $r=50$.\label{fig:kernel_learning_entropy}}
\end{figure}

\paragraph{Log-partition minimization.} In order to compute (and then minimize) the upper-bound on the log-partition function, we can write
\BEAS
b(f) & = & \sup_{\hat{p}}
\sum_{\delta=-2r}^{2r} \hat{p}(\delta)^\ast \hat{f}(\delta) - 
\tr \big[ C_p^{1/2}  \Diag(\eta)  C_p^{1/2}  \log ( C_p^{1/2}  \Diag(\eta )  C_p^{1/2}    ) \big] +\sum_{\omega=-r}^r  \eta(\omega) \log \eta(\omega)
\\
& = & \inf_M   \ \log \tr \exp ( M  )  + \sum_{\omega=-r}^r \eta(\omega) \log \eta(\omega)
\mbox{ s.~t.}  \sum_{\omega - \omega' = \delta} \eta(\omega)^{1/2} \eta(\omega')^{1/2} M_{\omega \omega'} = \hat{f}(\delta), \ \forall \delta \{-2r,\dots,2r\}.
\EEAS
Note here that by the classical Caratheodory
interpolation theorem (see, e.g.,~\cite[Chapter 2]{foias1990commutant}), this is exactly $a(f)$ which is computed.

It can be minimized by (accelerated) projected gradient descent in $M$, since the function $M \mapsto \log \tr \exp(M)$
is $1$-smooth. We simply need to be able to project on the affine subspace, which can be done by solving a linear system.\footnote{By Lagrangian duality, the minimizer of  $\frac{1}{2}\| M - Z \|_F^2$ such that $\forall \delta \in \{-r,\dots,r\}, \tr(MA_\delta) = \hat{f}(\delta)$ obtained as $Z + \sum_{\delta} \lambda_\delta A_\delta$ with $\lambda$ maximizing $\sum_{\delta=-r}^r \lambda_\delta (\hat{f}_\delta - \tr Z A_\delta) - \frac{1}{2}\sum_{\delta,\delta'=-r}^r \lambda_\delta \lambda_{\delta'}  \tr A_\delta A_{\delta'}$.}

In order to update the kernel $\eta$, we use the exact same update as for maximizing the entropy.
This is illustrated in \myfig{kernel_learning_logpar}, where we show benefit of estimating the kernel rather than using a uniform one.

\begin{figure}
\begin{center}
\includegraphics[width=5.5cm]{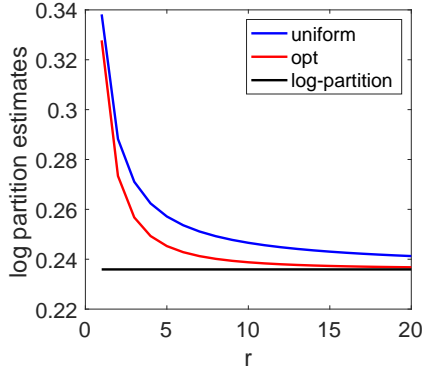}   \end{center}

\vspace*{-.45cm}

\caption{Kernel learning for log partition estimation on the torus $[0,1]$ with increasing order $r$. \label{fig:kernel_learning_logpar}}
\end{figure}

\subsection{Example 2: Hypercube}
\label{sec:exhyp}

We consider $\X = \{-1,1\}^d$, and $\ds \varphi(x) = \Diag(\eta)^{1/2} {x \choose 1} \in \rb^{d+1}$, for $\eta$ in the simplex in $\rb^{d+1}$.
Here $\mathcal{B}$ is exactly the set of all self-adjoint operators with diagonal equal to $\eta$.
To compute the entropy, we consider the set $C$ of positive matrices with unit diagonal, corresponding to $\E \big[  {x \choose 1} {x \choose 1}^\top \big]$.

 To maximize the relative entropy with respect to the uniform distribution, we simply maximize
 $$
 \tr \big[ C(p)^{1/2} \Diag(\eta) C(p)^{1/2} \log(  C(p)^{1/2} \Diag(\eta) C(p)^{1/2} ) \big]
 - \sum_{j=1}^{d+1} \eta_j \log \eta_j,
 $$
 which is a lower bound on $\sum_{x \in \{-1,1\}^d} p(x) \log p(x) + d \log 2$.
 Thus, we get an upperbound on the entropy  
 $$-\!\!\!\!\sum_{x \in \{-1,1\}^d} p(x) \log p(x) 
 \leqslant d \log 2 - 
  \tr \big[  \Diag(\eta) ^{1/2} C(p)\Diag(\eta) ^{1/2}\log(   \Diag(\eta) ^{1/2} C(p)\Diag(\eta) ^{1/2} ) \big]
 +\sum_{j=1}^{d+1} \eta_j \log \eta_j.
 $$
 With $\eta$ constant equal to $1/(d+1)$, we get the upper bound:
 $\ds
 d \log 2 - \frac{1}{d+1} \tr C(p) \log C(p),
 $
 which happens to be tight for $d=1$.

 This is to be compared to the upper-bound of~\cite{jordan2003semidefinite}, which is
 $ \ds \frac{1}{2} \log \det \bigg( C(p) + \frac{1}{3}
 \bigg( \begin{array}{cc}
\idm & 0 \\ 0 & 0  \end{array} \bigg) \bigg)+ \frac{d}{2} \log \frac{\pi e}{2}.
 $
In \myfig{comparison}, we compare these bounds on matrices $C$ corresponding to independent components with mean vectors uniformly distributed, showing the potential benefits of the new upperbound on entropy.

\begin{figure}
\begin{center}
\includegraphics[width=5.5cm]{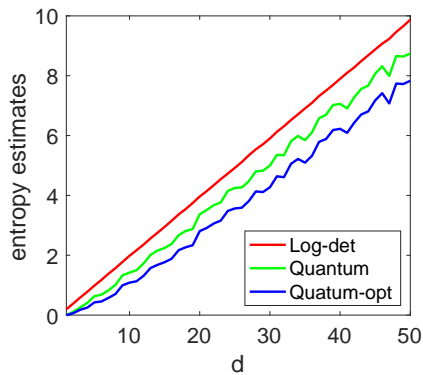} 
\end{center}

\vspace*{-.45cm}

\caption{Comparison of entropy bounds with increasing dimensions (the log-determinant bound from~\cite{jordan2003semidefinite}, our bound based on quantum entropy with uniform $\eta$ and the optimized bound). We plot the bounds for independent distributions with random  probabilities for each component. We substract the true entropy and compare upper-bounds (with zero meaning that the bounds are tight).\label{fig:comparison}}
\end{figure}

\section{Conclusion}
\label{sec:conclusion}

In this paper, we presented properties of the von Neumann entropy and relative entropy of covariance operators obtained from reproducing kernel Hilbert spaces. These notions are intimately related to the usual notions of Shannon entropy and relative entropy,  share many of their properties, and can be applied to any type of data where positive definite kernels can be defined. They also come with additional computational benefits, with several estimation algorithms with theoretical guarantees. We have also highlighted several properties in terms of multivariate probabilistic modeling or variational inference, but several interesting avenues for future research are worth exploring. First, exploring further approximation guarantees for log-partition function estimation can probably lead to new improved algorithms for variational inference~\cite{wainwright2008graphical} in continuous and discrete domains. Second, we could extend to R\'enyi entropies~\cite{muller2013quantum,frank2013monotonicity} with similar developments (along the lines of Appendix~\ref{app:div}). Finally, the optimization with respect to the probability measure, which is at the core of log-partition estimation, can be extended to other optimization problems on probability measures, such as already done with similar tools for optimal transport~\cite{vacher2021dimension} or optimal control~\cite{berthier2021infinite}.

\subsection*{Acknowledgements}
The author thanks Alessandro Rudi and Loucas Pillaud-Vivien for fruitful discussions related to this work.
This work was funded in part by the French government under management of Agence Nationale
de la Recherche as part of the ``Investissements d'avenir'' program, reference ANR-19-P3IA-0001 (PRAIRIE 3IA
Institute). We also acknowledge support the European Research Council (grant SEQUOIA 724063). {The comments and suggestions of the anonymous reviewers were greatly appreciated.}

\appendix

\section{Proof of convexity results on operators}
\label{app:monot}

In this section, we provide simple direct proofs of convexity results that we use in this paper. All of these proofs are obtained from the literature, {and in particular~\cite{lesniewski1999monotone,ruskai2007another,ruskai2002inequalities}.}

\subsection{Joint convexity of the relative entropy}
\label{app:jointcvx}
We consider a function $f: \rb_+^\ast \to \rb$, which is operator-convex, that is, for which the function $A\mapsto f(A)$ is convex as an operator-valued function, i.e., $f \Big(\sum_{i=1}^n \lambda_i A_i\Big) \preccurlyeq \sum_{i=1}^n \lambda_i f(A_i)$. These functions can exactly be represented as:
$ \ds
f(t) = \alpha + \beta (t-1) + \gamma (t-1)^2 + \delta \frac{(t-1)^2}{t} + \int_0^\infty \frac{(t-1)^2}{\lambda + t} d\nu(\lambda),
$
where $\gamma,\delta>0$ and $\nu$ is a positive measure with finite mass on $\rb_+$~\cite{lesniewski1999monotone}. We have $f(1)=\alpha$. The interesting case for this paper is 
$\ds f(t) = - \log(t) = 
1 - t + \int_0^\infty  \frac{ (t-1)^2}{\lambda + t} \frac{1}{(1+\lambda)^2} d\lambda,$
but there are others, such as, for $p \in (0,1]$:
$$
-t^p =  -1 - p(t-1) + \frac{\sin p\pi}{\pi} \int_0^{+\infty}   \frac{(t-1)^2}{\lambda + t} \frac{\lambda^p}{(1+\lambda)^2} d\lambda.
$$

  Given $A$ and $B$ two Hermitian operators, we denote by $\mathcal{R}_A$ the right multiplication by $A$ and $\mathcal{L}_A$ the left multiplication by $A$.
We have, since $ \mathcal{L}_B$ and  $\mathcal{R}_A$ commute, and $\log \mathcal{L}_B = \mathcal{L}_{\log B}$
and $\log \mathcal{R}_A = \mathcal{R}_{\log A}$:
\BEAS
D(A \| B ) & = & \tr A^{1/2} (\log A ) A^{1/2}
- \tr A^{1/2} (\log B ) A^{1/2} = \tr  \big[ (\log \mathcal{R}_A)(A^{1/2}) A^{1/2}\big]
-  \tr  \big[ A^{1/2} (\log \mathcal{L}_B)(A^{1/2})  \big]
\\
& = & \big\langle A^{1/2},   \big[\log \mathcal{R}_A -     \log \mathcal{L}_B    \big] A^{1/2} \big \rangle
=\big\langle A^{1/2}, - \log \big[ \mathcal{L}_B  \mathcal{R}_A^{-1} \big] A^{1/2} \big \rangle, 
\EEAS
where $\langle M, N \rangle = \tr ( M^* N)$ denotes the usual dot-products between operators.
Following~\cite{ruskai2007another}, we thus have:
\BEAS
D(A\|B)
& = & 
 \big\langle A^{1/2}, -\log \big[  \mathcal{L}_B   \mathcal{R}_A^{-1}\big]A^{1/2}  \big \rangle
 \\
 & & +  \int_0^\infty 
  \big\langle  (\mathcal{L}_B   \mathcal{R}_A^{-1} - \idm ) A^{1/2}, 
(  \mathcal{L}_B   \mathcal{R}_A^{-1} + \lambda \idm)^{-1}
 (\mathcal{L}_B   \mathcal{R}_A^{-1} - \idm )  A^{1/2} \big \rangle
  \frac{1}{(1+\lambda)^2} d\lambda \\
  & = & \tr A - \tr B
  +  \int_0^\infty 
  \big\langle  \mathcal{R}_A^{-1/2} (B-A), 
(  \mathcal{L}_B   \mathcal{R}_A^{-1} + t \idm)^{-1}
  \mathcal{R}_A^{-1/2} (B-A) \big \rangle
  \frac{1}{(1+\lambda)^2} d\lambda
\\
  & = & \tr A - \tr B
  +  \int_0^\infty 
  \big\langle    (B-A), 
(  \mathcal{L}_B   + \lambda \mathcal{R}_A)^{-1} 
(B-A) \big \rangle
  \frac{1}{(1+\lambda)^2} d\lambda .
\EEAS

Note we can ensure the finiteness of $D(A\|B)$ as soon as $B^{-1/2} AB^{-1/2}$ is bounded. Given that the mapping $(u,M) \mapsto \langle u, M^{-1} u \rangle$ is convex, we obtain the joint convexity of the relative entropy. It can also be shown more directly as shown below.

  \begin{proposition}
  \label{prop:jointconcv}
  Given positive invertible self-adjoint operators $A_i$, $B_i$,
  then
  $$
  D \Big( \sum_{i=1}^n \lambda_i A_i \big\| \sum_{i=1}^n \lambda_i B_i \Big)
  \leqslant \sum_{i=1}^n \lambda_i D(A_i \| B_i),
  $$
  with equality if and only if for all $j$ such that $\lambda_j > 0$,
  $$
 \log B_j - \log A_j= \log \Big(\sum_{i=1}^n \lambda_i B_i \Big)  - \log  \Big(\sum_{i=1}^n \lambda_i A_i \Big).
  $$
  \end{proposition}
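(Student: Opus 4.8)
The plan is to build everything on the integral representation of the relative entropy established just above,
$$
D(A\|B) = \tr A - \tr B + \int_0^\infty \Phi_\lambda(A,B)\,\frac{d\lambda}{(1+\lambda)^2},
\qquad
\Phi_\lambda(A,B) := \big\langle B-A,\ (\mathcal{L}_B + \lambda\mathcal{R}_A)^{-1}(B-A)\big\rangle ,
$$
which is valid and finite for $A,B$ positive, invertible, and trace-class, and where $\langle\cdot,\cdot\rangle$ is the Hilbert--Schmidt inner product on the real Hilbert space of self-adjoint operators (note $\mathcal{L}_B+\lambda\mathcal{R}_A$ preserves self-adjointness and is positive definite when $A,B\succ 0$). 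Since $\tr A - \tr B$ is linear in $(A,B)$, it reproduces an equality under any convex combination, so the entire statement reduces to the joint convexity of each $\Phi_\lambda$ together with its equality case.

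For the convexity of $\Phi_\lambda$, I would use the elementary variational identity
$$
\langle u, M^{-1}u\rangle \;=\; \sup_{v}\big(2\langle u, v\rangle - \langle v, M v\rangle\big),
$$
valid for positive-definite self-adjoint $M$ and any $u$, with the supremum attained \emph{only} at $v^\star = M^{-1}u$. Because $(A,B)\mapsto(B-A,\ \mathcal{L}_B+\lambda\mathcal{R}_A)$ is affine, $\Phi_\lambda$ is a pointwise supremum of affine functions of $(A,B)$, hence jointly convex; integrating against the positive measure $(1+\lambda)^{-2}\,d\lambda$ preserves this and yields the inequality.

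For the equality case, set $\bar A = \sum_i\lambda_i A_i$, $\bar B = \sum_i\lambda_i B_i$. Equality forces $\int_0^\infty\!\big(\sum_i\lambda_i\Phi_\lambda(A_i,B_i)-\Phi_\lambda(\bar A,\bar B)\big)(1+\lambda)^{-2}\,d\lambda = 0$; the integrand is nonnegative and continuous in $\lambda$, hence vanishes for every $\lambda>0$. By the uniqueness clause of the variational identity this is equivalent to
$$
(\mathcal{L}_{B_i}+\lambda\mathcal{R}_{A_i})^{-1}(B_i-A_i) \;=\; (\mathcal{L}_{\bar B}+\lambda\mathcal{R}_{\bar A})^{-1}(\bar B-\bar A)
\qquad\text{for all }\lambda>0\text{ and all }i\text{ with }\lambda_i>0 .
$$
Now I would use the algebraic identity $(\mathcal{L}_B+\lambda\mathcal{R}_A)^{-1}(B-A) = g_\lambda(\mathcal{L}_B\mathcal{R}_A^{-1})(\idm)$, where $g_\lambda(t)=\tfrac{t-1}{t+\lambda}$ and $\idm$ is the identity viewed as an element of the Hilbert--Schmidt space (this follows from $\mathcal{L}_B+\lambda\mathcal{R}_A=(\mathcal{L}_B\mathcal{R}_A^{-1}+\lambda\idm)\mathcal{R}_A$ and the commutation of $\mathcal{L}_B$ and $\mathcal{R}_A$, since then $\mathcal{R}_A^{-1}(B-A)=(\mathcal{L}_B\mathcal{R}_A^{-1}-\idm)(\idm)$). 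Using $\int_0^\infty g_\lambda(t)(1+\lambda)^{-1}\,d\lambda=\log t$ and the fact that $\mathcal{L}_{B_i}\mathcal{R}_{A_i}^{-1}$ has spectrum in a compact subset of $(0,\infty)$ (so the functional calculus commutes with the $\lambda$-integral), I integrate the displayed identity against $(1+\lambda)^{-1}\,d\lambda$ to get $\log(\mathcal{L}_{B_i}\mathcal{R}_{A_i}^{-1})(\idm)=\log(\mathcal{L}_{\bar B}\mathcal{R}_{\bar A}^{-1})(\idm)$, i.e.\ $\log B_i-\log A_i=\log\bar B-\log\bar A$, since $\log(\mathcal{L}_B\mathcal{R}_A^{-1})=\mathcal{L}_{\log B}-\mathcal{R}_{\log A}$. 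Conversely, if $\log B_i-\log A_i=\log\bar B-\log\bar A=:C$ for all $i$ with $\lambda_i>0$, then $D(A_i\|B_i)=\tr[A_i(\log A_i-\log B_i)]=-\tr[A_iC]$, hence $\sum_i\lambda_i D(A_i\|B_i)=-\tr[\bar A C]=D(\bar A\|\bar B)$, and equality holds.

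I expect the forward equality direction to be the delicate part: converting the everywhere-in-$\lambda$ coincidence of the vectors $(\mathcal{L}_{B_i}+\lambda\mathcal{R}_{A_i})^{-1}(B_i-A_i)$ into the logarithmic condition, which rests on the identity $(\mathcal{L}_B+\lambda\mathcal{R}_A)^{-1}(B-A)=g_\lambda(\mathcal{L}_B\mathcal{R}_A^{-1})(\idm)$, on the scalar evaluation $\int_0^\infty g_\lambda(t)(1+\lambda)^{-1}\,d\lambda=\log t$, and on justifying the interchange of the functional calculus of the commuting pair $(\mathcal{L}_B,\mathcal{R}_A)$ with integration in $\lambda$. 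All integrability and finiteness points, both near $\lambda=0$ and $\lambda=\infty$, are controlled by invertibility of the $A_i,B_i$, which keeps every relevant spectrum bounded away from $0$ and $\infty$.
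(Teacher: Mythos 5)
Your proof is correct, but it takes a genuinely different route from the paper's own proof of this proposition. The paper argues in one stroke via the identity $D(A_i\|B_i)-\tr\big[A_i(\log A-\log B)\big]=\tr A_i-\tr\exp(\log A-\log B+\log B_i)+B\big(A_i\|\exp(\log A-\log B+\log B_i)\big)$, where $B(\cdot\|\cdot)$ is the Bregman divergence and $A=\sum_i\lambda_iA_i$, $B=\sum_i\lambda_iB_i$: summing over $i$ and invoking the concavity of $C\mapsto\tr\exp(K+\log C)$ (Lieb) exhibits $\sum_i\lambda_iD(A_i\|B_i)-D(A\|B)$ as a sum of nonnegative terms, and equality forces the Bregman terms to vanish, i.e.\ $A_i=\exp(\log A-\log B+\log B_i)$, which is exactly the stated logarithmic condition. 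You instead work $\lambda$-by-$\lambda$ in the integral representation: joint convexity of $\Phi_\lambda$ as a supremum of affine functions, equality forcing the optimizers $(\mathcal{L}_{B_i}+\lambda\mathcal{R}_{A_i})^{-1}(B_i-A_i)$ to coincide with the one for the averaged pair, and then integration of $g_\lambda(t)=(t-1)/(t+\lambda)$ against $(1+\lambda)^{-1}d\lambda$ to turn this into $\log B_i-\log A_i=\log B-\log A$. This is essentially the strategy the paper itself deploys only for the \emph{next}, refined proposition (following Petz), where the same $\lambda$-wise identity is integrated against other functions $f$ to extract, e.g., $A_iB_i^{-1}=AB^{-1}$; your choice $f(t)=\log t$ recovers the present equality condition. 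What each buys: the paper's argument is shorter but leans on Lieb's concavity as an external input, whereas yours is self-contained once the integral representation has been derived and generalizes immediately to other operator-convex $f$ (hence to refined equality conditions and the $f$-divergences treated later in the appendix). Two points to spell out when writing it up: in the forward equality direction, state explicitly that the maximizer $v^\star$ for $(\bar A,\bar B)$ also attains each individual supremum (this is what licenses the uniqueness step at each fixed $\lambda$), and justify the interchange of the $\lambda$-integral with the functional calculus of the commuting pair $(\mathcal{L}_{B_i},\mathcal{R}_{A_i})$ by noting that invertibility keeps the spectrum of $\mathcal{L}_{B_i}\mathcal{R}_{A_i}^{-1}$ in a compact subset of $(0,\infty)$ -- the same implicit bounded-spectrum (effectively finite-dimensional) setting the paper's statement relies on. Your converse direction, a direct computation from $\log B_i-\log A_i$ being constant, matches what the paper's condition gives and is fine.
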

  \begin{proof}
  Following~\cite[Theorem 7]{ruskai2002inequalities}, we have: $D(A_i\| B_i) - \tr A_i ( \log A - \log B) 
  = \tr A_i ( \log A_i - \log A + \log B -  \log B_i ) =  \tr A_i - \tr \exp(  \log A - \log B + \log B_i)
  + B(A_i \| \exp(  \log A - \log B + \log B_i) )$, where $B(\cdot \| \cdot)$ is the Bregman divergence (always non-negative), defined as $B(A\|B) = D(A\|B) - \tr A + \tr B$.
  Thus
  \BEAS
 & &   \sum_{i=1}^n \lambda_i   D(A_i\| B_i) -D(A\|B) \\
   & = & 
  \sum_{i=1}^n \lambda_i \Big[ D(A_i\| B_i) - \tr A_i ( \log A - \log B)  \Big]\\
  & = & \tr A - \sum_{i=1}^n \lambda_i \tr \exp(  \log A - \log B + \log B_i)
  +  \sum_{i=1}^n \lambda_i  B(A_i \| \exp(  \log A - \log B + \log B_i) ),
  \EEAS
which is  greater than zero since the function $C \mapsto \tr \exp(K + \log C)$ is concave. There is equality if and only for all $i$ such that $\lambda_i >0$,
  $
  A_i =  \exp(  \log A - \log B + \log B_i),
  $
  because of the Bregman divergences being equal to zero.
  \end{proof}
  More generally, we get for random operators $A$ and $B$, 
  $\ds
  \E \big[ D(A\|B) \big] \geqslant D( \E A \| \E B),
  $
  with equality if and only if almost surely, $\log A - \log B =  \log \E A - \log \E B$. {Note that the concavity of the function $B \mapsto \tr \exp( M + \log B)$, which is crucial in deriving concentration inequalities for matrices and operators~\cite{tropp2015introduction}, is simply obtained by partial Fenchel dualization of the relative entropy.}

  We can also provide more refined relationships at equality, following~\cite{petz1986sufficient}.
  \begin{proposition}
  Given positive invertible self-adjoint operators $A_i$, $B_i$, and \emph{strictly positive} real numbers $\lambda_i$ that sum to $1$,
  then
  $$
  D \Big( \sum_{i=1}^n \lambda_i A_i \big\| \sum_{i=1}^n \lambda_i B_i \Big)
  = \sum_{i=1}^n \lambda_i D(A_i \| B_i),
  $$
  implies that for all $i$, $A_i B_i^{-1} = A B^{-1}$, where $A =  \sum_{i=1}^n \lambda_i A_i$
  and $B =  \sum_{i=1}^n \lambda_i B_i$.
  \end{proposition}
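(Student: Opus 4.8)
The plan is to deduce the conclusion from the integral representation of $D(A\|B)$ established in Appendix~\ref{app:jointcvx}, which is sharper than the logarithmic equality condition of Proposition~\ref{prop:jointconcv} (the latter only pins down $\log A_i-\log B_i$, not the products $A_iB_i^{-1}$). Write $\mathcal{L}_C$, $\mathcal{R}_C$ for left and right multiplication by a self-adjoint operator $C$; recall from the appendix that, whenever $B^{-1/2}AB^{-1/2}$ is bounded,
\[
D(A\|B) = \tr A - \tr B + \int_0^{\infty} \big\langle B - A,\, (\mathcal{L}_B + \lambda\mathcal{R}_A)^{-1}(B-A)\big\rangle \, \frac{d\lambda}{(1+\lambda)^2}.
\]
First I would apply this both to each pair $(A_i,B_i)$ and to $A=\sum_i\lambda_iA_i$, $B=\sum_i\lambda_iB_i$. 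Since $\sum_i\lambda_i\tr A_i=\tr A$ and similarly for $B$, the affine terms cancel upon averaging and one obtains
\[
\sum_{i=1}^n \lambda_i D(A_i\|B_i) - D(A\|B)
= \int_0^{\infty} \Big( \sum_{i=1}^n \lambda_i \big\langle u_i, M_i^{-1}u_i\big\rangle - \big\langle \bar u, \bar M^{-1}\bar u\big\rangle \Big) \frac{d\lambda}{(1+\lambda)^2},
\]
with $u_i = B_i - A_i$, $M_i = \mathcal{L}_{B_i} + \lambda\mathcal{R}_{A_i}$, and $\bar u = \sum_i\lambda_i u_i = B-A$, $\bar M = \sum_i\lambda_i M_i = \mathcal{L}_B + \lambda\mathcal{R}_A$.

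Next I would use that each $M_i$ is positive and bounded below (because $A_i,B_i$ are, and $\lambda>0$) together with the convexity of $(u,M)\mapsto\langle u,M^{-1}u\rangle$ already invoked in Appendix~\ref{app:jointcvx}: this makes the integrand pointwise nonnegative, and it is continuous in $\lambda\in(0,\infty)$ by norm-continuity of the resolvents. Hence the hypothesis that the left-hand side is $0$ forces the integrand to vanish for \emph{every} $\lambda>0$. Using the variational formula $\langle u,M^{-1}u\rangle = \sup_v\big(2\,\mathrm{Re}\langle v,u\rangle - \langle v,Mv\rangle\big)$, whose maximiser is the unique solution of $Mv=u$, equality in the displayed Jensen inequality at a fixed $\lambda$ forces the common maximiser $v^\star=\bar M^{-1}\bar u$ to be a maximiser in each summand; since every $\lambda_i>0$, this gives $M_i^{-1}u_i = \bar M^{-1}\bar u$ for all $i$, i.e.
\[
(\mathcal{L}_{B_i} + \lambda\mathcal{R}_{A_i})^{-1}(B_i - A_i) = (\mathcal{L}_B + \lambda\mathcal{R}_A)^{-1}(B-A)
\qquad \text{for all } i \text{ and all } \lambda > 0.
\]

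Finally I would let $\lambda\to 0^+$: since $A_i,B_i,A,B$ are bounded with bounded inverses, $\mathcal{L}_{B_i}+\lambda\mathcal{R}_{A_i}\to\mathcal{L}_{B_i}$ in operator norm and stays invertible, so the identity above passes to the limit and becomes $\mathcal{L}_{B_i}^{-1}(B_i-A_i) = \mathcal{L}_B^{-1}(B-A)$, that is $B_i^{-1}(B_i-A_i) = B^{-1}(B-A)$, hence $B_i^{-1}A_i = B^{-1}A$ for every $i$. Taking adjoints and using self-adjointness of $A_i,B_i,A,B$ yields $A_iB_i^{-1} = AB^{-1}$, which is the claim. (Sending $\lambda\to+\infty$ instead would give $B_iA_i^{-1}=BA^{-1}$, equivalent after inversion and adjunction.) I expect the only delicate points to be the clean statement of the equality case of the joint convexity and the resolvent limit in the operator setting; both are routine once one notes that $\mathcal{L}_{B_i}$ and $\mathcal{R}_{A_i}$ are commuting positive superoperators bounded below, so I would keep those verifications short.
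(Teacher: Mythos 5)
Your proof is correct and follows essentially the same route as the paper: the integral representation $D(A\|B)=\tr A-\tr B+\int_0^\infty\langle B-A,(\mathcal{L}_B+\lambda\mathcal{R}_A)^{-1}(B-A)\rangle\,(1+\lambda)^{-2}d\lambda$, the equality case of the convexity of $(u,M)\mapsto\langle u,M^{-1}u\rangle$ at each $\lambda>0$ giving $(\mathcal{L}_{B_i}+\lambda\mathcal{R}_{A_i})^{-1}(B_i-A_i)=(\mathcal{L}_B+\lambda\mathcal{R}_A)^{-1}(B-A)$, and the limit $\lambda\to0^+$. The only difference is cosmetic: you verify the equality case via the variational formula $\langle u,M^{-1}u\rangle=\sup_v(2\,\mathrm{Re}\langle v,u\rangle-\langle v,Mv\rangle)$, whereas the paper's accompanying lemma does it through a Taylor expansion with integral remainder in Jensen's inequality.
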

  \begin{proof}
  Given the representation above, and the fact that  $(u,M) \mapsto \langle u,  M^{-1} u\rangle $ is convex, we get the joint convexity of $(A,B) \mapsto D(A\|B)$ is convex. In order to study the equality sign, we can use the lemma below, and 
  for each $\lambda > 0$, 
  $\big[ \mathcal{L}_{B_i}   + \lambda \mathcal{R}_{A_i}  \big]^{-1} ( B_i - A_i )  
  = \big[ \mathcal{L}_{B}   + \lambda \mathcal{R}_{A}  \big]^{-1} ( B - A )
    $.
    By letting $\lambda$ go to zero, we get the result.
    
    Note that following~\cite{petz1986sufficient}, we can also consider
    that
      $$
  (  \mathcal{L}_{B_i}  + \lambda \mathcal{R}_{A_i} )^{-1} 
  (B_i - A_i)  =  (  \mathcal{L}_{B_i} \mathcal{R}_{A_i}^{-1} + \lambda I )^{-1}  \big[  \mathcal{L}_{B_i}  \mathcal{R}_{A_i}^{-1} - \idm
  \big] \idm \mbox{ is independent of } i,
  $$
  and is equal to $  (  \mathcal{L}_{B} \mathcal{R}_{A}^{-1} + \lambda I )^{-1}  \big[  \mathcal{L}_{B}  \mathcal{R}_{A}^{-1} - \idm
  \big] \idm$. Thus by using representations of a function $f: \rb_+ \to \rb$ as $\ds
  f(t) = (t-1) \int_0^{\infty} \frac{d \nu(\lambda)}{t + \lambda}
  $ for a certain signed measure $d\nu$, we can deduce that  $
  f\big[\mathcal{L}_{B_i}  \mathcal{R}_{A_i}^{-1}\big] \idm \mbox{ is independent of } i,
  $
  and equal to $ f\big[\mathcal{L}_{B}  \mathcal{R}_{A}^{-1}\big] \idm $.   For example. with $f(t) = \log t$, we get $\log B_i - \log A_i$ independent of $i$.
  For example. with $f(t) = t-1$, we get $B_i A_i^{-1} $ independent of $i$. For $f(t) = t^z$,
  $B_i^{z} A_i^{-z}$ independent of $i$, for any complex $z$.
  \end{proof}

  \begin{lemma}
  If $M_1,\dots,M_n$ are invertible and all $\lambda_i$ are strictly positive and sum to $1$, then
  $$
  \sum_{i=1}^n \lambda_i u_i^\top M_i^{-1} u_i
  = \Big( \sum_{i=1}^n \lambda_i u_i \Big)
  \Big( \sum_{i=1}^n \lambda_i M_i \Big)^{-1}
  \Big( \sum_{i=1}^n \lambda_i u_i \Big)
  $$
  if and only if $M_i^{-1} u_i$ is independent of $i$, and equal to $M^{-1}u$ where $M= \sum_{i=1}^n \lambda_i M_i  $ and $u =   \sum_{i=1}^n \lambda_i u_i $.
  \end{lemma}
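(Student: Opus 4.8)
The plan is to reduce the claimed identity to the statement that a sum of manifestly nonnegative quadratic forms vanishes. Write $M = \sum_{i=1}^n \lambda_i M_i$ and $u = \sum_{i=1}^n \lambda_i u_i$, and set $w = M^{-1} u$; this is well-defined since each $M_i$ is positive definite, hence so is $M$. The key computation is to expand the nonnegative quantity
$$
\sum_{i=1}^n \lambda_i \,(M_i^{-1} u_i - w)^\top M_i (M_i^{-1} u_i - w),
$$
each term of which is $\geqslant 0$ because $M_i \succ 0$. Expanding the product and using $\sum_i \lambda_i M_i w = M w = u$ together with the symmetry of $M$, the cross terms collapse ($-u^\top w - w^\top u + w^\top M w = -w^\top u = -u^\top M^{-1}u$), and this quantity equals exactly
$$
\sum_{i=1}^n \lambda_i\, u_i^\top M_i^{-1} u_i - u^\top M^{-1} u,
$$
i.e. the difference between the two sides of the identity to be characterized.

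Given this identity, the lemma is immediate. If $M_i^{-1}u_i$ is independent of $i$, it must equal $M^{-1}u = w$: indeed $u_i = M_i(M_i^{-1}u_i)$ gives $u = \sum_i \lambda_i M_i (M_i^{-1}u_i) = M\,(M_i^{-1}u_i)$, so $M_i^{-1}u_i = M^{-1}u$. Then every summand above vanishes and the two sides of the identity agree. Conversely, if the two sides agree, the sum of the nonnegative terms $\lambda_i\,(M_i^{-1}u_i - w)^\top M_i (M_i^{-1}u_i - w)$ is zero; since each $\lambda_i > 0$ and each $M_i$ is strictly positive, each term is forced to be zero, hence $M_i^{-1} u_i = w = M^{-1} u$ for all $i$, which is the desired conclusion.

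The only point requiring care is the positive-definiteness hypothesis, which is implicit in the word "invertible" here: the argument uses $M_i \succ 0$ both to guarantee $(M_i^{-1}u_i - w)^\top M_i(M_i^{-1}u_i - w) \geqslant 0$ and to deduce that its vanishing forces $M_i^{-1}u_i = w$. In the intended application $M_i = \mathcal{L}_{B_i} + \lambda \mathcal{R}_{A_i}$ with $A_i, B_i$ strictly positive and $\lambda > 0$, so this holds. There is no genuine obstacle; the "hard part" is merely the bookkeeping in the expansion and the observation that strictness of the $\lambda_i$ and of the $M_i$ is precisely what upgrades the underlying joint-convexity inequality of $(u,M)\mapsto u^\top M^{-1}u$ to the stated equality characterization. (Equivalently one could argue through the variational formula $u^\top M^{-1} u = \sup_v\{\,2 v^\top u - v^\top M v\,\}$ and track when a single maximizer $v$ is simultaneously optimal for every term; the direct expansion above is the quickest route.)
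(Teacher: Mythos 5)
Your proof is correct, but it takes a different route from the paper's. The paper treats this as an equality case of Jensen's inequality for the jointly convex function $f(u,M) = u^\top M^{-1}u$: it writes down the second-order Taylor expansion of $f$, invokes the integral remainder formula $\E f(X) - f(\E X) = \int_0^1 \E\big[(X-\E X)^\top f''(tX+(1-t)\E X)(X-\E X)\big]dt$, argues that equality forces the (nonnegative) quadratic form to vanish, and evaluates at $t=0$, $X=(u_i,M_i)$ to get $u_i - u - (M_i - M)M^{-1}u = 0$, i.e., $M_i^{-1}u_i = M^{-1}u$. You instead give an exact, finite completing-the-square identity: with $w = M^{-1}u$, the Jensen gap equals $\sum_i \lambda_i (M_i^{-1}u_i - w)^\top M_i (M_i^{-1}u_i - w)$, from which both directions of the equivalence are immediate. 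Your decomposition is more elementary and self-contained (no differentiability of $f$, no ``almost surely in $t$'' argument, and it handles the easy ``if'' direction in the same stroke), and it makes completely explicit where strict positivity of the $\lambda_i$ and of the $M_i$ is used; the paper's route is less specific to this particular $f$ and reuses machinery (the remainder formula, Eq.~(13) in the appendix) that serves elsewhere in the paper. You are also right to flag that ``invertible'' must be read as symmetric positive definite for the equality characterization: the paper's own argument needs this too (convexity of $f$ is what makes the remainder nonnegative), and it holds in the intended application $M_i = \mathcal{L}_{B_i} + \lambda \mathcal{R}_{A_i}$ with $A_i, B_i$ positive.
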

  \begin{proof}
 We consider the function $f(u,M) = u^\top M^{-1} u$, with Taylor expansion
 $$
 f(u+\delta,M+\Delta) = f(u,M) + 2 \delta^\top M^{-1}u  - u^\top M^{-1} \Delta M^{-1} u
 + 2 (\delta - \Delta M^{-1}u)^\top M^{-1}  (\delta - \Delta M^{-1}u) + O( \| \Delta\|^3 + \| \delta\|^3).
 $$
We now use the following result on the remainder in Jensen's inequality (which itself is a direct consequence of the Taylor expansion with integral remainder taken at $X$, while expanded at $\E X$):
\BEQ
\label{eq:jensen}
\E f(X) - f(\E X) = \int_0^1 \E \Big[
(X - \E X)^\top f''( tX + (1-t) \E X) ( X - \E X) 
\Big] dt.
\EEQ
{In our particular situation, by the invertibility assumption, the matrix $M$ is  invertible almost surely, and having equality in Jensen's inequality imposes to have $(X - \E X)^\top f''( tX + (1-t) \E X) ( X - \E X) = 0 $ almost surely in $t$ and $X$. Thus, at $t=0$ and $X = (u_i,M_i)$, we get: $u_i - u - (M_i - M) M^{-1} u = 0$, which is exactly 
 $M_i^{-1} u_i = M^{-1} u$.}
\end{proof}

\subsection{Monotonicity of the relative entropy}
\label{app:monotonicity}
  \begin{proposition}\label{prop:monot}
  Given operators $C_i$,  $i=1,\dots,n$, such that $\sum_{i=1}^n C_i^\ast C_i = \idm$, then
  $$
  D \Big( \sum_{i=1}^n C_i A C_i^\ast \Big\|
  \sum_{i=1}^n C_i B C_i^\ast \Big)  \leqslant D(A\|B).
  $$
  \end{proposition}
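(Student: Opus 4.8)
The plan is the classical dilation-and-twirl argument: realize the map $\Phi(\cdot)=\sum_{i=1}^n C_i(\cdot)C_i^\ast$ as an isometry followed by a partial trace, use the invariance of $D$ under isometries and its additivity over tensor products, and reduce monotonicity under a partial trace to the joint convexity already established in Proposition~\ref{prop:jointconcv}.

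First I would introduce $V:\H\to\mathcal K\otimes\cb^n$ defined by $Vx=\sum_{i=1}^n(C_ix)\otimes e_i$, with $(e_i)$ the canonical basis of $\cb^n$. Since $\|Vx\|^2=\sum_i\|C_ix\|^2=\langle x,(\sum_iC_i^\ast C_i)x\rangle=\|x\|^2$, the map $V$ is an isometry, i.e.\ $V^\ast V=\idm$, and a direct computation gives $VAV^\ast=\sum_{i,j}(C_iAC_j^\ast)\otimes e_ie_j^\ast$, so that $\tr_{\cb^n}(VAV^\ast)=\sum_{i=1}^nC_iAC_i^\ast$. Because $V$ is an isometry, $VAV^\ast$ is, on its range (which lies inside $\mathrm{range}(V)$), unitarily equivalent to $A$, so $\log(VAV^\ast)=V(\log A)V^\ast$ there; combined with $V^\ast V=\idm$ and cyclicity of the trace this yields $\tr[VAV^\ast\log(VAV^\ast)]=\tr[A\log A]$ and $\tr[VAV^\ast\log(VBV^\ast)]=\tr[A\log B]$ whenever $\mathrm{supp}(A)\subseteq\mathrm{supp}(B)$ (both relative entropies being $+\infty$ otherwise), hence $D(VAV^\ast\|VBV^\ast)=D(A\|B)$. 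It therefore suffices to prove monotonicity under a partial trace, namely $D(\tr_{\cb^n}\widetilde A\,\|\,\tr_{\cb^n}\widetilde B)\le D(\widetilde A\|\widetilde B)$ for positive operators $\widetilde A,\widetilde B$ on $\mathcal K\otimes\cb^n$.

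For this I would use a unitary $1$-design $U_1,\dots,U_N$ of $\cb^n$ (for instance the $N=n^2$ Heisenberg--Weyl operators), for which $\frac1N\sum_kU_kXU_k^\ast=\frac{\tr X}{n}\idm_n$ for every $X$; lifting to $\mathcal K\otimes\cb^n$ gives $\frac1N\sum_k W_kCW_k^\ast=(\tr_{\cb^n}C)\otimes\frac{\idm_n}{n}$ for $W_k=\idm_{\mathcal K}\otimes U_k$ and any trace-class $C$ (all averaging takes place on the finite-dimensional factor, so no restriction on $\mathcal K$ is needed). Applying this to $\widetilde A$ and $\widetilde B$, joint convexity (Proposition~\ref{prop:jointconcv}) together with unitary invariance $D(UCU^\ast\|UDU^\ast)=D(C\|D)$ gives
$$
D\Big((\tr_{\cb^n}\widetilde A)\otimes\tfrac{\idm_n}{n}\,\Big\|\,(\tr_{\cb^n}\widetilde B)\otimes\tfrac{\idm_n}{n}\Big)
\;\le\;\frac1N\sum_{k=1}^N D\big(W_k\widetilde AW_k^\ast\,\big\|\,W_k\widetilde BW_k^\ast\big)\;=\;D(\widetilde A\|\widetilde B).
$$
Finally, the tensor additivity $D(X\otimes Y\|X'\otimes Y')=(\tr Y)\,D(X\|X')+(\tr X)\,D(Y\|Y')$, a consequence of $\log(X\otimes Y)=\log X\otimes\idm+\idm\otimes\log Y$, collapses the left-hand side to $D(\tr_{\cb^n}\widetilde A\,\|\,\tr_{\cb^n}\widetilde B)$ since $Y=Y'=\idm_n/n$ kills the second term and $\tr(\idm_n/n)=1$. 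Chaining the two reductions yields the claimed inequality.

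The main obstacle is the bookkeeping around non-invertibility and domains: Proposition~\ref{prop:jointconcv} is stated for invertible operators, so I would apply it to $\widetilde A+\varepsilon\idm$ and $\widetilde B+\varepsilon\idm$ and let $\varepsilon\downarrow0$ using lower semicontinuity of $D$, or simply dispatch the case $\mathrm{supp}(\widetilde A)\not\subseteq\mathrm{supp}(\widetilde B)$ separately (where the right-hand side is $+\infty$). The remaining points — the identity $\log(VAV^\ast)=V(\log A)V^\ast$ on $\mathrm{range}(V)$ and the twirl identity on a possibly infinite-dimensional $\mathcal K$ — are routine once one observes that the auxiliary factor $\cb^n$ is finite-dimensional, so all spectral-calculus and trace-class manipulations go through unchanged.
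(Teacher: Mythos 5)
Your argument is correct, and it rests on the same two pillars as the paper's proof — the joint convexity of $D$ (Proposition~\ref{prop:jointconcv}) together with unitary/isometry invariance, applied through the very same isometry $V:\H\to\H\otimes\cb^n$, $Vx=\sum_i (C_ix)\otimes e_i$ (the paper's ``column of blocks $C_i$''). The implementations differ in the choice of twirl. You take the textbook partial-trace route: write $\sum_i C_iAC_i^\ast=\tr_{\cb^n}(VAV^\ast)$, prove monotonicity under $\tr_{\cb^n}$ by averaging over the $n^2$ Heisenberg--Weyl unitaries on the ancilla (a full depolarizing twirl, giving $(\tr_{\cb^n}\widetilde A)\otimes\idm_n/n$), and then remove the ancilla by tensor additivity of $D$. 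The paper instead first applies joint convexity to bound $D\big(\sum_i C_iAC_i^\ast\|\sum_i C_iBC_i^\ast\big)$ by $\sum_i D(C_iAC_i^\ast\|C_iBC_i^\ast)$, identifies this sum with the relative entropy of the block-diagonal (direct-sum) operators, and realizes the block-diagonal projection as a \emph{dephasing} twirl of $VAV^\ast$ by only the $n$ diagonal Fourier-phase unitaries $U_k$; no partial-trace formalism, 1-design, or tensor-additivity identity is needed, at the price of one extra application of joint convexity. Your route buys a cleaner conceptual statement (monotonicity under partial trace as an intermediate result, which is of independent use) and isolates the equality $D(VAV^\ast\|VBV^\ast)=D(A\|B)$ explicitly; the paper's route uses fewer unitaries and stays entirely within finite convex combinations, which meshes directly with the equality analysis it carries out afterwards. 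One small caution on your regularization remark: in infinite dimension $\widetilde A+\varepsilon\idm$ is no longer trace class, so the $\varepsilon$-perturbation should be done differently (e.g.\ handle the support condition ${\rm supp}(A)\subseteq{\rm supp}(B)$ directly, as you also suggest, or perturb only on the relevant finite-dimensional/compact pieces); this is the same level of technical looseness the paper itself tolerates, so it is a wrinkle rather than a gap.
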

  \begin{proof}
  We have, with $\omega = \exp( 2i\pi/n)$, using the joint convexity of the relative entropy and the expression $ \frac{1}{n} \sum_{k=1}^n \omega^{k(i-j)}= 1_{i = j}$:
  \BEAS
    D \Big( \sum_{i=1}^n C_i A C_i^\ast \Big\|
  \sum_{i=1}^n C_i B C_i^\ast \Big)  
  & \leqslant &   \sum_{i=1}^n  D ( C_i A C_i^\ast  \|
  C_i B C_i^\ast )  
=   D \Big( \sum_{i=1}^n (e_i e_i^\top ) \otimes E_i A E_i^\ast \Big\|
 \sum_{i=1}^n (e_i e_i^\top ) \otimes C_i B C_i^\ast \Big)  
\\
  & = &    D \Big( \frac{1}{n} \sum_{k=1}^n \omega^{k(i-j)}\sum_{i,j=1}^n (e_i e_j^\top ) \otimes C_i A C_j^\ast \Big\|
 \frac{1}{n} \sum_{k=1}^n \omega^{k(i-j)}\sum_{i,j=1}^n (e_i e_j^\top ) \otimes C_i B C_j^\ast \Big)  .
\EEAS
Consider the matrices $M$ and $N$ defined by blocks $M_{ij} = C_i A C_j^\ast $ and 
$N_{ij} = C_i B C_j^\ast $, as well as, the matrix $U_k$ defined by blocks $(U_k)_{ij} = 1_{i=j} \omega^{ki} I$, we get from the equation above:
 \BEAS
    D \Big( \sum_{i=1}^n C_i A C_i^\ast \Big\|
  \sum_{i=1}^n C_i B C_i^\ast \Big)  
  & \leqslant &   D \Big( \frac{1}{n} \sum_{k=1}^n U_k M U_k^\ast \Big\|
 \frac{1}{n} \sum_{k=1}^n U_k N U_k^\ast \Big)  
\\
  & \leqslant  &  \frac{1}{n} \sum_{k=1}^n  D \Big(  U_k M U_k^\ast  \Big\|
 U_k M U_k^\ast \Big)   = D(M \| N) 
  \EEAS
  because each $U_k$ is unitary. Moreover, we can write $M = V A V^*$ and $N = V A V^\ast$, with a an operator $V$ defined by a column of blocks $C_i$, $i=1,\dots,n$, so that $V^* V = \sum_{i=1}^n C_i^\ast C_i = \idm$. Thus $D(M \| N)  = D(A\|B)$, which shows the desired bound.
  
  Using propositions above, there is is equality if and only:
$ \log (C_i B C_i^\ast)  - \log( C_i A C_i^\ast) $  independent of $i$ and   $ \log  [  U_k N U_k^\ast \  ]  - \log  [ U_k M U_k^\ast  ]
  =  U_k  (\log N - \log M)  \big)U_k^\ast $ independent of $k$, that is $\log N - \log M$ block diagonal. Overall, we get equality if $\log N - \log M$ is block diagonal with equal blocks.
  
We can also obtain necessary conditions for equality, that is,
 $ (C_i B C_i^\ast) ( C_i A C_i^\ast)^{-1}$  independent of $i$, and  $ U_k N U_k^\ast    [   U_k N U_k^\ast]^{-1}
  = U_k N M^{-1} U_k^\ast$ independent of $k$, that is,
$ NM^{-1}$ block-diagonal with equal blocks.

  \end{proof}
  This extends to a random operator $C$, such that $\E \big[ C^\ast C \big] = \idm$, for which we get:
  $$
  D\Big( \E \big[ C AC^\ast \big] \, \big\|  \, \E \big[ C BC^\ast \big] \Big) \leqslant D(A\| B).
  $$
  A particular example is obtained from $n$ Hermitian positive operators $D_1,\dots,D_n$ such that $\sum_{i=1}^n D_i = \idm$, and we get, with $\mu_i = \tr (D_iA)$ and $\nu_i = \tr (D_iB)$:
  $$
  \sum_{i=1}^n \mu_i \log \frac{\mu_i}{\nu_i} \leqslant D(A\| B).
  $$
  This corresponds to, if $D_i = \sum_{k} \lambda_{ik} u_{ik} \otimes u_{ik}$, to 
  $C_{ik} =\lambda_{ik}^{1/2}   e_i u_{ik}^\ast$, 
  so that 
  $$
  \sum_{i,k} E_i A E_i^\ast = 
 \sum_{i,k} \lambda_{ik}   e_i u_{ik}^\ast M u_{ik} e_i^\ast
 = \sum_{i=1}^n ( \tr A D_i) e_i e_i^\ast.
  $$
 
 \subsection{Concavity of relative entropy}
 \label{app:concav}
 In this section, we consider two positive self-adjoint operators, $M$ and $N$ such that $M \preccurlyeq \alpha N$ for some $\alpha$. We consider the function:
 $$
 f: \Lambda \to D( \Lambda^{1/2} M  \Lambda^{1/2} \|  \Lambda^{1/2} N \Lambda^{1/2}).
 $$
 It happens to be convex in $(M,N)$. It is also concave in $\Lambda$ in the set of positive bounded operators. This is due to the representation, valid for invertible operators:
 \BEAS
& &   f(\Lambda) - \tr M\Lambda + \tr N\Lambda \\
  & = &  
     \int_0^\infty 
  \big\langle   \Lambda^{1/2}   (N-M)  \Lambda^{1/2} , 
(  \mathcal{L}_{  \Lambda^{1/2} N \Lambda^{1/2}}   + \lambda \mathcal{R}_{ \Lambda^{1/2} M  \Lambda^{1/2}})^{-1} 
 \Lambda^{1/2}   (N-M)  \Lambda^{1/2} \big \rangle
  \frac{1}{(1+\lambda)^2} d\lambda  \\
  & = & 
   \int_0^\infty 
  \big\langle      (N-M)    , 
(  \mathcal{L}_{  N  }  \mathcal{R}_{\Lambda}^{-1}   + \lambda   \mathcal{L}_{\Lambda}^{-1}  \mathcal{R}_{ M} )^{-1} 
  (N-M)  \big \rangle
  \frac{1}{(1+\lambda)^2} d\lambda.
 \EEAS
 Since the mapping $(A,B) \mapsto (A^{-1}+ B^{-1})^{-1} = B - B(A+B)^{-1} B$ is matrix-concave~\cite[Corollary 1.1]{ando1979concavity}, we get the desired concavity.

 \subsection{Extension to other divergences}
 \label{app:div}
 
 As mentioned in~\cite{ruskai2007another}, both the joint convexity and the monotonicity of the relative entropy $D(A\|B)$ in Prop.~\ref{prop:jointconcv} and Prop.~\ref{prop:monot} can be extended to most operator-convex functions on $\rb_+^\ast$, and 
 $$
 D_f(A \| B) =  \big\langle A^{1/2}, f \big[  \mathcal{L}_B   \mathcal{R}_A^{-1}\big]A^{1/2}  \big \rangle,
 $$
 with the relative entropy being $D_{\rm - \log}$. Among other interesting cases, we have:
\BIT
\item $\ds f(t) = ( t - 1)^2$, with $D_f(A\| B) = \tr \big[ (B-A)^2 A^{-1} \big]$.
\item $\ds f(t) = \frac{(t-1)^2}{t}$, with $D_f(A\| B) = \tr \big[ (B-A)^2 B^{-1} \big]$.
\item $\ds f(t) = \frac{(t-1)^2}{1+t} $, with $D_f(A\|B) = \langle A^{1/2}, ( \mathcal{L}_B + \mathcal{R}_A)^{-1}  A^{1/2} \rangle$.
\item $\ds f(t) =  1-\sqrt{t}$, with $D_f(A\|B) = \tr (A)  - \big\langle A^{1/2},  [ \mathcal{L}_B  \mathcal{R}_A^{-1} \big]^{1/2} A^{1/2} \big \rangle$.
\EIT   
Specialized to covariance operators, we have, with $A=\Sigma_p$ and $B=\Sigma_q$, and following the same reasoning as in the proof of Prop.~\ref{prop:propKL}:
\BEAS
D_f(\Sigma_p \| \Sigma_q)
& \leqslant & \int_\X D_f \Big( \varphi(x)\varphi(x)^\ast \Big\| \frac{dq}{dp}(x)  \varphi(x)\varphi(x)^\ast \Big) dp(x) \\
& = &  \int_\X f \Big(  \frac{dq}{dp}(x)  \Big) dp(x) = D_f(p\|q),
\EEAS
which is the \emph{reverse} $f$-divergence associated with the function $f$~\cite{csiszar1967information,topsoe2000some}. We thus immediately get lower bounds on:
\BIT
\item the square Hellinger distance $\ds D_f(p\|q) = \int_\X \Big[ 1  - \Big( \frac{dq}{dp}(x) \Big)^{1/2} \Big] dp(x) 
= \frac{1}{2} \int_\X \Big[   \Big( \frac{dq}{d\tau}(x) \Big)^{1/2} -  \Big( \frac{dp}{d\tau}(x) \Big)^{1/2} \Big]^2 d\tau(x)  $ for $f(t) = 1-\sqrt{t}$, 
\item the reverse Pearson $\chi^{2}$-divergence $\ds D_f(p\|q) = \int_\X \Big[ 1  -  \frac{dq}{dp}(x)   \Big]^2 dp(x) 
= \frac{1}{2} \int_\X \Big[   \frac{dq}{d\tau}(x)   -  \frac{dp}{d\tau}(x)   \Big]^2 \frac{d\tau(x)}{ dp/d\tau(x)}  $ for $f(t) = (t-1)^2$, 

\item  the  Pearson $\chi^{2}$-divergence $\ds D_f(p\|q) = \int_\X \Big[ 1  -  \frac{dq}{dp}(x)   \Big]^2 \frac{dp(x) }{dq/dp(x)}
= \frac{1}{2} \int_\X \Big[   \frac{dq}{d\tau}(x)   -  \frac{dp}{d\tau}(x)   \Big]^2 \frac{d\tau(x)}{ dq/d\tau(x)}  $ for $\ds f(t) = \frac{(t-1)^2}{t}$, 

\item the Vincze-Le Cam distance  $\ds D_f(p\|q) = \int_\X \Big[ 1  -  \frac{dq}{dp}(x)   \Big]^2 \frac{dp(x) }{1+ dq/dp(x)}$\\  equal to $\ds \frac{1}{2} \int_\X \Big[   \frac{dq}{d\tau}(x)   -  \frac{dp}{d\tau}(x)   \Big]^2 \frac{d\tau(x)}{ dq/d\tau(x)
+  dp/d\tau(x) }  $ for $\ds f(t) = \frac{(t-1)^2}{1+t}$.

\EIT

  \section{Proof of lower bound (Prop.~\ref{prop:asymptotics})}
  \label{app:asymptotics}
  
  \begin{proof}
  Starting from $ \tilde{p}(y) =   \int_\X h(x,y) dp(x)$ and $ \tilde{q}(y)  = \int_\X h(x,y) dq(x)$, we can provide an exact expression in the usual data processing inequality for the regular KL divergence~\cite{cover1999elements}, with two joint distributions $\bar{p}(x,y) = p(x) h(x,y)$ and $\bar{q}(x,y) = q(x) h(x,y) $, so that, with $f(u) =u \log u$:
  \BEAS
  D( p \| q ) & = &  D(\bar{p} \| \bar{q}) = \int_{\X} \int_\X \bar{q}(x,y) f \Big(
  \frac{\bar{p}(x,y)}{\bar{q}(x,y)}
  \Big)d\tau(x)d\tau(y) \\
  & \geqslant & \int_\X \tilde{q}(y) f\Big(
  \int_\X
  \bar{q}(x|y) \frac{\bar{p}(x,y)}{\bar{q}(x,y)}
  d\tau(x)
  \Big) d\tau(y)  \mbox{ by Jensen's inequality},
  \\
   & = & \int_\X \tilde{q}(y) f\Big(
  \ \frac{\tilde{p}(y)}{\tilde{q}(y)}
  \Big) d\tau(y) = D(\tilde{p} \| \tilde{q}).
  \EEAS
  Using \eq{jensen}, we thus get
    \BEAS
  D(p\|q) - D(\tilde{p} \| \tilde{q} ) 
   & = & \int_\X\int_\X \tilde{q}(y) h(x,y) \Big(
  \frac{p(x)}{q(x)}-  \frac{\tilde{p}(y)}{\tilde{q}(y)}
  \Big)^2 \Big( \int_0^1 \frac{dt}{t  \frac{p(x)}{q(x)}+(1-t) \frac{\tilde{p}(y)}{\tilde{q}(y)}} \Big)  d\tau(x) d\tau(y).
  \EEAS
  Since for all $x \in \X$, $t  \frac{p(x)}{q(x)}+(1-t) \frac{\tilde{p}(y)}{\tilde{q}(y)} \geqslant \inf_{x \in \X} \frac{p(x)}{q(x)}$, we can then bound as
  \BEAS
    D(p\|q) - D(\tilde{p} \| \tilde{q} ) 
  & \leqslant & \sup_{x \in \X} \frac{q(x)}{p(x)} \int_\X \int_\Y \tilde{q}(y) h(x,y) \Big(
  \frac{p(x)}{q(x)}-  \frac{\tilde{p}(y)}{\tilde{q}(y)}
  \Big)^2 d\tau(x) d\tau(y)
  \\
    & = & \sup_{x \in \X} \frac{q(x)}{p(x)} \int_\X\int_\X \tilde{q}(y) h(x,y) \Big(
  \frac{p(x)}{q(x)}-  \frac{p(y)}{q(y)} + \frac{p(y)}{q(y)}  - \frac{\tilde{p}(y)}{\tilde{q}(y)}
  \Big)^2 d\tau(x) d\tau(y)\\
    & \leqslant & 2 \sup_{x \in \X} \frac{q(x)}{p(x)} \int_\X\int_\X \tilde{q}(y) h(x,y) \Big(
  \frac{p(x)}{q(x)}-  \frac{p(y)}{q(y)}   \Big)^2 d\tau(x) d\tau(y) \\
  & & 
  +  2\sup_{x \in \X} \frac{q(x)}{p(x)} \int_\X \tilde{q}(y)  \Big(
\frac{p(y)}{q(y)}  - \frac{\tilde{p}(y)}{\tilde{q}(y)}\Big)^2  d\tau(y).
  \EEAS
  We assume Lipschitz-continuity for $p$ and $q$ in the following form:  for all $x,y$, $\big| \frac{q(x)}{q(y)} - 1 \big| \leqslant d(x,y) C_q$, 
   $\big| \frac{p(x)}{p(y)} - 1 \big| \leqslant d(x,y) C_p$, and
    $\big| \frac{p(y)}{q(y)} -  \frac{p(x)}{q(x)}  \big| \leqslant d(x,y) C_{p/q}$ for some distance $d$ on $\X$. Then the first term in the bound above is less than 
$\ds    2 C_{p/q}^2 
       \sup_{x \in \X} \frac{q(x)}{p(x)} \sup_{x \in \X} \int_\X h(x,y) d(x,y)^2 d\tau(y)  $.
    For the second term, we write:
    \BEAS
    \Big|
\frac{p(y)}{q(y)}  - \frac{\tilde{p}(y)}{\tilde{q}(y)}\Big|
& \leqslant & \frac{p(y) | \tilde{q}(y) - q(y)| +  q(y) | \tilde{p}(y) - p(y)|}{q(y) \tilde{q}(y)}
\\
& \leqslant & \frac{p(y) \int_\X | q(y) - q(y)| h(x,y) d\tau(x) +  q(y) \int_\X | p(y) - p(y)| h(x,y) d\tau(x)}{q(y) \tilde{q}(y)}
\\
& \leqslant & \frac{C_q p(y) q(y) \int_\X d(x,y) h(x,y) d\tau(x) +  C_p q(y) p(y)  \int_\X d(x,y) h(x,y) d\tau(x)}{q(y) \tilde{q}(y)}
\\
& \leqslant &   \frac{(C_p + C_q) p(y) }{\tilde{q}(y)} \int_\X d(x,y) h(x,y) d\tau(x) \leqslant 
\frac{(C_p + C_q) p(y) }{\tilde{q}(y)} \bigg( \int_\X d(x,y)^2 h(x,y) d\tau(x) \bigg)^{1/2}.
    \EEAS
    Moreover
    \BEAS
    \int_\X \frac{p(y)^2}{\tilde{q}(y)} d\tau(y)
& \leqslant &    \sup_{x \in \X} \frac{p(x)}{q(x)}   \int_\X \frac{2 \tilde{p}(y)^2 + 2 ( \tilde{p}(y) - p(y))^2}{\tilde{p}(y)} d\tau(y) 
\\
& \leqslant  &  2 \sup_{x \in \X} \frac{p(x)}{q(x)} \Big( 1 +   \int_\X \frac{ ( \tilde{p}(y) - p(y))^2}{\tilde{p}(y)} d\tau(y)  \Big)
\leqslant 2 \sup_{x \in \X} \frac{p(x)}{q(x)} (1 + C_p^2 {\rm diam}(\X)^2 ).    \EEAS
    Overall, we get
    $\ds 
       D(p\|q) - D(\tilde{p} \| \tilde{q} ) 
         \leqslant   E(p,q) \times  
        \sup_{x \in \X} \int h(x,y) d(x,y)^2 dy,         $       with $E(p,q) =  2 C_{p/q}^2 
       \sup_{x \in \X} \frac{q(x)}{p(x)} +  4 \sup_{x \in \X} \frac{q(x)}{p(x)} ( C_p + C_q)^2  
       \sup_{x \in \X} \frac{p(x)}{q(x)} (1 + C_p^2 {\rm diam}(\X)^2 )$.
        \end{proof}

  \section{Convergence of estimation algorithms}
  
   In this Appendix, we give convergence proofs for the estimation algorithms from \mysec{estimation}, starting with concentration inequalities for operators and bounds on degrees of freedom.

   \subsection{Concentration of sums of random self-adjoint operators}
  We recall in this Appendix classical results on the concentation of covariance operators.
  \begin{lemma}[Concentration of covariance operators~\cite{rudi2017generalization,tropp2015introduction}]
  \label{lemma:conc}
Let $\varphi$ be mapping $\varphi: \X \to \H$ between a measurable space $\X$ and a separable Hilbert space $\H$, such that for all $x \in \X$, $\| \varphi(x) \| \leqslant R$. Given $\Sigma_p = \ds \int_\X \varphi(x)  \varphi(x) ^\ast dp(x) $, assume that $\lambda_{\max}(\Sigma_p) \leqslant \mu$. Given $x_1,\dots,x_n \in \X$ $n$ i.i.d.~samples from $p$, then, for all $u>0$,
$$
\P \bigg( \lambda_{\max} \Big(
\Sigma_p - \frac{1}{n} \sum_{i=1}^n
\varphi(x_i) \otimes \varphi(x_i) \Big) > u
\bigg) \leqslant \frac{\tr \Sigma_p}{\mu} \Big( 1 + \frac{3\mu^2}{u^4 n^2} \big( R^2 + \frac{u}{3} \big)^2 \Big)
\exp \Big( - \frac{ nu^2}{ 2 \mu ( R^2 + \frac{u}{3}  )} \Big).
$$
\end{lemma}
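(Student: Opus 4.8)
The plan is to derive the bound from the matrix (operator) Bernstein inequality in its \emph{intrinsic-dimension} form, after computing the two standard ingredients: a one-sided bound on the summands and a bound on the matrix variance. First I would write $Z_i = \Sigma_p - \varphi(x_i)\otimes\varphi(x_i)$, so that $Z_1,\dots,Z_n$ are i.i.d.\ self-adjoint trace-class operators with $\E Z_i = 0$, and $S := \frac1n\sum_{i=1}^n Z_i$ is the object to control. Since $0 \preccurlyeq \varphi(x_i)\otimes\varphi(x_i) \preccurlyeq R^2\idm$ and $0 \preccurlyeq \Sigma_p \preccurlyeq \mu\idm$, we get $Z_i \preccurlyeq \mu\idm$, i.e.\ $\lambda_{\max}(Z_i) \leqslant \mu$; note $\mu \leqslant R^2$ since $\lambda_{\max}(\Sigma_p) = \sup_{\|f\|=1}\int \langle f,\varphi(x)\rangle^2 dp(x) \leqslant R^2$. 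The point worth stressing is that, for the \emph{upper} tail of $\lambda_{\max}(S)$, it is this bound $\mu$ (not $R^2$) that enters the Bernstein cumulant estimate, which is what produces $\mu$ rather than $R^2$ in the exponent. For the variance, $\E[Z_i^2] = \E\big[\|\varphi(x_i)\|^2\,\varphi(x_i)\otimes\varphi(x_i)\big] - \Sigma_p^2 \preccurlyeq R^2\Sigma_p$, so the matrix variance $V := \sum_{i=1}^n \E\big[(\tfrac1n Z_i)^2\big] = \tfrac1n\E[Z_1^2] \preccurlyeq \tfrac{R^2}{n}\Sigma_p$ has $\|V\|_{\rm op} \leqslant \tfrac{R^2\mu}{n} =: b$ and $\tr V \leqslant \tfrac{R^2}{n}\tr\Sigma_p = \tfrac{\tr\Sigma_p}{\mu}\,b$.

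Next I would run the exponential-moment argument using the \emph{deviation} function $g_\theta(t) = e^{\theta t}-\theta t - 1$ in place of $e^{\theta t}$, which keeps all traces finite in the infinite-dimensional RKHS setting: $g_\theta(S)\succcurlyeq 0$ is trace-class because $S$ is and $g_\theta(t)=O(t^2)$ near $0$. Since $g_\theta\geqslant 0$ and is increasing on $[0,\infty)$, for $u>0$ one has $\{\lambda_{\max}(S)>u\}\subseteq\{\tr g_\theta(S)>g_\theta(u)\}$, hence by Markov $\P(\lambda_{\max}(S)>u)\leqslant g_\theta(u)^{-1}\,\E\tr g_\theta(S)$ with $g_\theta(u)=e^{\theta u}-\theta u - 1$. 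The core estimate, obtained from Lieb's concavity theorem, subadditivity of the matrix cumulant generating function, and the operator Bernstein bound $\log\E\exp(\tfrac\theta n Z_i)\preccurlyeq \tfrac{\theta^2/2}{1-\theta\mu/(3n)}\,\E[(\tfrac1n Z_i)^2]$ for $0<\theta<3n/\mu$, is $\E\tr g_\theta(S)\leqslant \tr\big[e^{\psi(\theta)V}-\psi(\theta)V-\idm\big]$ with $\psi(\theta)=\tfrac{\theta^2/2}{1-\theta\mu/(3n)}$. Writing the right-hand side in the eigenbasis of $V$, using that $x\mapsto x^{-1}(e^{\psi x}-\psi x-1)$ is nondecreasing and $\lambda_j(V)\leqslant b$, this is at most $\tr V\cdot b^{-1}(e^{\psi(\theta)b}-\psi(\theta)b-1)\leqslant \tfrac{\tr\Sigma_p}{\mu}\big(e^{\psi(\theta)b}-\psi(\theta)b-1\big)$, which already exhibits the prefactor $\tr\Sigma_p/\mu$.

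It then remains to choose $\theta$. Taking the classical Bernstein minimizer $\theta=\frac{nu}{\mu(R^2+u/3)}\in(0,3n/\mu)$, a short computation gives $\psi(\theta)b=\tfrac{\theta u}{2}$ and $\psi(\theta)b-\theta u=-\frac{nu^2}{2\mu(R^2+u/3)}$, so with $y:=\theta u=\frac{nu^2}{\mu(R^2+u/3)}$ we obtain $\P(\lambda_{\max}(S)>u)\leqslant \frac{\tr\Sigma_p}{\mu}\cdot\frac{e^{y/2}-y/2-1}{e^{y}-y-1}$, and the claim follows from the elementary scalar inequality $\frac{e^{y/2}-y/2-1}{e^{y}-y-1}\leqslant\big(1+\tfrac{3}{y^2}\big)e^{-y/2}$ for $y>0$ together with $\tfrac{3}{y^2}=\frac{3\mu^2(R^2+u/3)^2}{n^2u^4}$. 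I expect the main obstacle to be the genuinely infinite-dimensional bookkeeping: making sense of and correctly bounding $\E\tr[e^{\psi(\theta)V}-\psi(\theta)V-\idm]$ \emph{without} splitting it into individually divergent pieces, which is precisely the refinement over the finite-dimensional matrix Bernstein inequality; the sharp scalar optimization giving the exact polynomial correction factor $1+\frac{3\mu^2}{u^4n^2}(R^2+u/3)^2$ is then a routine (if slightly fiddly) calculation. Both steps are carried out in the references, so once the boundedness and variance constants above are in hand, the statement follows by specializing the intrinsic-dimension Bernstein bound of~\cite{rudi2017generalization,tropp2015introduction}.
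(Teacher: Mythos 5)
Your proposal is correct: the paper does not prove this lemma itself but simply recalls it from~\cite{rudi2017generalization,tropp2015introduction}, and your argument is precisely the intrinsic-dimension operator Bernstein proof from those references, with the right ingredients (one-sided bound $\lambda_{\max}(Z_i)\leqslant\mu$, variance $\E[Z_i^2]\preccurlyeq R^2\Sigma_p$, the deviation function $e^{\theta t}-\theta t-1$ to stay trace-class in infinite dimension). The choice $\theta = nu/(\mu(R^2+u/3))$ and the scalar inequality you invoke indeed reproduce exactly the stated prefactor $\frac{\tr\Sigma_p}{\mu}\big(1+\frac{3\mu^2}{u^4n^2}(R^2+\frac{u}{3})^2\big)$ and exponent $-\frac{nu^2}{2\mu(R^2+u/3)}$, so nothing is missing beyond the standard steps carried out in the cited works.
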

We will apply the lemma above to $\tilde{\varphi}(x) = ( \Sigma_p  + \lambda \idm)^{-1/2} \varphi(x)$, with $\ds \Sigma_p  = 
\int_\X \varphi(x) \otimes \varphi(x) dp(x)$
and $\ds \hat{\Sigma}  = \frac{1}{n} \sum_{i=1}^n
\varphi(x_i) \otimes \varphi(x_i) $, so that,
$\ds R^2   = \sup_{x \in \X}\  \langle \varphi(x),  ( \Sigma_p  + \lambda \idm)^{-1} \varphi(x) \rangle = {\rm df}^{\max}(\lambda)$, and $\tr \Sigma_p$ is now $\tr \big[ \Sigma_p ( \Sigma_p + \lambda \idm)^{-1} \big]
= {\rm df}(\lambda) $, and $\mu = 1$, leading to:
$$
 \P \Big(  \lambda_{\max}\Big(
  (\Sigma_p+\lambda \idm   )^{-1/2} ( \Sigma_p - \hat{\Sigma}_p)  (\Sigma_p+\lambda \idm  )^{-1/2}  \Big)> u \Big)   \leqslant
  {\rm df}(\lambda) \Big(1 + \frac{3(  {\rm df}^{\max}(\lambda) + u/3)^2}{u^4 n^2}  \Big)
  \exp\Big( \frac{ - n u^2 }{2  {\rm df}^{\max}(\lambda) + 2u/3} \Big).
$$
   
  \subsection{Degrees of freedom estimation}

\begin{proposition}[Estimation of degrees of freedom]
\label{prop:df}
Let $\varphi: \X \to \H$ a mapping between a measurable space $\X$ and a separable Hilbert space $\H$. Let $\Sigma_p = \ds \int_\X \varphi(x)   \varphi(x)^\ast dp(x) $, with $p$ a probability distribution and $x_1,\dots,x_n \in \X$ $n$ i.i.d.~samples from $p$, and $\hat{\Sigma}_p = \frac{1}{n} \sum_{i=1}^n
\varphi(x_i) \otimes \varphi(x_i)$. Then,
$$
\E \big| \tr [ \Sigma_p ( \Sigma_p + \lambda \idm)^{-1} ] - 
\tr [ \hat\Sigma_p ( \hat\Sigma_p + \lambda \idm)^{-1} ] \big| \leqslant
 \frac{d}{\sqrt{n}} + 4
  \frac{d^2}{n}
  + 16 d n  
  \exp\Big( \frac{ - 9 n  }{32 d + 8} \Big) ,
$$
where $ {\rm df}(\lambda) 
= \tr \big[ \Sigma_p ( \Sigma_p + \lambda \idm)^{-1} \big]$, and
${\rm df}^{\max}(\lambda) = \sup_{x \in \X}\  \langle \varphi(x),  ( \Sigma_p  + \lambda \idm)^{-1} \varphi(x) \rangle$, which is assumed to satisfy $d = {\rm df}^{\max}(\lambda)  \leqslant n$.
\end{proposition}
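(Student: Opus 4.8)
The plan is to write $D:=\tr[\Sigma_p(\Sigma_p+\lambda\idm)^{-1}]-\tr[\hat\Sigma_p(\hat\Sigma_p+\lambda\idm)^{-1}]$ as the sum of a genuine empirical-average term, controlled by a one-line second-moment estimate, and a quadratic remainder, controlled on the high-probability event supplied by Lemma~\ref{lemma:conc}. I abbreviate $R=(\Sigma_p+\lambda\idm)^{-1}$, $\hat R=(\hat\Sigma_p+\lambda\idm)^{-1}$, $\Delta=\hat\Sigma_p-\Sigma_p$, and $d=\mathrm{df}^{\max}(\lambda)$. I would first record two deterministic facts used repeatedly: (i) $\mathrm{df}(\lambda)=\int_\X\langle\varphi(x),R\varphi(x)\rangle\,dp(x)\le d$, and since $d<\infty$ this forces $\Sigma_p$ to be trace class, so every operator below is bounded; (ii) $\widehat{\mathrm{df}}(\lambda)=\tr[\hat\Sigma_p\hat R]\le\mathrm{rank}(\hat\Sigma_p)\le n$, whence $|D|\le\max\{\mathrm{df}(\lambda),\widehat{\mathrm{df}}(\lambda)\}\le\max\{d,n\}=n$ always, using the hypothesis $d\le n$. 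From the resolvent identity $R-\hat R=\hat R\Delta R$ and $\hat\Sigma_p\hat R=\idm-\lambda\hat R$ one gets $D=-\lambda\tr[\hat R\Delta R]$ (finite, $\Delta$ being trace class); substituting once $\hat R=R-\hat R\Delta R$ yields the exact identity
\[
D \;=\; T_1+T_2,\qquad T_1:=-\lambda\tr[R^2\Delta],\qquad T_2:=\lambda\tr[\hat R\,\Delta R\,\Delta R].
\]

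The term $T_1$ is, up to sign, a centered empirical average: $T_1=\tfrac1n\sum_{i=1}^n\xi_i$ with $\xi_i=\E_p[\lambda\|R\varphi\|^2]-\lambda\|R\varphi(x_i)\|^2$, each $\xi_i\in[-d,d]$ because $\lambda\|R\varphi(x)\|^2\le\|R\|_{\mathrm{op}}\lambda\langle\varphi(x),R\varphi(x)\rangle\le d$ and likewise for its expectation, with variance $\mathrm{Var}(\xi_i)\le\E_p[(\lambda\|R\varphi\|^2)^2]\le d\,\E_p[\lambda\|R\varphi\|^2]=d\,\lambda\tr[R^2\Sigma_p]\le d\,\mathrm{df}(\lambda)\le d^2$ (using $\lambda R^2\preccurlyeq R$). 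Hence $\E|T_1|\le\sqrt{\mathrm{Var}(T_1)}\le d/\sqrt n$, producing the first term of the bound.

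For $T_2$, I would conjugate by $R^{\pm1/2}$: setting $\tilde\Delta:=R^{1/2}\Delta R^{1/2}$ and $\hat U:=R^{-1/2}\hat R R^{-1/2}\succcurlyeq0$, one checks $\Delta R\,\Delta R=R^{-1/2}\tilde\Delta^2R^{1/2}$, so $T_2=\tr\big[\hat U\,\tilde\Delta^2(\lambda R)\big]$, and since $\tilde\Delta^2\succcurlyeq0$ and $\lambda R\preccurlyeq\idm$, $|T_2|\le\|\tilde\Delta\|_{\mathrm{HS}}^2\,\|\hat U\|_{\mathrm{op}}$. On the event $G:=\{\lambda_{\max}(R^{1/2}(\Sigma_p-\hat\Sigma_p)R^{1/2})\le\tfrac34\}$ we have $\hat\Sigma_p+\lambda\idm\succcurlyeq\tfrac14(\Sigma_p+\lambda\idm)$, hence $\hat U\preccurlyeq4\,\idm$ and $|T_2|\mathbf 1_G\le4\|\tilde\Delta\|_{\mathrm{HS}}^2$. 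Since $\tilde\Delta=\tfrac1n\sum_i(R^{1/2}\varphi(x_i))(R^{1/2}\varphi(x_i))^\ast-\E_p[(R^{1/2}\varphi)(R^{1/2}\varphi)^\ast]$ is a centered average of rank-one operators, $\E\|\tilde\Delta\|_{\mathrm{HS}}^2\le\tfrac1n\E_p[\langle\varphi,R\varphi\rangle^2]\le\tfrac1n\,d\,\mathrm{df}(\lambda)\le d^2/n$, so $\E[|T_2|\mathbf 1_G]\le4d^2/n$. Combining, $\E[|D|\mathbf 1_G]\le\E|T_1|+\E[|T_2|\mathbf 1_G]\le d/\sqrt n+4d^2/n$.

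On $G^c$ I would simply use the deterministic bound $|D|\le n$ and estimate $\P(G^c)$ by the concentration inequality displayed immediately after Lemma~\ref{lemma:conc}, applied to the feature map $x\mapsto(\Sigma_p+\lambda\idm)^{-1/2}\varphi(x)$ (so that its squared bound is $\mathrm{df}^{\max}(\lambda)=d$ and $\mu=1$) with $u=\tfrac34$: this gives $\P(G^c)\le\mathrm{df}(\lambda)\big(1+\tfrac{3(d+1/4)^2}{(3/4)^4 n^2}\big)\exp(-9n/(32d+8))$, whose polynomial prefactor is at most $16d$ once $d\le n$, so $\E[|D|\mathbf 1_{G^c}]\le16\,dn\,\exp(-9n/(32d+8))$. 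Adding the two estimates gives the claimed bound. The main obstacle is bookkeeping rather than depth: keeping the operator algebra legitimate in infinite dimensions (individual traces such as $\tr R$ are infinite, so terms must be grouped into trace-class differences before traces are taken, and the $R^{\pm1/2}$-conjugations must be checked to be well defined), and pinning down $u=\tfrac34$ together with the crude $|D|\le n$ bound so that the stated numerical constants emerge exactly.
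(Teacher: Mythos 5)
Your proof is correct and follows essentially the same route as the paper's (which itself follows~\cite{rudi2017generalization}): your exact splitting $D=-\lambda\tr[R^2\Delta]+\lambda\tr[\hat R\,\Delta R\,\Delta R]$ is algebraically identical to the paper's decomposition with $B=(\Sigma_p+\lambda\idm)^{-1/2}(\Sigma_p-\hat\Sigma_p)(\Sigma_p+\lambda\idm)^{-1/2}$ and the remainder $\lambda\tr[(\Sigma_p+\lambda\idm)^{-1/2}B(\idm-B)^{-1}B(\Sigma_p+\lambda\idm)^{-1/2}]$, and you then use the same three ingredients with the same constants — a variance/Cauchy--Schwarz bound $d/\sqrt n$ for the linear term, $\E\|B\|_{\rm HS}^2\leqslant d\,{\rm df}(\lambda)/n$ times $(1-u)^{-1}=4$ on the event $\lambda_{\max}(B)\leqslant u=3/4$, and the concentration display after Lemma~\ref{lemma:conc} together with the almost-sure bound $|D|\leqslant n$ on the complement. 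No gaps; the bookkeeping points you flag (grouping into trace-class differences, $d\leqslant n$ to absorb the polynomial prefactor into $16$) are handled the same way in the paper.
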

\begin{proof}
We follow closely the proof of~\cite{rudi2017generalization}, which starts with:
\BEAS
A
 & = &  \tr [ \Sigma_p ( \Sigma_p + \lambda \idm)^{-1} ] - 
\tr [ \hat\Sigma_p ( \hat\Sigma_p + \lambda \idm)^{-1} ]  \\
& = &  \lambda \tr \big[  (\Sigma_p+\lambda \idm  )^{-1/2} B (\Sigma_p+\lambda \idm  )^{-1/2} \big]
 +   \lambda \tr \big[  (\Sigma_p+\lambda \idm  )^{-1/2} B ( \idm - B)^{-1} B (\Sigma_p+\lambda \idm  )^{-1/2}\big] ,
 \EEAS
 leading to
$\ds
|A|
   \leqslant     \lambda   \big| \tr \big[  (\Sigma_p+\lambda \idm  )^{-1/2} B (\Sigma_p+\lambda \idm  )^{-1/2}\big]  \big| + 
     \| B \|_{ \rm HS}^2 \frac{1}{1-\lambda_{\max}(B)}$,
as soon as $\lambda_{\max}(B)<1$, for $B =  (\Sigma_p +\lambda \idm  )^{-1/2} ( \Sigma_p  - \hat{\Sigma}_p )  (\Sigma_p+\lambda \idm  )^{-1/2}$.

Almost surely, $|A|  
\leqslant \max\{ {\rm df}(\Lambda), n \}$.
Thus,  for $u<1$,
\BEAS
|A| & \leqslant & 
 1_{ \lambda_{\max}(B) > u} \max\{ {\rm df}(\Lambda), n\} + 
  1_{ \lambda_{\max}(B) \leqslant u} \Big(   \lambda   \big| \tr \big[  (\Sigma_p+\lambda \idm  )^{-1/2} B (\Sigma_p+\lambda \idm  )^{-1/2}\big]  \big| +   \| B \|_{ \rm HS}^2 \frac{1}{1-\lambda_{\max}(B)}\Big)  \\
  & \leqslant & 
 1_{ \lambda_{\max}(B) > u}\max\{ {\rm df}(\Lambda), n\} + 
 \| B \|_{ \rm HS}^2 \frac{1}{1-u}  +    \lambda   \big| \tr \big[  (\Sigma_p+\lambda \idm  )^{-1/2} B (\Sigma_p+\lambda \idm  )^{-1/2}\big]  \big|. \EEAS
Thus,
\BEAS
\E |A|
& \leqslant &  \P (  \lambda_{\max}(B) > u )  \cdot  \max\{ {\rm df}(\Lambda), n\}
+ \frac{1}{1-u} \E \big[  \| B \|_{ \rm HS}^2  \big]
+   \E \big[    \lambda   \big| \tr \big[  (\Sigma_p+\lambda \idm  )^{-1/2} B (\Sigma_p+\lambda \idm  )^{-1/2}\big]  \big|\big].
\EEAS
We have:
$$
 \P (  \lambda_{\max}(B) > u )   \leqslant
  {\rm df}(\lambda) \Big(1 + \frac{3(  {\rm df}^{\max}(\lambda) + u/3)^2}{u^4 n^2}  \Big)
  \exp\Big( \frac{ - n u^2 }{2  {\rm df}^{\max}(\lambda) + 2u/3} \Big),
$$
and 
\BEAS
\E \big[  \| B \|_{ \rm HS}^2 \big]& = & \frac{1}{n}
\E \Big[ \tr \Big( (\Sigma_p +\lambda \idm  )^{-1}   ( \Sigma_p - \varphi(x_i) \otimes \varphi(x_i))  (\Sigma +\lambda \idm  )^{-1} 
( \Sigma_p - \varphi(x_i) \otimes \varphi(x_i))
\Big) \Big]
\\
& \leqslant & \frac{1}{n}
\E \Big[ \tr \Big( (\Sigma_p +\lambda \idm  )^{-1}    \varphi(x_i) \otimes \varphi(x_i)  (\Sigma_p +\lambda \idm  )^{-1} 
  \varphi(x_i) \otimes \varphi(x_i)
\Big) \Big]
 \leqslant  \frac{1}{n}  {\rm df}(\lambda)    {\rm df}^{\max}(\lambda).
\EEAS

Moreover 
\BEAS
   \E \big[   \lambda   \big| \tr \big[  (\Sigma_p+\lambda \idm  )^{-1/2} B (\Sigma_p+\lambda \idm  )^{-1/2}\big]  \big| \big]
 & \leqslant &  \lambda \sqrt{ \E \big[   \big| \tr \big[   (\Sigma_p+\lambda \idm)^{-1/2} B (\Sigma_p+\lambda \idm)^{-1/2}  \big] \big|^2 \big]} \\
 & \leqslant &  \lambda \sqrt{ \frac{1}{n} \E \big[    \langle \varphi(x), (\Sigma_p+\lambda \idm  )^{-1 }    (\Sigma_p+\lambda \idm )^{-1 } \varphi(x) \rangle  ^2 \big]} \\
 & \leqslant &   \sqrt{ \frac{1}{n} \E \big[    \langle \varphi(x), (\Sigma_p+  \lambda \idm)^{-1 } \varphi(x) \rangle  ^2 \big]}  
  \leqslant     \sqrt{ \frac{1}{n}  {\rm df}^{\max}(\lambda) {\rm df}(\lambda)}.
\EEAS

 This leads to
 \BEAS
  \E |A| &  \leqslant  & 
   \sqrt{ \frac{1}{n}  {\rm df}^{\max}(\lambda) {\rm df}(\lambda)}
 +   \frac{1}{1-u}   \frac{1}{n}  {\rm df}(\lambda)    {\rm df}^{\max}(\lambda)
  \\
  & &  \hspace*{2cm} + \max\{ {\rm df}(\lambda), n \}  {\rm df}(\lambda) 
   \Big(1 + \frac{3(  {\rm df}^{\max}(\lambda) + u/3)^2}{u^4 n^2}  \Big)
  \exp\Big( \frac{ - n u^2 }{2  {\rm df}^{\max}(\lambda) + 2u/3} \Big) \Big].
 \EEAS
With $u = 3/4$, we get:
\BEAS
 \E |A| &  \leqslant  & 
   \sqrt{ \frac{1}{n}  {\rm df}^{\max}(\lambda) {\rm df}(\lambda)}
 +  4  \frac{1}{n}  {\rm df}(\lambda)    {\rm df}^{\max}(\lambda)
   \\
   && \hspace*{2cm}  + \max\{ {\rm df}(\lambda), n \}  {\rm df}(\lambda)   \Big(1 + \frac{3(  {\rm df}^{\max}(\lambda) + 1/4)^2}{(3/4)^4 n^2}  \Big)
  \exp\Big( \frac{ - n (3/4)^2 }{2  {\rm df}^{\max}(\lambda) + 1/2} \Big) \Big].
 \EEAS
 Using $ {\rm df}(\lambda)  \leqslant  {\rm df}^{\max}(\lambda) = d $, we have the bound:
\BEAS
\E|A|
&  \leqslant  & 
 \frac{d}{\sqrt{n}} + 4
  \frac{d^2}{n}
  + d \max\{d, n \}    \Big(1 + \frac{16 (  4 d + 1)^2}{27 n^2}  \Big)
  \exp\Big( \frac{ - 9 n  }{8(4 d + 1)} \Big) .
 \EEAS
If $d \leqslant n$, we further get:
\BEAS
 \E |A| 
 &  \leqslant  & 
 \frac{d}{\sqrt{n}} + 4
  \frac{d^2}{n}
  + d n  \Big(1 + \frac{16 (  4 d + 1)^2}{27 n^2}  \Big)
  \exp\Big( \frac{ - 9 n  }{32 d + 8} \Big) 
\leqslant 
 \frac{d}{\sqrt{n}} + 4
  \frac{d^2}{n}
  + 16 d n  
  \exp\Big( \frac{ - 9 n  }{32 d + 8} \Big) 
.
 \EEAS
\end{proof}

\subsection{Degrees of freedom estimation (from projections)}

\begin{proposition}[Estimation of degrees of freedom (from projections)]
Let   $\varphi: \X \to \H$ a mapping between a measurable space $\X$ and a separable Hilbert space $\H$.
Let $\Sigma_p = \ds \int_\X \varphi(x)   \varphi(x)^\ast dp(x) $, with $p$ a probability distribution and $x_1,\dots,x_n \in \X$ $n$ i.i.d.~samples from $q$ such that $\forall x \in \X,   q(x) \geqslant \alpha p(x) $, and $\hat{\Pi}$ the orthogonal projection on the span of all $\varphi(x_1),\dots,\varphi(x_n)$.
Let  ${\rm df}^{\max}(\lambda) = \sup_{x \in \X}\  \langle \varphi(x),  ( \Sigma_q  + \lambda \idm)^{-1} \varphi(x) \rangle$.
Then
$$ \E \big| \tr [ \Sigma_p ( \Sigma_p + \lambda \idm)^{-1} ] - 
\tr [ \hat{\Pi}\Sigma_p \hat{\Pi}( \hat{\Pi}\Sigma_p\hat{\Pi} + \lambda \idm)^{-1} ] \big|
\hspace*{7cm}$$
$$ \hspace*{4cm} \leqslant
16  {\rm df}(\alpha \lambda){\rm df}(\mu) 
  \exp\Big( \frac{ - 9 n  }{32  {\rm df}^{\max}(\mu)  + 8} \Big) 
+ {8\mu }   \tr \big[  (\Sigma_q+\alpha \lambda \idm  )^{-1}\Sigma_q  (   {\Sigma}_q+\mu \idm)^{-1} \big],
$$
if    $\ds {\rm df}^{\max}(\mu) = \sup_{x \in \X}\  \langle \varphi(x), ( \Sigma_q + \mu \idm)^{-1} \varphi(x) \rangle  \leqslant n$.
\end{proposition}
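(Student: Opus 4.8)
The plan is to follow the Nyström-projection analysis of~\cite{rudi2017generalization}, splitting the estimate into a deterministic operator reduction and a probabilistic control of the subspace spanned by the samples, and to carry the density ratio $q\geqslant\alpha p$ through everywhere as a replacement of $\lambda$ by $\alpha\lambda$ inside resolvents. Throughout write $f_\lambda(t)=t/(t+\lambda)$, which is operator monotone and operator concave on $[0,\infty)$ with $f_\lambda(0)=0$, put $P=\hat{\Pi}$ (of rank at most $n$), let $\hat{\Sigma}_q=\frac1n\sum_{i=1}^n\varphi(x_i)\varphi(x_i)^\ast$, and let ${\rm df}(\cdot)$ and ${\rm df}^{\max}(\cdot)$ refer to $\Sigma_q$. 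The left inequality $0\leqslant\tr[\Sigma_p(\Sigma_p+\lambda\idm)^{-1}]-\tr[P\Sigma_pP(P\Sigma_pP+\lambda\idm)^{-1}]$ holds because $P\Sigma_pP$ is a compression of $\Sigma_p$ to a subspace of dimension at most $n$: by the min--max principle $\sigma_i(P\Sigma_pP)\leqslant\sigma_i(\Sigma_p)$ for every $i$ ($\sigma_i$ the $i$-th largest eigenvalue), and since $f_\lambda$ is non-decreasing with $f_\lambda(0)=0$, $\tr f_\lambda(P\Sigma_pP)=\sum_i f_\lambda(\sigma_i(P\Sigma_pP))\leqslant\sum_i f_\lambda(\sigma_i(\Sigma_p))=\tr f_\lambda(\Sigma_p)$.

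For the upper bound, the deterministic step is the operator inequality $P\Sigma_pP(P\Sigma_pP+\lambda\idm)^{-1}\succcurlyeq P\,\Sigma_p(\Sigma_p+\lambda\idm)^{-1}P$. On the range of $P$ one has $P\Sigma_pP+\lambda\idm=P(\Sigma_p+\lambda\idm)P$, so the classical compression inequality $(PMP)^{-1}\preccurlyeq PM^{-1}P$ (on the range of $P$, for $M\succ0$, proved by a Schur-complement argument) applied to $M=\Sigma_p+\lambda\idm$, combined with $f_\lambda(B)=\idm-\lambda(B+\lambda\idm)^{-1}$, yields the claim. Taking traces and using $P^2=P$ and cyclicity gives $\tr[\Sigma_p(\Sigma_p+\lambda\idm)^{-1}]-\tr[P\Sigma_pP(P\Sigma_pP+\lambda\idm)^{-1}]\leqslant\tr[(\idm-P)\Sigma_p(\Sigma_p+\lambda\idm)^{-1}(\idm-P)]$. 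Since $q\geqslant\alpha p$ gives $\Sigma_p\preccurlyeq\alpha^{-1}\Sigma_q$, operator monotonicity of $f_\lambda$ gives $\Sigma_p(\Sigma_p+\lambda\idm)^{-1}=f_\lambda(\Sigma_p)\preccurlyeq f_\lambda(\alpha^{-1}\Sigma_q)=\Sigma_q(\Sigma_q+\alpha\lambda\idm)^{-1}$, so the error is at most $\tr[(\idm-P)\Sigma_q(\Sigma_q+\alpha\lambda\idm)^{-1}(\idm-P)]$, and (dropping $\idm-P$) at most ${\rm df}(\alpha\lambda)$.

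For the probabilistic part I would apply the concentration lemma for sums of random self-adjoint operators to $\tilde\varphi(x)=(\Sigma_q+\mu\idm)^{-1/2}\varphi(x)$, for which $R^2={\rm df}^{\max}(\mu)$ and $\tr\tilde\Sigma_q={\rm df}(\mu)$, with threshold $u=3/4$, obtaining a good event $\mathcal{E}=\{\lambda_{\max}((\Sigma_q+\mu\idm)^{-1/2}(\Sigma_q-\hat{\Sigma}_q)(\Sigma_q+\mu\idm)^{-1/2})\leqslant 3/4\}$ with $\P(\mathcal{E}^c)\leqslant{\rm df}(\mu)\big(1+\tfrac{3({\rm df}^{\max}(\mu)+1/4)^2}{(3/4)^4 n^2}\big)\exp(\tfrac{-9n}{32{\rm df}^{\max}(\mu)+8})$. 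On $\mathcal{E}^c$ I bound the error by ${\rm df}(\alpha\lambda)$; using ${\rm df}^{\max}(\mu)\leqslant n$ the polynomial prefactor is $\leqslant 16$, which produces the first term. On $\mathcal{E}$: the range of $P$ equals the range of $\hat{\Sigma}_q$, so $(\idm-P)\hat{\Sigma}_q(\idm-P)=0$, whence $(\idm-P)\Sigma_q(\idm-P)=(\idm-P)(\Sigma_q-\hat{\Sigma}_q)(\idm-P)\preccurlyeq\tfrac34(\idm-P)(\Sigma_q+\mu\idm)(\idm-P)$, which rearranges to $(\idm-P)\Sigma_q(\idm-P)\preccurlyeq 3\mu(\idm-P)$ and hence $\|(\idm-P)\Sigma_q^{1/2}\|^2\leqslant 3\mu$. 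Expanding $\tr[(\idm-P)\Sigma_q(\Sigma_q+\alpha\lambda\idm)^{-1}(\idm-P)]$ in an eigenbasis $(e_j)$ of $\Sigma_q$ (eigenvalues $\sigma_j$), via cyclicity it equals $\sum_j\tfrac{\sigma_j}{\sigma_j+\alpha\lambda}\|(\idm-P)e_j\|^2$, and $\|(\idm-P)e_j\|^2=\sigma_j^{-1}\|(\idm-P)\Sigma_q^{1/2}e_j\|^2\leqslant\min(1,3\mu/\sigma_j)\leqslant\tfrac{4\mu}{\sigma_j+\mu}$; summing gives at most $4\mu\,\tr[(\Sigma_q+\alpha\lambda\idm)^{-1}\Sigma_q(\Sigma_q+\mu\idm)^{-1}]$. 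Writing $\E[\,\cdot\,]=\E[\,\cdot\,1_{\mathcal{E}}]+\E[\,\cdot\,1_{\mathcal{E}^c}]$ and loosening $4\mu$ to $8\mu$ delivers the stated bound.

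The step I expect to be most delicate is the deterministic reduction: getting the compression--resolvent inequality together with the $\Sigma_p\preccurlyeq\alpha^{-1}\Sigma_q$ substitution to produce exactly one factor $\alpha\lambda$ (rather than $\alpha\lambda$ in both resolvents, or an extra $1/\alpha$ in front), while keeping every operator trace-class in the infinite-dimensional setting; the remaining work --- the concentration bound and absorbing its polynomial prefactor into the constants $16$ and $8$ --- is routine.
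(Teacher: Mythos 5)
Your proof is correct, and it reaches the stated bound with the same overall architecture as the paper (reduction of the error to a trace against $\idm-\hat\Pi$, then a good/bad-event split driven by the same concentration lemma applied to $(\Sigma_q+\mu\idm)^{-1/2}\varphi$, with $u=3/4$, the bad event absorbed into the constant $16$ under ${\rm df}^{\max}(\mu)\leqslant n$, and the density ratio used via $\Sigma_q\succcurlyeq\alpha\Sigma_p$ and operator monotonicity of $t\mapsto t/(t+\lambda)$ to trade $\Sigma_p,\lambda$ for $\Sigma_q,\alpha\lambda$). Where you genuinely diverge is in the two deterministic operator steps. For the reduction, the paper expands the resolvent difference, $A=\lambda\tr\big[(\Sigma_p+\lambda\idm)^{-1}(\Sigma_p-\hat\Pi\Sigma_p\hat\Pi)(\hat\Pi\Sigma_p\hat\Pi+\lambda\idm)^{-1}\big]$, splits off the quadratic term and controls it using $\hat\Pi^2=\hat\Pi$, ending with $A\leqslant 2\tr\big[(\Sigma_p+\lambda\idm)^{-1}\Sigma_p(\idm-\hat\Pi)\big]$; you instead invoke the Schur-complement compression inequality to get $f_\lambda(\hat\Pi\Sigma_p\hat\Pi)\succcurlyeq\hat\Pi f_\lambda(\Sigma_p)\hat\Pi$, which yields the same quantity \emph{without} the factor $2$ (slightly sharper, and arguably cleaner, provided one checks as you do that both sides are block-diagonal with respect to $\mathrm{ran}\,\hat\Pi$). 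For the good event, the paper uses the compact operator chain $\idm-\hat\Pi\preccurlyeq\mu(\hat\Sigma_q+\mu\idm)^{-1}\preccurlyeq\frac{\mu}{1-u}(\Sigma_q+\mu\idm)^{-1}$, giving $\frac{2\mu}{1-u}=8\mu$ directly; you instead derive $(\idm-\hat\Pi)\Sigma_q(\idm-\hat\Pi)\preccurlyeq 3\mu(\idm-\hat\Pi)$ and run an eigenbasis computation with $\min(1,3\mu/\sigma_j)\leqslant 4\mu/(\sigma_j+\mu)$, obtaining $4\mu\tr\big[(\Sigma_q+\alpha\lambda\idm)^{-1}\Sigma_q(\Sigma_q+\mu\idm)^{-1}\big]$ and loosening to $8\mu$. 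Both routes are valid; the paper's operator chain is shorter and avoids the eigenbasis bookkeeping, while your compression argument buys the better constant in the deterministic step (your net bound is in fact a factor $2$ stronger on the good event). Your separate min--max argument for the non-negativity of the error, needed to drop the absolute value, is also correct and is only asserted, not proved, in the paper.
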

\begin{proof}
We follow a more direct strategy as the previous proof, following estimation results for random feature expansions and column sampling~\cite{rudi2015less,bach2017equivalence}.
\BEAS
A
 & = &  \tr [ \Sigma_p ( \Sigma_p + \lambda \idm)^{-1} ] - 
\tr [ \hat{\Pi}\Sigma_p \hat{\Pi}( \hat{\Pi}\Sigma_p\hat{\Pi} + \lambda \idm)^{-1} ]  \\
 & = &  \lambda \tr [   ( \hat{\Pi}\Sigma_p\hat{\Pi} + \lambda \idm)^{-1} - ( \Sigma_p + \lambda \idm)^{-1}  ]  \\
 & = &  \lambda \tr [  ( \Sigma_p + \lambda \idm)^{-1} ( \Sigma_p - \hat{\Pi}\Sigma_p\hat{\Pi} )   ( \hat{\Pi}\Sigma_p\hat{\Pi} + \lambda \idm)^{-1}   ]  \mbox{ using standard expansions of the matrix inverse,} \\
 & = &  \lambda \tr \big[  (\Sigma_p+\lambda \idm  )^{-1} (\Sigma_p -  \hat{\Pi} \Sigma_p  \hat{\Pi}) (\Sigma_p+\lambda \idm  )^{-1} \big] \\
 & & + \lambda \tr \big[  (\Sigma_p+\lambda \idm  )^{-1} \textcolor{red}{ (\Sigma_p -  \hat{\Pi} \Sigma_p  \hat{\Pi}) 
(   \hat{\Pi} \Sigma_p  \hat{\Pi} + \lambda \idm)^{-1}
  (\Sigma_p -  \hat{\Pi} \Sigma_p  \hat{\Pi}) } (\Sigma_p+\lambda \idm  )^{-1} \big] .
  \EEAS
  Using $\hat{\Pi}^2 = \hat{\Pi}$, we can further bound
  \BEAS A
  & \leqslant &  \lambda \tr \big[  (\Sigma_p+\lambda \idm  )^{-1} (\Sigma_p -  \hat{\Pi} \Sigma_p  \hat{\Pi}) (\Sigma_p+\lambda \idm  )^{-1} \big] \\
 & & + \lambda \tr \big[  (\Sigma_p+\lambda \idm  )^{-1}
\textcolor{red}{\big[  (\idm - \Pi) \Sigma_p  (\idm -  \hat{\Pi})  + \lambda^{-1} \Sigma_p ( \idm -  \hat{\Pi}) \Sigma_p \big]}
  (\Sigma_p+\lambda \idm  )^{-1} \big]  \mbox{ using  } \hat{\Pi}^2 = \hat{\Pi},
\\
 & = &   \tr \big[  (\Sigma_p+\lambda \idm  )^{-2}
 (
 \lambda \Sigma_p - \lambda  \hat{\Pi} \Sigma_p  \hat{\Pi}
 +  \lambda \Sigma_p  - \lambda \Sigma_p   \hat{\Pi} - \lambda   \hat{\Pi} \Sigma_p + \lambda  \hat{\Pi} \Sigma_p  \hat{\Pi}
 + \Sigma_p ( \idm -  \hat{\Pi}) \Sigma_p
 )
 \big] \\
 & = &   2 \lambda \tr \big[  (\Sigma_p+\lambda \idm  )^{-2} \Sigma_p  (\idm -  \hat{\Pi}) \big]
 +  \tr \big[  (\Sigma_p+\lambda \idm  )^{-2}\Sigma_p^2 (\idm -  \hat{\Pi})  \big]
 \\
  & \leqslant &  2   \tr \big[  (\Sigma_p+\lambda \idm  )^{-1}\Sigma_p  (\idm -  \hat{\Pi})  \big] .
 \EEAS
 We can thus use, with $\hat{\Sigma}_q = \frac{1}{n} \sum_{i=1}^n \varphi(x_i)\varphi(x_i)^\ast$,
 $$\idm -  \hat{\Pi} 
 = \lim_{\mu \to 0} \idm - \hat{\Sigma}_q(\hat{\Sigma}_q+\mu\idm)^{-1}
 =   \lim_{\mu \to 0} \mu(\hat{\Sigma}_q+\mu\idm)^{-1} \preccurlyeq \mu (\hat{\Sigma}_q+\mu \idm)^{-1} \preccurlyeq \frac{1}{1-u} \mu ( {\Sigma}_q+\mu \idm)^{-1},$$
 if we have $ ( {\Sigma}_q+\mu \idm)^{-1/2 } ( \Sigma_q - \hat{\Sigma}_q)  ( {\Sigma}_q+\mu \idm)^{-1/2 }  \preccurlyeq u \idm$. 
This leads to, using Lemma~\ref{lemma:conc},
$$
A \leqslant 
\frac{2\mu }{1-u}   \tr \big[  (\Sigma_p+\lambda \idm  )^{-1}\Sigma_p  (   {\Sigma}_q+\mu \idm)^{-1} \big],
$$
with probability greater than
$\ds
1 -   {\rm df}(\mu) \Big(1 + \frac{3(  {\rm df}^{\max}(\mu) + u/3)^2}{u^4 n^2}  \Big)
  \exp\Big( \frac{ - n u^2 }{2  {\rm df}^{\max}(\mu) + 2u/3} \Big).
$
Therefore, we get, with $u = 3/4$:
\BEAS
0 \leqslant \E A  & \leqslant  & 
\tr ( \Sigma_q  + \alpha\lambda \idm)^{-1} \Sigma_q \cdot {\rm df}(\mu) \Big(1 + \frac{3(  {\rm df}^{\max}(\mu) + u/3)^2}{u^4 n^2}  \Big)
  \exp\Big( \frac{ - n u^2 }{2  {\rm df}^{\max}(\mu) + 2u/3} \Big) \\
  & & \hspace*{2cm}
  + \frac{2\mu }{1-u}   \tr \big[  (\Sigma_q+\alpha \lambda \idm  )^{-1}\Sigma_q  (   {\Sigma}_q+\mu \idm)^{-1} \big]
  \\
  & \leqslant  & 
 {\rm df}(\alpha \lambda){\rm df}(\mu) \Big(1 + \frac{3(  {\rm df}^{\max}(\mu) + u/3)^2}{u^4 n^2}  \Big)
  \exp\Big( \frac{ - n u^2 }{2  {\rm df}^{\max}(\mu) + 2u/3} \Big) \\
  & &  \hspace*{2cm}
  + \frac{2\mu }{1-u}   \tr \big[  (\Sigma_q+\alpha \lambda \idm  )^{-1}\Sigma_q  (  {\Sigma}_q+\mu \idm)^{-1} \big]\\
   & \leqslant  & 
 {\rm df}(\alpha \lambda){\rm df}(\mu) \Big(1 + \frac{3(  {\rm df}^{\max}(\mu) + 1/4)^2}{(3/4)^4 n^2}  \Big)
  \exp\Big( \frac{ - n (3/4)^2 }{2  {\rm df}^{\max}(\mu) + 1/2} \Big) \\
  & &  \hspace*{2cm}
  +  {8\mu }   \tr \big[  (\Sigma_q+\alpha \lambda \idm  )^{-1}\Sigma_q  (   {\Sigma}_q+\mu \idm)^{-1} \big] \mbox{ with } u = 3/4.\EEAS

Using the same reasoning as in the proof of Prop.~\ref{prop:df}, the probabilistic term is upper-bounded by \\
$\ds
16  {\rm df}(\alpha \lambda){\rm df}(\mu) 
  \exp\Big( \frac{ - 9 n  }{32  {\rm df}^{\max}(\mu)  + 8} \Big) 
  $ if $  {\rm df}^{\max}(\mu)  \leqslant n$.

\end{proof}

\subsection{Proof of Proposition~\ref{prop:proof_est}}
\label{app:proof_est}

\begin{proof} We assume that $p(x) \geqslant q(x) / \alpha$ for all $x \in \X$.
Given the integral representation of the entropy, we have $$
   \tr \big[ \hat{\Sigma}_p \log \hat{\Sigma}_p \big]  -    \tr \big[  {\Sigma}_p \log  {\Sigma}_p \big]
 = \int_0^{+\infty}\big(  \tr \big[ \Sigma_p ( \Sigma_p +\alpha \lambda \idm)^{-1} \big] -  \tr \big[  \hat{\Sigma}_p  (  \hat{\Sigma}_p  + \alpha\lambda \idm)^{-1} \big]\big) \alpha d\lambda
 ,$$ 
 it turns out it is possible to truncate large values of $\lambda$. Indeed, 
 \BEAS
&& \int_{\lambda_1}^{+\infty}\big(  \tr \big[ \Sigma_p ( \Sigma_p + \alpha \lambda \idm)^{-1} \big] -  \tr \big[  \hat{\Sigma}_p  (  \hat{\Sigma}_p  +\alpha  \lambda \idm)^{-1} \big]\big) \alpha d\lambda
 \\
 & = &  \tr  \hat\Sigma_p \log ( \hat\Sigma_p + \alpha \lambda_1 \idm)-   \tr  \Sigma_p \log ( \Sigma_p + \alpha\lambda_1 \idm)
\\
 & \leqslant &  \tr  \hat\Sigma_p \log ( \idm + \alpha  \lambda_1 \idm)-   \tr  \Sigma_p \log ( \alpha \lambda_1 \idm)
 = \log(1+\alpha \lambda_1) - \log(\alpha \lambda_1) \leqslant \frac{1}{\alpha \lambda_1},
 \EEAS
 with a similar bound for the opposite.

 Thus, we  have the bound, with $\lambda_1 \geqslant  \lambda_0 $:
$$
\E  
A =  \big|   \tr \big[ \hat{\Sigma}_p \log \hat{\Sigma}_p \big]  -    \tr \big[  {\Sigma}_p \log  {\Sigma}_p \big]\big|
   \leqslant   \frac{1}{\alpha \lambda_1} +  \int_0^{\lambda_0} \E \big|
{\rm df}(\alpha\lambda ) 
 -  \widehat{\rm df}(\alpha\lambda ) \big| \alpha d\lambda+  \int_{\lambda_0}^{\lambda_1} \E \big|
{\rm df}(\alpha\lambda ) 
 -  \widehat{\rm df}(\alpha\lambda ) \big|\alpha d\lambda,
 $$
 where $ {\rm df}(\alpha\lambda ) = \tr \big[ \Sigma_p ( \Sigma_p + \alpha\lambda \idm)^{-1} \big]$
 and $\widehat{\rm df}(\alpha\lambda ) = \tr \big[ \hat{\Sigma}_p  ( \hat{\Sigma}_p  + \alpha\lambda \idm)^{-1} \big]$ are the usual degrees of freedom for $\Sigma_p$.
We have
\BEAS
{\rm df}(\alpha\lambda)  & \leqslant &  {\rm df}^{\max} (\alpha\lambda )  \leqslant \sup_{x \in \X}\  \langle \varphi(x),    ( \alpha\Sigma 
+ \alpha \lambda \idm)^{-1}  \varphi(x) \rangle = \frac{1}{\alpha} C(\lambda),
\EEAS
where $\ds C(\lambda) = \tr [ \Sigma ( \Sigma + \lambda \idm)^{-1} ]$. Thus, we get:
\BEAS
A
 & \leqslant & \frac{1}{\alpha \lambda_1}
 + n \alpha \lambda_0 +  \int_0^{\lambda_0}  C(\lambda) d\lambda
  + 
 \int_{\lambda_0}^{\lambda_1} \E \big|
{\rm df}(\alpha\lambda ) 
 -  \widehat{\rm df}(\alpha\lambda ) \big| \alpha d\lambda
 .\EEAS
For $\lambda \geqslant \lambda_0$, we have $ {\rm df}(\lambda) \leqslant  {\rm df}^{\max} (\lambda ) = d  \leqslant 
 \frac{1}{\alpha} C(\lambda_0)= d_0$, which we assume to be less than $n$.

Thus
$$
16 d n  
  \exp\Big( \frac{ - 9 n  }{32 d + 8} \Big)  \leqslant 16 d n  
  \exp\Big( \frac{ - 9 n  }{32 d_0 + 8} \Big),
$$
and we get the bound, using $C(\lambda) \leqslant \frac{1}{\lambda}$:
\BEAS
A
 & \leqslant &  \frac{1}{\alpha \lambda_1}
 + n \alpha \lambda_0 +  \int_0^{\lambda_0}  C(\lambda) d\lambda
+ \int_{\lambda_0}^{\lambda_1} 
 \big(
  \frac{d}{\sqrt{n}} + 4
  \frac{d^2}{n}
  + 16 d n  
  \exp\Big( \frac{ - 9 n  }{32 d_0+ 8} \Big) 
  \big) \alpha d\lambda
\\
& \leqslant &  \frac{1}{\alpha \lambda_1}
 + n \alpha \lambda_0 +  \int_0^{\lambda_0}  C(\lambda) d\lambda
  + \big( \frac{1}{\sqrt{n}} + 16  n  
  \exp\Big( \frac{ - 9 n  }{32 d_0+ 8} \Big)  \Big)
 \int_{\lambda_0}^{\lambda_1} 
 C(\lambda) d\lambda  + \frac{4}{n \alpha }  \int_{\lambda_0}^{\lambda_1} 
 C(\lambda)^2 d\lambda
.
 \EEAS

We take $\lambda_0$ such that  $d_0 = \frac{n}{\log n}$, with a bound:
\BEAS
A
& \leqslant &  \frac{1}{\alpha \lambda_1}
 + n \alpha \lambda_0 +  \int_0^{\lambda_0}  C(\lambda) d\lambda
  +   \frac{17}{\sqrt{n}} 
 \int_{\lambda_0}^{\lambda_1} 
 C(\lambda) d\lambda  + \frac{4}{n \alpha }  \int_{\lambda_0}^{\lambda_1} 
 C(\lambda)^2 d\lambda
\\
& \leqslant &  \frac{1}{\alpha \lambda_1}
 + n \alpha \lambda_0 +  \int_0^{\lambda_0}  C(\lambda) d\lambda
  +   \frac{17}{\sqrt{n}} 
 \int_{\lambda_0}^{\lambda_1} 
 C(\lambda) d\lambda  + \frac{4c }{n \alpha } .
\EEAS
The condition on $C$, implies that $C(\lambda)^2 \frac{\lambda}{2} \leqslant \int_{\lambda/2}^\lambda C(\lambda')^2 d\lambda' \leqslant c$, leading to $C(\lambda) \leqslant \sqrt{\frac{ 2c}{\lambda}}$, leading to $\frac{n}{\log n}\alpha \leqslant \sqrt{\frac{ 2c}{\lambda_0}}$, and thus $\lambda_0 \leqslant  {2c} \frac{ ( \log n)^2}{n^2 \alpha^2} $.
Then
\BEAS
A
& \leqslant &  \frac{1}{\alpha \lambda_1}
 +  {2c} \frac{ ( \log n)^2}{n  \alpha} +  {2c} \frac{ \log n }{n  \alpha}    +   \frac{17}{\sqrt{n}} 
 \int_{\lambda_0}^{\lambda_1} 
 C(\lambda) d\lambda  + \frac{4c }{n \alpha } .
\EEAS
We take $\lambda_1 = \lambda_0 + n$
so that
 $\ds\int_{\lambda_0}^{\lambda_1} 
 C(\lambda) d\lambda 
 \leqslant  \int_{0}^{\lambda_0+1} 
 C(\lambda) d\lambda + \log \frac{\lambda_0+n}{\lambda_0+1}
 \leqslant 2 \sqrt{c} \sqrt{1+\lambda_0} + \log n
 $.
Then, overall, we get:
\BEAS
A
& \leqslant &  \frac{1 + 8c ( \log n)^2}{n \alpha }
 +   \frac{17}{\sqrt{n}} \big(
 2 \sqrt{c} +2\sqrt{2} c\frac{  \log n }{n \alpha}  + \log n
 \big) \leqslant    \frac{1 +  c ( 8 \log n)^2}{n \alpha }
 +   \frac{17}{\sqrt{n}} \big(
 2 \sqrt{c}   + \log n
 \big),
 \EEAS
which is the desired bound.
\end{proof}

\subsection{Proof of Proposition~\ref{prop:proof_est_other}}
\label{app:proof_est_other}

\begin{proof}
We can combine with the proof of Proposition~\ref{prop:proof_est}, to get:
$$ A =  \tr \big[  \Pi {\Sigma}_p \Pi \log \Pi {\Sigma}_p \Pi \big] -  \tr \big[  {\Sigma}_p \log  {\Sigma}_p \big]  
=
\int_0^{+\infty}\big(  \tr \big[ \Sigma_p ( \Sigma_p +  \lambda \idm)^{-1} \big] -  \tr \big[  \Pi {\Sigma}_p  \Pi(  \Pi{\Sigma}_p  \Pi+  \lambda \idm)^{-1} \big]\big)   d\lambda
 ,$$ 
which is always non-negative  and
such that, for $\mu = \alpha \varepsilon \lambda$, if $ {\rm df}^{\max}( \alpha \varepsilon \lambda_0)  \leqslant n$:
\BEAS
A & \leqslant & 
\int_0^{\lambda_0}  \tr \big[ \Sigma_p ( \Sigma_p +  \lambda \idm)^{-1} \big]  d\lambda
+\int_{\lambda_0}^{\lambda_1}\big(  \tr \big[ \Sigma_p ( \Sigma_p +  \lambda \idm)^{-1} \big] -  \tr \big[  \Pi {\Sigma}_p  \Pi(  \Pi{\Sigma}_p  \Pi+  \lambda \idm)^{-1} \big]\big)   d\lambda
+ \frac{1}{\lambda_1}
\\
& \leqslant & 
\int_0^{\lambda_0} {\rm df}(\alpha \lambda) d\lambda
 \\
 &&\hspace*{-1cm}+\int_{\lambda_0}^{\lambda_1}
\Big[
16  {\rm df}(\alpha \lambda){\rm df}( \alpha \varepsilon \lambda) 
  \exp\Big( \frac{ - 9 n  }{32  {\rm df}^{\max}( \alpha \varepsilon \lambda_0)  + 8} \Big) 
+ {8\alpha \varepsilon \lambda }   \tr \big[  (\Sigma_q+\alpha \lambda \idm  )^{-1}\Sigma_q  (   {\Sigma}_q+\alpha \varepsilon \lambda \idm)^{-1} \big]
\Big]
 d\lambda
+ \frac{1}{\lambda_1} \\
& \leqslant & 
\frac{1}{\alpha} \int_0^{\alpha \lambda_0} {\rm df}(  \lambda) d\lambda
+
 \frac{16 }{\alpha}  \frac{n}{\beta_n}
  \exp\Big( \frac{ - 9    }{32  \beta_n^{-1} + 8/n} \Big) 
  \int_{\alpha\lambda_0}^{\alpha\lambda_1}
 {\rm df}( \lambda)d\lambda
  + \frac{1}{\lambda_1}
 \\
 &&+\int_{\alpha\lambda_0}^{\alpha\lambda_1}
  {8  \varepsilon \lambda }   \tr \big[  (\Sigma_q+  \lambda \idm  )^{-1}\Sigma_q  (   {\Sigma}_q+  \varepsilon \lambda \idm)^{-1}  
\Big]
 d\lambda
\\
& \leqslant & 
\frac{1}{\alpha} \int_0^{\alpha \lambda_0} {\rm df}(  \lambda) d\lambda
+
 \frac{16 }{\alpha}  \frac{n}{\beta_n}
  \exp\Big( \frac{ - 9    }{32  \beta_n^{-1} + 8/n} \Big) 
  \int_{0}^{\alpha\lambda_1}
 {\rm df}( \lambda)d\lambda
  + \frac{1}{\lambda_1}
 \\
 &&+\int_{0}^{\alpha\lambda_1}
  {8  \varepsilon \lambda }   \tr \big[  (\Sigma_q+  \lambda \idm  )^{-1}\Sigma_q  (   {\Sigma}_q+  \varepsilon \lambda \idm)^{-1}  
\Big]
 d\lambda, \mbox{ extending integration limits},
\EEAS
with  ${\rm df}^{\max}( \alpha \varepsilon \lambda_0) = \frac{n}{\beta_n}$.

We also have
\BEAS
\int_0^{\lambda'} {\rm df}(\lambda) d\lambda
&=&  \sum_{\sigma \in \Lambda(\Sigma_q)}
\int_0^{\lambda'}  \frac{\sigma}{\sigma+\lambda} d\lambda
= \sum_{\sigma \in \Lambda(\Sigma_q)}
   {\sigma}   \log{(1 + \lambda' \sigma^{-1} ) }  ,
\EEAS
and
\BEAS
\int_0^{\lambda'} {\varepsilon \lambda }   \tr \big[  (\Sigma_q+  \lambda \idm  )^{-1}\Sigma_q  (   {\Sigma}_q+  \varepsilon \lambda \idm)^{-1}  \big] d\lambda
&=&  \sum_{\sigma \in \Lambda(\Sigma_q)}
\int_0^{\lambda'}  \frac{\sigma}{\sigma+\lambda}  \frac{\varepsilon \lambda}{\sigma+\varepsilon\lambda} d\lambda
\\
& = & 
\sum_{\sigma \in \Lambda(\Sigma_q)}
 \frac{\varepsilon  \sigma}{1-\varepsilon} 
\int_0^{\lambda'}  \Big[ \frac{1}{\sigma+\varepsilon\lambda} - \frac{1}{\sigma+\lambda} \Big] d\lambda
\\
& = & 
\sum_{\sigma \in \Lambda(\Sigma_q)}
 \frac{\varepsilon  \sigma}{1-\varepsilon} 
\Big[    \frac{1}{\varepsilon} \log\frac{\sigma+\varepsilon\lambda'}{\sigma}  - \log\frac{\sigma+\lambda'}{\sigma}\Big] 
\\
& \leqslant & 
\sum_{\sigma \in \Lambda(\Sigma_q)}
 \frac{  \sigma}{1-\varepsilon} 
 \log{(1 + \lambda' \varepsilon \sigma^{-1} ) }   = \frac{   1}{1-\varepsilon} 
 \int_0^{\varepsilon \lambda'} {\rm df}(\lambda) d\lambda.
\EEAS
Thus, with  $\varepsilon = \frac{1}{\alpha\lambda_0} {\rm df}^{-1} ( n / \beta_n)$,
\BEAS
A & \leqslant & 
\frac{1}{\alpha} \int_0^{\alpha \lambda_0} {\rm df}(  \lambda) d\lambda
+ 
 \frac{16 }{\alpha}  \frac{n}{\beta_n}
  \exp\Big( \frac{ - 9    }{32  \beta_n^{-1} + 8/n} \Big)  
  \int_{0}^{\alpha\lambda_1}
 {\rm df}( \lambda)d\lambda
 + \frac{8}{1-\varepsilon}  \int_{0}^{\varepsilon\alpha\lambda_1}
  {\rm df}( \lambda)d\lambda
  + \frac{1}{\lambda_1}.
\EEAS

\paragraph{Polynomial decays.}
If ${\rm df}(\lambda) \leqslant B \lambda^{-1/s}$, then we have:
$ \ds
\frac{n}{\beta_n} \leqslant B ( \alpha \lambda_o \varepsilon)^{-1/s},
$
and thus
$\varepsilon \leqslant \frac{1}{\alpha\lambda_0} \Big( \frac{B \beta_n}{n} \Big)^{s}$.
Moreover, 
$\ds \int_0^{\alpha \lambda_0} {\rm df}(  \lambda) d\lambda \leqslant \frac{B}{1-1/s} (\alpha \lambda_0)^{1-1/s}$.
Thus, with $\varepsilon < 1/2$,
\BEAS
A & \leqslant & 
\frac{B\alpha^{-1/s}}{1-1/s} ( \lambda_0)^{1-1/s}
+ 
 \frac{16 }{\alpha}  \frac{n}{\beta_n}
  \exp\Big( \frac{ - 9    }{32  \beta_n^{-1} + 8/n} \Big)  
\frac{B}{1-1/s} (\alpha \lambda_1)^{1-1/s}
 + \frac{8}{1-\varepsilon} \frac{B}{1-1/s} (\varepsilon \alpha \lambda_1)^{1-1/s}
  + \frac{1}{\lambda_1}.
  \EEAS
  Now turning to asymptotic expansions, we get
  \BEAS
  A
  & = & 
O \Big(B   \lambda_0^{1-1/s}
+ 
  \frac{n}{\beta_n}
  \exp(- \beta_n/4)  
B \lambda_1^{1-1/s}
 +  B ( \lambda_1 /\lambda_0)^{1-1/s}
 \Big( \frac{B \beta_n}{n} \Big)^{s-1}
  + \frac{1}{\lambda_1} \Big).
\EEAS
We can take $\beta_n \propto \log n$ so that the term $ \exp(- \beta_n/4)  $ is negligible. 

We can choose $\lambda_0$ such that $B   \lambda_0^{1-1/s} \sim B ( \lambda_1 /\lambda_0)^{1-1/s}
 \Big( \frac{B \log n }{n} \Big)^{s-1} $, that is, $\lambda_0 \propto B^{s/2} \lambda_1^{1/2}  \Big( \frac{ \log n}{n} \Big)^{s/2}$, and then
$\lambda_1^{(3s-1)/(2s)} B^{(s+1)/2}  \Big( \frac{  \log n}{n} \Big)^{(s-1)/2}$, finally leading to
$\ds
A = O\Big(  B^{s(s+1)/(3s-1)} \Big( \frac{  \log n}{n} \Big)^{s(s-1)/(3s-1)} \Big).
$

\paragraph{Exponential decays.}
If ${\rm df}(\lambda) \leqslant B \log( C / \lambda) ^d $, then we have:
$\ds
\frac{n}{\beta_n} \
\leqslant   B \log( C / (\varepsilon \alpha \lambda_0)) ^d ,
$
and thus $\ds \varepsilon \leqslant \frac{C}{\lambda_0 \alpha} \exp( - ( B n / \beta_n)^{1/d} )$.
Moreover
$\ds \int_0^{\alpha \lambda_0} {\rm df}(  \lambda) d\lambda \sim \alpha\lambda_0 B 
\log( C / \lambda_0 \alpha) ^d$.
Thus, asymptotically,
\BEAS
A & =  & O \Big(
\lambda_0 B 
\log( C / \lambda_0  ) ^d
+ 
    \frac{n}{\beta_n}
  \exp (- \beta_n / 4  )  
 \lambda_1 B 
 \big( 
\log( C / \lambda_1   ) \big) ^d
 +  \lambda_1 \varepsilon   B 
\log( C /  \varepsilon\lambda_1  ) ^d
  + \frac{1}{\lambda_1} \Big).
\EEAS
With $\lambda_0 \propto  C^{1/2} \lambda_1^{1/2} \exp( - \frac{1}{2} (Bn/\beta_n)^{1/d})$, 
the first and third terms lead to
$$
B C^{1/2} \lambda_1^{1/2}  \exp( - \frac{1}{2} (Bn/\beta_n)^{1/d})
 \log( C / \lambda_0  ) ^d  
 = B C^{1/2} \lambda_1^{1/2}  \exp( - \frac{1}{2} (Bn/\beta_n)^{1/d})
 \Big( \log C + \log \frac{1}{\lambda_1} +   (Bn/\beta_n)^{1/d} ) ^d .  $$
With 
$\lambda_1\propto  \frac{1}{B} \exp(\beta_n/8)$, 
the second term is of order
$\ds
    \frac{n}{\beta_n}
  \exp (- \beta_n / 8  )  
 \big( 
\log C + \log \frac{1}{\lambda_1}   ) \big) ^d
$.
We can then choose $\beta_n  \propto (Bn)^{1/(d+1)}$, to get  the overall bound
$ \ds
A =  \big( \log ( B C ) \big)^d ( \sqrt{BC} + B ) \exp\big( - \frac{1}{4} (Bn)^{1/(d+1)}\big) .
$
\end{proof}

\subsection{Computation of degrees of freedom}
\label{app:df}

We have, when $\sigma$ tends to zero, with $\ds \nu = \frac{\lambda}{\hat{k}(0)}=\frac{\lambda}{ \tanh^d \frac{\sigma}{2} }$,
\BEAS
\sum_{\omega \in \Z^d} \frac{1}{1+ \nu\exp(  \sigma \| \omega\|_1)}
& = & \sum_{k \geqslant 0}
\frac{1}{1+ \nu\exp(  \sigma k)}
\big| \{ \omega \in \Z^d, \ \| \omega\|_1 = k\} \big|
\leqslant \sum_{k \geqslant 0}
\frac{1}{1+ \nu\exp(  \sigma k)}
2^d k^{d-1}
\\
& \leqslant & \sum_{k \geqslant \frac{1}{\sigma} \log \frac{1}{\nu}}
\frac{1}{ \nu\exp(  \sigma k)}
2^d k^{d-1}
+ \sum_{k \leqslant \frac{1}{\sigma} \log \frac{1}{\nu}}
2^d k^{d-1}
\\
& \leqslant &  
2^d  \frac{ (d-1)!   }{ (1-e^{-\sigma})^{-d}} \Big[
1 + \frac{1}{2} \Big( \log \frac{1}{\nu} \Big)^{d-1} 
\big]
+  
2^d \Big( \frac{1}{\sigma} \log \frac{1}{\nu} \Big)^{d} \\
& \leqslant & 
2^d  \frac{ d!  }{ (1-e^{-\sigma})^{-d}} \Big[
1 +   \Big( \log \frac{1}{\nu} \Big)^{d} 
\big]
\\
& \leqslant & 
e^{-\sigma d / 2}  \frac{ d!  }{ \sinh^d(\sigma/2)} \Big[
1 +   \Big( \log \frac{1}{\lambda} + d \log \tanh  \frac{\sigma}{2} \Big)^{d} 
\big]
,
\EEAS
where we have used that, for $\nu \geqslant e$,
$$
( 1 - e^{-\sigma}) 
 \sum_{k \geqslant \frac{1}{\sigma} \log \frac{1}{\nu}}
 \exp(  - \sigma k )
k^{\alpha} \leqslant \big( \frac{1}{\sigma} \log \frac{1}{\nu} \big) ^\alpha  
\nu + \alpha  \sum_{k \geqslant \frac{1}{\sigma} \log \frac{1}{\nu}}
 \exp(  - \sigma k )
k^{\alpha-1} 
$$
and
$ \ds
  \sum_{k \geqslant \frac{1}{\sigma} \log \frac{1}{\nu}}
 \exp(  - \sigma k ) \leqslant \frac{\nu}{1-e^{-\sigma}},
$
which leads to, with $u_\alpha =  \sum_{k \geqslant \frac{1}{\sigma} \log \frac{1}{\nu}}
 \exp(  - \sigma k )
k^{\alpha} $:
$$\frac{1}{\alpha!} (1-e^{-\sigma})^{\alpha+1}u_\alpha \leqslant
\frac{1}{\alpha!} (1-e^{-\sigma})^{\alpha-1} \nu \Big( \log \frac{1}{\nu} \Big)^\alpha 
+ \frac{1}{(\alpha-1)!} (1-e^{-\sigma})^{\alpha}u_{\alpha-1},
$$ 
and thus
$u_\alpha \leqslant \frac{ \alpha! \nu }{ (1-e^{-\sigma})^{-\alpha-1}} \Big[
1 + \frac{1}{2} \Big( \log \frac{1}{\nu} \Big)^\alpha 
\big]$.

  In order to bound the constant $c$ in Prop.~\ref{prop:proof_est}, we can split the integral into. We  have for  $\beta =  \tanh^d   \frac{\sigma}{2}> 0$,
\BEAS
c &  \leqslant &  \int_\beta^{\infty} \frac{1}{\lambda^2}d\lambda  + \int_0^\beta
\ds e^{-\sigma d }  \frac{ d!^2  }{ \sinh^{2d}(\sigma/2)} \Big[
2 +   2 \Big( \log \frac{ \tanh^d   \frac{\sigma}{2}}{\lambda}   \Big)^{2d} 
\Big] d\lambda \\
& \leqslant & \frac{1}{\beta} 
+ \beta \int_0^1
\ds e^{-\sigma d }  \frac{ d!^2  }{ \sinh^{2d}(\sigma/2)} \Big[
2 +   2 \Big( \log \frac{1}{\lambda}   \Big)^{2d} 
\Big] d\lambda = O(\sigma^{-d})
\EEAS
when $\sigma$ goes to zero.
 
\bibliography{quantum}

  \end{document}